\renewcommand{\baselinestretch}{0.96}
\definecolor{red}{rgb}{0,0,0}
\begin{document}
%

\title{\textit{RT-ByzCast}: Byzantine-Resilient Real-Time Reliable Broadcast} 


\author{\IEEEauthorblockN{David Kozhaya$^{1}$, \textcolor{red}{J\'er\'emie Decouchant$^{2}$} and
Paulo Esteves-Verissimo$^{2}$,~\IEEEmembership{Fellow,~IEEE,~ACM}}
\IEEEauthorblockA{\textcolor{red}{$^{1}$ABB Corporate Research, Baden-Dattwil, Switzerland}\\ $^{2}$Interdisciplinary Centre for Security, Reliability and Trust (SnT), University of Luxembourg, Luxembourg}
}

%



\IEEEtitleabstractindextext{%
\begin{abstract}
\textcolor{red}{Today's cyber-physical systems face various impediments to
achieving their intended goals, 
namely, communication
uncertainties and faults, relative to the increased
integration of networked and wireless devices, hinder the
synchronism needed to meet real-time deadlines. Moreover,
being critical, these systems are also exposed to significant
security threats. This threat combination increases the risk of physical
damage.}
This paper addresses these problems by studying how to build the first real-time Byzantine reliable broadcast protocol (RTBRB) tolerating network uncertainties, faults, and attacks. Previous literature describes either real-time reliable broadcast protocols, or asynchronous (non real-time) Byzantine~ones.

We first prove that it is impossible to implement RTBRB using traditional distributed computing paradigms, e.g., where the error/failure detection mechanisms of processes are decoupled from the broadcast algorithm itself, even with the help of the most powerful failure detectors. We circumvent this impossibility by proposing \textit{RT-ByzCast}, an algorithm based on aggregating digital signatures in a sliding time-window and on empowering processes with self-crashing capabilities to mask and bound losses. We show that \textit{RT-ByzCast} (i) operates in real-time by proving that messages broadcast by correct processes are delivered within a known bounded delay, and (ii) is reliable by demonstrating that correct processes using our algorithm crash themselves with a negligible probability, even with message loss rates as high as~$60\%$. 
\end{abstract}

\begin{IEEEkeywords}
real-time distributed systems; probabilistic losses; reliable broadcast; byzantine behavior; intrusion tolerant;
\end{IEEEkeywords}}

\maketitle

\IEEEdisplaynontitleabstractindextext

%
\IEEEpeerreviewmaketitle

	\newtheorem{theorem}{Theorem}
\newtheorem{corollary}{Corollary}
\newtheorem{assumption}{Assumption}
\newtheorem{lemma}{Lemma}
\newtheorem{definition}{Definition}
\newtheorem{conjecture}{Conjecture}
\newtheorem{observation}{Observation}
\newtheorem{remark}{Remark}
\newtheorem{conclusion}{Conclusion}
\newcommand{\todo}[1]{\color{green} TODO: {#1}\color{black}}
\definecolor{orange}{rgb}{1,0.5,0}

\section{Introduction}\label{intro}
Many of today's physical structures are governed by automated operations that are typically conducted and controlled \textcolor{red}{by multiple sensing, computing, and communication devices. Generally, such physical structures including their distributed control are known as \textit{cyber-physical systems} (CPS)~\cite{controlarea}.} 
In this paper, we address two problems relevant to modern CPS, found for example in large continuous process plants, manufacturing shop-floors, or power grid~installations.

The increasing integration of sensors and actuators, networked and often wireless~\cite{endtoendreal-time,controlloss1,eors,D1.1}, \textcolor{red}{as well as their scale, introduce uncertainties and faults in communication, which hamper the necessary synchronism to meet real-time deadlines, fundamental for various CPS applications~\cite{controlloss1,eors,D1.1,netfail}.} 

Worse, most of those processes are in essence critical infrastructures. Hence, they are likely to be \textcolor{red}{targeted by motivated attackers~\cite{sensorsecurity,sensorsecurity1}, to impact the timeliness of communications --- for example by maliciously worsening the impact of the above mentioned uncertainties and faults --- and the sheer correctness of messages.} These two facets create a completely new scenario, \textcolor{red}{where the traditional approach to building real-time communication no longer works.} The consequences of system failure can range from loss of availability to physical~damage.

\textcolor{red}{In this paper, we simultaneously address the three threats --- uncertainty, faults, and attacks --- introducing \textit{RT-ByzCast}, to the best of our knowledge the first Byzantine-resilient real-time reliable broadcast protocol.} Previous literature on fault-tolerance contains either real-time reliable broadcast protocols~\cite{jiter,yair1,endtoendreal-time,real-timemonitor,real-timecontrol}, or asynchronous (non real-time) Byzantine ones~\cite{Bracha:1985,Bracha:1987,raynal2015,Schneider:1984:BGA:190.357399,Dolev:1981:BGS:891722}. On the one hand, real-time reliable broadcast protocols cannot tolerate attacks: such algorithms would fail, even if a single process is compromised. \textcolor{red}{Using naive platform restarts or cryptography alone is insufficient}, and hence does not solve the problem~\cite{Guerraoui:2006:IRD:1137759,restart}. On the other hand, existing asynchronous Byzantine solutions do not support real-time. \textcolor{red}{Yet, critical CPS applications must have information disseminated correctly, reliably and in real-time, e.g., as in power system control applications that need} to open and close multiple distributed circuit breakers based on thresholds and sensor data~\cite{Dav}.


We approach the aforementioned threats first by formally specifying the needed broadcast properties via an abstraction, which we call \textbf{R}eal-\textbf{T}ime \textbf{B}yzantine \textbf{R}eliable \textbf{B}roadcast (RTBRB). We then prove that in systems where processes can be Byzantine and communication can simultaneously fail, it is impossible to implement the RTBRB abstraction using traditional distributed computing assumptions and techniques, e.g., where \textcolor{red}{the 
error/failure detection mechanisms of processes are oblivious of the broadcast algorithm. We also prove that this holds even with the help of sophisticated failure detectors that perfectly reveal the identities of all faulty and untimely~processes.}

We propose an algorithm, which we call \textit{RT-ByzCast}, \textcolor{red}{which} can cope with intermediate timing violations and uncertainties while still providing a reliable end result in real-time. We specifically prove that our algorithm indeed implements all properties of the RTBRB abstraction. The main idea underlying our \textit{RT-ByzCast} algorithm is the use of a temporal and spatial diffusion mechanism of signed messages over a sliding time-window. This diffusion mechanism is augmented with a scheme to aggregate signatures corresponding to those messages. Hence, the time-window constitutes a slack. During that slack, the aggregation and diffusion mechanism tries to mask unanticipated violations, e.g., those caused by communication losses, by having processes repeatedly aggregate signatures relative to the same message and disseminate that message again (temporal diffusion) and to other processes (spatial~diffusion).


In order to ensure safety, when losses cannot be masked in time using our aggregation and diffusion scheme, our algorithm triggers processes suffering such losses to crash themselves. Doing so, \textit{RT-ByzCast} bounds the amount of time a correct process can lose communication with a quorum of the network. \textcolor{red}{This self-crashing capability allows our algorithm to circumvent our impossibility result affecting traditional distributed assumptions and techniques: we use process self-crashing to make sure that the view of correct processes is consistent between the error/failure detection mechanisms and our broadcast algorithm. \textit{RT-ByzCast} can affect which processes are correct and which are not can revive processes that have self-crashed themselves}.
Our \textit{RT-ByzCast} algorithm relies on quorums. The reader might wonder that our principle, albeit safe, may lead the system to a complete shutdown, by having ``too many" processes 
crash themselves. To this end, we perform a thorough simulation of the probability of a process crashing itself in our \textit{RT-ByzCast} algorithm. We demonstrate that even for high loss rates (60$\%$ loss rate), the probability that a process crashes itself in \textit{RT-ByzCast} is negligible for systems with more than 5 processes. We show as well that as the system size increases, the probability of a process crashing itself becomes asymptotic to 0 for any loss rate; hence the system liveness is almost always guaranteed. 
This matches the typical applications of the target CPS systems we envisage, e.g., the smart grid, where computing sensing and communication devices are deployed in large scale. Moreover, we devise a method that allows systems using our algorithm to tolerate any number of correct processes crashing themselves without shutting down. \textcolor{red}{Our method is based on a scheme for detecting correct processes that crash themselves and over provisioning the number of processes (replicas)~used.} 

\textcolor{red}{Another concern of the reader might arise regarding the expected performance of our algorithm. We show that \textit{RT-ByzCast} can meet the timing constraints of a large class of typical CPS applications, mainly in SCADA and IoT areas: power system automation and substation automation applications (time constants $\leq 100$~ms); slow speed auto-control functions ($\leq 500$~ms); continuous control applications ($\leq 1$~s); or (iii) operator commands of SCADA applications~($\leq 2$ s).}




In summary, the main contributions of this paper are:
\begin{enumerate}
	\item Theoretical proofs showing that when processes and links can fail, it is impossible to implement RTBRB under traditional distributed computing paradigms that consider processes' correctness checks as oblivious of the broadcast algorithm. We prove that the impossibility holds even when the identity of all faulty and unresponsive processes is known.
	\item \textit{RT-ByzCast}, an algorithm that implements the RTBRB abstraction by aggregating and diffusing digital signatures in a sliding time-window to mask individual communication losses. When messages cannot be masked in time, \textit{RT-ByzCast} forces processes experiencing those losses to crash themselves.
	\item \textcolor{red}{A thorough simulation of the reliability and availability  of our \textit{RT-ByzCast} algorithm under many communication loss rates and system sizes.}
	\item \textcolor{red}{A performance evaluation showing \textit{RT-ByzCast}'s latency and network overhead.}
	\item Ways to adapt \textit{RT-ByzCast} \textcolor{red}{to handle churn.} 
\end{enumerate}

The rest of the paper is organized as follows. Section~\ref{Sysmodel} details the system model. Section~\ref{RTBRB} defines the real-time Byzantine reliable broadcast (RTBRB) abstraction. Section~\ref{feasibility of implementing RTBRB} studies the feasibility of implementing RTBRB under traditional distributed computing assumptions. Section~\ref{implementing RTBRB} presents the intuitions and details of our \textit{RT-ByzCast} algorithm. Section~\ref{Evaluation} demonstrates the reliability nad performance of our \textit{RT-ByzCast} algorithm by (i) evaluating the probability of a process crashing itself, (ii) computing the probability that the system shuts down, (iii) showing how to allow a system to tolerate the crash of correct processes without causing system shutdown, \textcolor{red}{and (iv) showing the broadcast delivery latency under different system sizes and message loss rates. Section~\ref{Dead-State Revival} discusses how to revive self-crashed processes.} Section~\ref{dynamic system} illustrates and details how our \textit{RT-ByzCast} algorithm can be adapted to handle dynamic systems. Section~\ref{RT-ByzCast applicability} discusses \textit{RT-ByzCast} application domains and systems. Finally Section~\ref{related work} and Section~\ref{conclusion} discuss existing related work and conclude the paper respectively. \textcolor{red}{For presentation purposes, we defer proofs and additional evaluations to a companion dedicated~appendix.}

\section{System and Threat Model}\label{Sysmodel}

\subsection{System Model}\label{system model}
\paragraph{Processes}
We consider a distributed system 
consisting of a set of $n>~1$ processes, denoted by $\mathit{\Pi}=\{p_1, p_2, ..., p_n\}$. Processes are synchronous, i.e., the delay for performing a local step has a fixed known bound, assumed to be negligible compared to communication delays. 

Processes have access to local clocks. For presentation simplicity, we assume that these clocks are synchronized, with a bounded skew. Using these clocks, processes define \emph{synchronous rounds} of the same fixed duration (time-triggered). \textcolor{red}{Rounds are synchronized among all processes, i.e., the start and end of a round occur at all hosts at the same time (with a bounded~skew). For a particular round, clocks can trigger an initialization signal at all processes, upon which processes are awakened and initialized as part of the system.} 

\paragraph{Communication} 
Every pair of processes is connected by two logical uni-directional links. Precisely, processes $p_i$ and $p_j$ are connected by links $l_{ij}$ and $l_{ji}$. Links can abstract a physical bus or a dedicated network \textcolor{red}{link/path}. 

\textcolor{red}{We assume that links are reliable and within the maximum delay with high probability.} This means that in any transmission attempt, where a message is sent over a link, there is a high probability that the message reaches its destination within a maximum delay $d$ after being transmitted. However, there is a small probability that reliability and timeliness are violated. Such violations exist in networks, as arguably all communication is prone to random disturbances, e.g., bad channel quality, interference, collisions, and \textcolor{red}{buffer overflows}~\cite{eors}. 

We consider both message losses and delays as omissions. That is, late messages (violating the $d$ delay assumption) are simply dropped (ignored). \textcolor{red}{This way we treat \textit{timing faults}~\cite{Carlos-almeida1998using-light-weight-31} as omissions. We define the duration of any synchronous round to be $d$}, the upper bound on a reliable and timely transmission. As such, any message sent at the beginning of round $r$, if not omitted, is assumed to be received by the end of round $r$. We consider that processes always send their messages at the beginning of a round. \textcolor{red}{In our model, it is sufficient that delay $d$ is only known by the local clocks that define rounds and not processes. Processes in some round $r$ consider late any received message with a round number $<r$.}




\subsection{Threat Model} \label{threat model}
\paragraph{Clocks}
\textcolor{red}{We assume that the local synchronized clocks of non-Byzantine nodes are secure and hence cannot be attacked. Previous work showed that such secure and synchronized clocks can be built in similar environments} using trusted components~\cite{tcb} or GPS~\cite{clock}. 

\paragraph{Processes}
We assume that some processes can exhibit arbitrary, a.k.a. \textit{Byzantine}, behavior. Byzantine nodes can abstract processes that have been compromised by attackers, or are executing the algorithm incorrectly, e.g.,  as a result of some fault (software or hardware). A Byzantine process can behave arbitrarily, e.g., it may crash, fail to send or receive messages, delay messages, send arbitrary~messages,~etc. 

We recall that in every transmission attempt a link may (with some probability) violate reliability and timeliness by dropping the message or delivering it within a delay $>d$. In both cases (dropped and delayed), that message is omitted; hence a sender needs to re-transmit that message again and face yet another risk of transmission failure. Due to omissions (losses and delays in consecutive  transmission attempts) and the required follow-up re-transmissions, the time it takes to send a message reliably from one process to another (measured from the time of the first transmission attempt) may be unbounded. So, despite links being reliable and timely with high probability, our communication system is no longer synchronous. Our system (not being synchronous) can tolerate at maximum $f=\lfloor\frac{n-1}{3}\rfloor$ Byzantine processes. Such an $f$ is proved to be the maximum number of Byzantine processes that an asynchronous system with $n$ processes can tolerate to implement any form of agreement~\cite{Dolev:1981:BGS:891722,TC1}. \textcolor{red}{$f$ is known to the processes.}


\textcolor{red}{A process that exhibits a Byzantine behavior is termed \textit{faulty}. Otherwise the process does not deviate from the specification of the algorithm and is said to be \textit{non-Byzantine}. If the algorithm specifies some non-Byzantine processes to crash themselves within the algorithm's execution, then such crashed processes are termed faulty as well (excluded by the algorithm). The rest of the processes (the non-faulty ones) are termed as \textit{correct}. More formally, consider an algorithm $\mathcal{A}$ that begins execution at some global time $t$ for a period~$\Delta T$.
\begin{definition}
	All non-Byzantine processes that do not crash themselves, in $[t,t+\Delta T]$ are correct wr.t.  $\mathcal{A}$. All other processes (Byzantine and non-Byzantine that crashed themselves) are faulty.
\end{definition}}
 

\paragraph{Communication}
Links are assumed to be faithful and authenticated, i.e., a link does not alter the content of messages. Hence, Byzantine processes cannot modify messages sent on a link connecting correct processes. Recall however, that links can, with a small probability, violate reliability and timeliness. A message transmitted on link $l_{ij}$, $\forall i\neq j$, at any time $t$ has probability $0<\epsilon_{1}<P^{o}_{ij}(t)<\epsilon_{2}<<1$ of getting lost or delayed. 

In practice, losses and delays can be correlated and may usually come in bursts, i.e., periods where messages are consecutively lost/delayed. \textcolor{red}{Burst lengths can be well anticipated (with high coverage)\footnote{\textcolor{red}{For example in control systems it is very unlikely that a network link has more than two consecutive message losses. We refer readers to studies like~\cite{gm5,yair1} that elaborate on how to estimate link burstiness.}}.} Accordingly, we denote by $k$ the maximum burst lengths (in rounds) that are expected to be observed in our network (a.k.a. omission degree). Note that this does not mean the bursts of length $>k$ rounds will not occur, it rather indicates that they seldom do and the system should still account for them. Namely, a link $l_{ij}$, $\forall i\neq j$, at any time $t$ has probability $P^{k}_{ij}(t)$ of experiencing bursts of length~$k$. $P^{k}_{ij}(t)$ is essentially a function of $P^{o}_{ij}(t)$, i.e., $P^{k}_{ij}(t)=\mathcal{F}(P^{o}_{ij}(t))$.  

\section{Real-Time Byzantine Reliable Broadcast}\label{RTBRB}

We now define the properties of the  real-time Byzantine reliable broadcast (RTBRB). Roughly speaking, RTBRB is a communication primitive that allows information to be disseminated reliably and timely in a one-to-all manner.  

For simplicity, we provide the definition of RTBRB as an abstraction that allows at most one message to be delivered/broadcast. 
This definition can be easily extended to allow processes to broadcast multiple messages, e.g., see~\cite{Guerraoui:2006:IRD:1137759,raynal2015} that resolve this using message identifiers. 
\begin{enumerate}
	\item \textbf{\textit{RTBRB-Validity:}} If a correct process broadcasts $m$, then some correct process eventually delivers $m$.
	\item \textbf{\textit{RTBRB-No duplication:}} \textcolor{red}{Every correct process delivers at most once a message.}
	\item \textbf{\textit{RTBRB-Integrity:}} If some correct process delivers a message $m$ with sender $p_i$ and process $p_i$ is correct, then $m$ was previously broadcast by $p_i$.
	\item \textbf{\textit{RTBRB-Agreement:}} If some message $m$ is delivered by any correct process, then every correct process eventually delivers $m$.
	\item \textbf{\textit{RTBRB-Timeliness:}} There exists a known $\Delta$ such that if a correct process broadcasts $m$ at real-time $t$, no correct process delivers $m$ after real~time~$t+~\Delta$.
\end{enumerate}

RTBRB provides processes with two operations, namely RTBRB-broadcast() and RTBRB-deliver(). A process broadcasts a message by invoking RTBRB-broadcast(). Similarly, a process delivers a message by invoking RTBRB-deliver().

We next compare RTBRB properties to those of existing broadcast abstractions highlighting the differences. 

\vspace*{-10pt}\subsection*{RTBRB Versus Existing Broadcast Primitives}\label{comp with exsiting broadcast}
\paragraph{Byzantine Reliable Broadcast}\label{ByzRB} 
An asynchronous  Byzantine reliable broadcast~\cite{Bracha:1985,Bracha:1987} guarantees only a subset of the properties of RTBRB (all except RTBRB-Timeliness). In this sense, as opposed to asynchronous Byzantine reliable broadcast which delivers messages only eventually (i.e., with no predictability measures) our RTBRB primitive provides a timeliness bound: a message sent by correct processes is delivered within a known fixed duration after being~broadcast.
\paragraph{Real-Time Reliable Broadcast}\label{Timely RB}
Synchronous (or real-time) reliable broadcasts~\cite{AMP,yair1} provide known fixed bounds on delivering broadcast messages. However they only handle \textit{fail-silent} process failures and hence cannot tolerate maliciousness.
\paragraph{Atomic Broadcast}\label{atomicast}
Unlike liveness in the context of (asynchronous) reliable broadcast, the RTBRB-Timeliness property (a safety property) introduces a scent of (physical) ordering, since it stipulates, for each execution, a termination event to occur ``at or before'' some $\Delta$ on the time-line. This said, one may wonder to what extent does this go into reaching a linear ordering, and thus in the direction of atomic broadcast. In RTBRB, the interleaving of broadcasts from multiple senders, namely when multiple broadcasts are issued in a period shorter than $\Delta-d$, might result in messages \textcolor{red}{delivered to different processes in a different order. RTBRB-Timeliness ensures that a message $m$ is delivered at some time in $[d,\Delta]$ after the broadcast}.

\section{Feasibility of Implementing RTBRB}\label{feasibility of implementing RTBRB}

We now study the feasibility of implementing RTBRB using traditional distributed assumptions and techniques. 

\vspace*{-10pt}\subsection{RTBRB Using Failure Detectors}
Traditionally, algorithms are designed to guarantee properties without affecting the correctness of processes. However, algorithms may rely on components, e.g., failure detectors (FDs)~\cite{Chandra:unreliableFD}, which are external software blocks that provide hints about which processes are faulty. Failure detectors encapsulate synchrony assumptions. \textcolor{red}{Thus, when added to} systems with timing uncertainties (i.e., asynchronous), failure detectors make it possible to devise modular algorithms that solve, despite network uncertainties, difficult distributed computing problems, e.g., consensus~\cite{Lamport:1998}.

In this section, we prove impossibility results showing that RTBRB cannot be implemented using traditional modular schemes relying on external software blocks, even when these blocks are the ``most powerful'' ones, such as perfect crash failure detectors, e.g.,~\cite{Chandra:unreliableFD}. \textcolor{red}{These impossibilities indicate that monitoring untimely processes in systems supported by real-time broadcast networks with timing uncertainties (timing and omission faults), should be done within the algorithm implementing RTBRB rather than externally. In short this means going against the modular failure detectors trend initiated
with~\cite{Chandra:unreliableFD} and being more in line with the model of~\cite{AMP}, hence featuring failure detectors and membership integrated with the broadcast protocol.}

We give the intuition of why this is surprisingly so, and present proofs of the
results. In this sense, we help establish, in this section, a clear
understanding of the design constraints that should be
followed/avoided when devising algorithms for distributed real-time
systems exhibiting timing uncertainties, before delving into the system design.
We first define a few essential~terms.

\begin{definition}
A component is \textbf{oblivious of an algorithm}  $\mathcal{A}$ if that component does not adapt its behavior to the decisions and actions of $\mathcal{A}$. Such a component does not take any input from $\mathcal{A}$ and hence any changes to $\mathcal{A}$ result in no impact on the behavior of that component. Moreover, messages relative to that component are assumed to be independent of $\mathcal{A}$'s messages (i.e., messages are sent separately over the network).
\end{definition}

\begin{definition}
A Perfect Crash Failure Detector ($\mathcal{P}$)~\cite{Chandra:unreliableFD} guarantees:
\begin{enumerate}
	\item \textbf{Strong completeness:} Eventually every faulty process is permanently suspected by all correct processes.
	\item  \textbf{Strong accuracy:} A correct process is never suspected. 
\end{enumerate} 
\end{definition}
\begin{assumption}\label{fdob}
	Detector $\mathcal{P}$ is oblivious of algorithm~$\mathcal{A}$ that implements RTBRB.
\end{assumption}
\begin{assumption}\label{definition A uses FD}
Algorithm $\mathcal{A}$ uses $\mathcal{P}$. 
\end{assumption}

\begin{assumption}\label{assum. no additional comp} Algorithm $\mathcal{A}$ does not use any additional hardware or software components besides $\mathcal{P}$ and the network.
\end{assumption}

\begin{theorem}\label{theorem imp using FDs}
No algorithm $\mathcal{A}$ can implement the RTBRB abstraction under Assumptions~\ref{fdob},~\ref{definition A uses FD} and~\ref{assum. no additional comp}.
\end{theorem}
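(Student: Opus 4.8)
The plan is to argue by contradiction via an adversarial schedule that exploits the independence, guaranteed by obliviousness, between the channels used by $\mathcal{P}$ and those used by $\mathcal{A}$. First I would assume $\mathcal{A}$ implements RTBRB under Assumptions~\ref{fdob}--\ref{assum. no additional comp}, fix a correct sender $p_s$ that RTBRB-broadcasts $m$ at real time $t$, and observe that RTBRB-Validity, RTBRB-Agreement and RTBRB-Timeliness together force \emph{every} correct process to RTBRB-deliver $m$ no later than $t+\Delta$. The strategy is then to produce a single \emph{admissible} execution in which one non-Byzantine process $p_i$ is certified correct by $\mathcal{P}$ yet is starved of all information about $m$ until after the deadline.

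Next I would build that execution. I keep $p_s$ and a quorum of processes mutually well-connected, so that by RTBRB-Validity some correct process does deliver $m$ (thereby arming RTBRB-Agreement). Simultaneously I let every incoming $\mathcal{A}$-link of $p_i$ suffer an omission burst spanning the entire interval $[t,t+\Delta]$, i.e.\ roughly $\lceil \Delta/d\rceil$ consecutive lossy rounds. Such a burst has positive probability: the model admits bursts longer than $k$ (rare but possible), and each transmission fails with probability at least $\epsilon_{1}>0$. Crucially, by Assumption~\ref{fdob} the detector $\mathcal{P}$ is oblivious of $\mathcal{A}$ and sends its heartbeats on \emph{separate} transmissions, so I may \emph{independently} let $\mathcal{P}$'s heartbeat traffic concerning $p_i$ succeed; this is precisely where obliviousness is used. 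In the resulting run $p_i$ has not crashed, so by strong accuracy $\mathcal{P}$ never suspects it, and the only external oracle available to $\mathcal{A}$---$\mathcal{P}$ by Assumption~\ref{definition A uses FD}, and nothing else by Assumption~\ref{assum. no additional comp}---reports $p_i$ as correct.

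I would then derive the contradiction. Since $m$ is broadcast only at $t$, the pre-$t$ local state of $p_i$ carries no information about $m$, and throughout $[t,t+\Delta]$ it receives no $\mathcal{A}$-message whatsoever; hence $p_i$ cannot RTBRB-deliver $m$ by the deadline. That it cannot deliver $m$ at all is seen by comparing with a twin run in which nothing is ever broadcast but $p_i$'s local view is identical---delivering $m$ there, with its correct sender $p_s$ who never broadcast it, would violate RTBRB-Integrity. Thus the correct process $p_i$ misses $t+\Delta$, contradicting RTBRB-Timeliness together with RTBRB-Agreement.

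Finally I must close the escape route in which $\mathcal{A}$ avoids the dilemma by excluding $p_i$ rather than delivering to it. Under Assumptions~\ref{fdob}--\ref{assum. no additional comp} the algorithm owns no component able to remove a process (the capability of an oblivious membership that can crash processes is ruled out separately by the more general impossibility of the next subsection), and even a bare self-halt can depend only on $p_i$'s local view together with $\mathcal{P}$'s empty suspicion list---data identical to those of a perfectly healthy idle run; hence any rule that crashes $p_i$ here also crashes a genuinely correct process, and applying it independently to each process produces, with positive probability, a run with no surviving correct process, violating RTBRB-Validity. Either branch contradicts the assumption that $\mathcal{A}$ implements RTBRB. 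I expect the main obstacle to be the construction itself: making the two runs provably indistinguishable to both $p_i$ and $\mathcal{P}$, and justifying that the long directed omission burst is an admissible (positive-probability) schedule under the model---this is exactly where Assumption~\ref{fdob} (independent channels) and Assumption~\ref{assum. no additional comp} (no side channel that could leak $m$ to $p_i$) do the real work.
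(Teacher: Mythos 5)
Your core construction is essentially the paper's own proof. The paper proves Theorem~\ref{theorem imp using FDs} via the more general Theorem~\ref{theorem imp oblivious monitoring} (Appendix~\ref{proof imp FD}): assume RTBRB is implemented, note that only finitely many messages can be sent within the fixed window $\Delta$, and show by chaining (conditional) loss probabilities that with positive probability an omission burst spans the whole window, so some process that the oblivious detector still certifies as correct receives no information about $m$ before the deadline. The only structural difference is which links you silence: the paper cuts the \emph{broadcaster's outgoing} links (so no one but the sender ever learns $m$, and Validity is met only by the sender delivering its own message), whereas you cut the \emph{victim's incoming} links and keep the rest of the system connected; both variants yield the same contradiction with Agreement plus Timeliness. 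On two points you are actually more careful than the paper: you make explicit the twin-run/Integrity argument for why an uninformed process cannot simply deliver $m$ anyway, and you make explicit where Assumption~\ref{fdob} is used (letting $\mathcal{P}$'s heartbeat traffic succeed independently of $\mathcal{A}$'s starved traffic, so that $\mathcal{P}$'s strong accuracy keeps certifying the victim).

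The genuine problem is your last paragraph, which tries to close the self-crash escape route. Under the paper's own definition of correctness (Definition~1, Section~\ref{threat model}), a non-Byzantine process that crashes itself is \emph{faulty}, not correct. Consequently, a run in which your hypothetical crash rule fires at every process contains no correct processes at all, and RTBRB-Validity --- like every RTBRB property --- is then satisfied \emph{vacuously}, not violated; your concluding reductio (``no surviving correct process, violating RTBRB-Validity'') therefore does not go through. This is not a pedantic objection: \textit{RT-ByzCast} itself circumvents the impossibility precisely through this definitional move, by having starved processes crash themselves and thereby leave the set of processes the properties quantify over. The paper does not close this route either; it sidesteps it by fiat, restricting the claim inside its proof to algorithms that ``cannot change which processes are in $\mathcal{L}$,'' i.e., to traditional algorithms whose actions do not affect which processes are correct --- this restriction is the intended content of Assumptions~\ref{fdob}, \ref{definition A uses FD} and~\ref{assum. no additional comp}. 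If you adopt the same restriction, your proof is correct and coincides with the paper's; if you insist on covering self-crashing algorithms, you need a genuinely different argument (e.g., a non-triviality requirement on what ``implementing RTBRB'' means), since positive-probability mass extinction of correct processes contradicts nothing in the stated properties.
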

\vspace*{-5pt}
The proof relies on the fact that even when knowing the identity of all correct processes, some correct processes might not be reachable in real-time (any predefined duration) through the network (due to probabilistic message losses). Traditional failure detectors~\cite{Chandra:unreliableFD} seek eventual reachability, while we seek real-time reachability. In other words, such an impossibility does not exist in time-free systems (a.k.a.  asynchronous) but merely in those systems requiring a timeliness property. The reason follows from the fact that in time-free asynchronous systems, all correct processes are eventually reachable, even with our notion of links. Reachability changes from being a liveness property (eventual reachability) in time-free systems to being a safety property (timed reachability) in a real-time~context. \textcolor{red}{The detailed proof of Theorem~\ref{theorem imp using FDs} is subsumed by the proof of the more general result of Theorem~\ref{theorem imp oblivious monitoring}.}

We investigate next the feasibility of implementing RTBRB when failure detectors can detect, not only crashed processes, but also those processes that are not reachable in time. Perhaps surprisingly, \textcolor{red}{we prove below} that the impossibility still persists, if the failure detector remains oblivious of the algorithm implementing RTBRB.

\vspace*{-10pt}\subsection{RTBRB Using Proactive Reachability Failure Detectors} \label{RTBRB with montioring} 
In order to determine which processes are alive and reachable in real-time distributed systems, a widely common practice is to use failure detectors that rely on heartbeats with a concrete timeliness specification~\cite{gm5,gm1,viewsnoop,FTFT,gm6}. 

\begin{definition}\label{reachable}
	Consider some fixed interval of $l$ rounds denoted by $\Delta_l$.
	Process $p_i$ is said to be \textbf{timely reachable} at round $r'$, if for all other correct processes $p_j$ in the system, $p_j$ 
	has received some message from $p_i$  within the past $\Delta_l$ interval from $r'$.\\
	Conversely, a process $p_i$ is \textbf{unreachable}, if $p_i$ is not timely reachable at some round by at least one other correct process.
\end{definition}

\begin{definition}\label{pfrd}
	A Proactive Reachability Failure Detector ($\mathcal{PR}$) guarantees the following properties:
	\begin{enumerate}
		\item \textbf{Strong timed completeness:} Every unreachable process is permanently suspected by all correct processes, at most $\Delta_m$ rounds after becoming unreachable.
		\item  \textbf{Strong timed accuracy:} No correct process that is timely reachable is suspected before crashing or becoming unreachable. 
	\end{enumerate} 
\end{definition}
\vspace*{-5pt}

Note that the notion of reachability subsumes the notion of crashed (faulty). It is also important to observe that the notion of reachability in Definition~\ref{reachable} is the one that restricts the set of correct and reachable processes the most: the set of processes termed as unreachable is largest under Definition~\ref{reachable}. In this sense, impossibilities proven with $\mathcal{PR}$ using our definition of reachability are the strongest.




An example of a $\mathcal{PR}$ implementation in our system is one that: (1) \textcolor{red}{relies on timeouts (e.g., expiring every $\Delta_l$ rounds) and periodic message (heartbeat) exchange between processes}, (2) defines timeouts on process $p_i$ as a multiple of the period at which $p_i$ sends heartbeats, and (3) ``suspects'' a process $p_i$ only when some correct process cannot receive, within the specified timeouts, (direct or indirect) heartbeats sent by~$p_i$.

\begin{assumption}\label{sysob}
	 Detector $\mathcal{PR}$ is oblivious of algorithm $\mathcal{A}$ that implements RTBRB.
\end{assumption}
\begin{assumption}\label{monito}
	Algorithm $\mathcal{A}$ uses $\mathcal{PR}$.
\end{assumption}

\begin{assumption}\label{nodep}
	Algorithm $\mathcal{A}$ does not use any additional software or hardware components besides $\mathcal{PR}$ and the network.
\end{assumption}
\begin{theorem}\label{theorem imp oblivious monitoring}
	No algorithm $\mathcal{A}$  can implement the RTBRB abstraction under Assumptions~\ref{sysob},~\ref{monito} and~\ref{nodep}.
\end{theorem}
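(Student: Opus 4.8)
The plan is to argue by contradiction using an indistinguishability construction that exploits the obliviousness of $\mathcal{PR}$ together with the probabilistic, per-transmission nature of omissions. Suppose some algorithm $\mathcal{A}$ implements RTBRB under Assumptions~\ref{sysob}, \ref{monito}, and~\ref{nodep}, and let $\Delta$ be the real-time bound guaranteed by RTBRB-Timeliness. The crucial observation is a direction asymmetry: by Definition~\ref{reachable} the detector $\mathcal{PR}$ declares a process $p_i$ timely reachable on the basis of $p_i$'s \emph{outgoing} heartbeats (links $l_{ij}$) arriving at the other correct processes, whereas $\mathcal{A}$ can only deliver a broadcast message $m$ to $p_i$ through its own messages travelling on the \emph{incoming} links $l_{ji}$. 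Since $\mathcal{PR}$ is oblivious (Assumption~\ref{sysob}), its heartbeats are transmitted separately from $\mathcal{A}$'s messages, and each transmission is independently omitted with probability in $(\epsilon_1,\epsilon_2)$. Hence the event ``$p_i$'s heartbeats keep getting through'' and the event ``all of $\mathcal{A}$'s messages toward $p_i$ are omitted'' are compatible and jointly have strictly positive probability.

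First I would fix a heartbeat schedule, of positive probability, in which a designated correct process $p_i$ is timely reachable in every round: its outgoing heartbeats always reach all other correct processes within $\Delta_l$. By strong timed accuracy (Definition~\ref{pfrd}), $\mathcal{PR}$ then never suspects $p_i$, so every correct process keeps $p_i$ in its view and, by Assumption~\ref{nodep}, $\mathcal{A}$ has no other means of excluding $p_i$ from the computation; in particular $p_i$ remains correct throughout. Next I would build two executions that share this heartbeat schedule and that both realize the positive-probability event in which every $\mathcal{A}$-message addressed to $p_i$ is omitted during a window $[t,t+\Delta]$: in execution $E_A$ a correct process $p_s \neq p_i$ RTBRB-broadcasts a message $m$ at time $t$, while in execution $E_B$ no process broadcasts at all.

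The key step is then to show that $p_i$ cannot tell $E_A$ and $E_B$ apart. In both executions $p_i$ receives exactly the same inputs: the identical $\mathcal{PR}$ output (which, by obliviousness, depends only on the common heartbeat schedule and not on $\mathcal{A}$'s traffic), no $\mathcal{A}$-messages at all during $[t,t+\Delta]$, and --- since the broadcaster is $p_s$, not $p_i$ --- no local broadcast event of its own. By Assumption~\ref{nodep} these are $p_i$'s only sources of information, so its local execution is identical in $E_A$ and $E_B$ and it must take the same deliver/abstain decisions in both. In $E_A$, RTBRB-Validity forces some correct process to deliver $m$, RTBRB-Agreement then forces the correct process $p_i$ to deliver $m$, and RTBRB-Timeliness forces this to happen by $t+\Delta$. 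But if $p_i$ delivers $m$ by $t+\Delta$ in $E_A$, it does the same in $E_B$, where $m$ was never broadcast; with the correct sender $p_s$ this violates RTBRB-Integrity. Conversely, if $p_i$ does not deliver $m$ by $t+\Delta$, then since $p_i$ is correct and cannot be excluded, RTBRB-Agreement together with RTBRB-Timeliness is violated in $E_A$. Either branch contradicts the assumption that $\mathcal{A}$ implements RTBRB.

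I expect the main obstacle to be rigorously establishing the realizability of the critical scenario rather than the contradiction itself: one must verify that keeping $p_i$ outgoing-reachable (so that $\mathcal{PR}$, by strong timed accuracy, is forbidden from suspecting it) is genuinely independent of, and simultaneously realizable with, the omission of all of $\mathcal{A}$'s incoming messages to $p_i$ over a full $\Delta$-window. This rests squarely on Assumption~\ref{sysob} (obliviousness makes the two message streams independent) and on the per-transmission omission model with bounds $0<\epsilon_1<P^o_{ij}(t)<\epsilon_2<<1$, which guarantees the compound event has positive probability however long the window and however $\mathcal{A}$ routes its retransmissions. A secondary point requiring care is to close the ``self-exclusion'' loophole: under Assumption~\ref{nodep}, $\mathcal{A}$'s only handle on membership is $\mathcal{PR}$, which never suspects the reachable $p_i$, so $p_i$ cannot legitimately be turned faulty to escape its Agreement/Timeliness obligation. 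Because Definition~\ref{reachable} makes the unreachable set as large as possible, this argument yields the strongest form of the impossibility, and the proof of Theorem~\ref{theorem imp using FDs} follows as the special case where $\mathcal{PR}$ is weakened to a perfect crash detector.
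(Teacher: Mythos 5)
Your argument is sound, and it shares the paper's probabilistic core: over the finite RTBRB-Timeliness window only finitely many messages can be sent, each transmission is lost with probability at least $\epsilon_1>0$, and obliviousness (Assumption~\ref{sysob}) means the detector's heartbeat traffic travels separately, so the isolated process stays unsuspected by strong timed accuracy while $\mathcal{A}$'s traffic is wiped out. The construction you build on top of this core, however, is genuinely different from the paper's. The paper's proof isolates the \emph{broadcaster}: with positive probability every $\mathcal{A}$-message that $p_i$ sends during $[t,t+\Delta]$ is lost, so while $p_i$ itself delivers (Validity), no other correct process can deliver $m$ by $t+\Delta$, contradicting Agreement together with Timeliness; the key step that ``a process which has received nothing about $m$ cannot deliver $m$'' is asserted rather than proved. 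You instead isolate a \emph{receiver}, blocking $\mathcal{A}$'s incoming traffic to $p_i$, and close the argument with a two-execution indistinguishability: if $p_i$ delivers by $t+\Delta$ in $E_A$, the same local behavior in the broadcast-free execution $E_B$ violates Integrity; if it does not, Agreement plus Timeliness fail in $E_A$. Your route buys rigor --- it eliminates the ``spontaneous delivery'' loophole the paper leaves implicit, and it makes explicit exactly where strong timed accuracy and obliviousness enter --- at the cost of extra bookkeeping: both executions must share one heartbeat schedule, and you need positive probability for a \emph{mixed} loss pattern (heartbeats delivered, $\mathcal{A}$-messages dropped, on the same incoming links), which requires the same conditional-probability induction the paper carries out for its all-lost pattern, applied jointly to losses and successes. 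Note that the paper's proof has the symmetric implicit need (the broadcaster's heartbeats must still get out on the very links dropping its $\mathcal{A}$-messages), so neither proof is weaker on that point; and both arguments yield Theorem~\ref{theorem imp using FDs} as the special case where the oblivious detector is a perfect crash detector.
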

	Roughly speaking, the proof hinges on two main ideas:
	\begin{enumerate}
		\item The reachability uncovered by $\mathcal{PR}$ has a different time-window than that of $\mathcal{A}$. \textcolor{red}{Namely, suspected processes (by $\mathcal{PR}$) are processes that are not reachable up to the current moment, while $\mathcal{A}$ would need to see the reachability from the current moment onward.}
		\item In the same network, due to probabilistic violations of reliability and timeliness, different oblivious applications (which send messages independently of each other) can reach different processes given specific time-constraints. Hence, processes considered to be reachable from $\mathcal{PR}$'s perspective might not be reachable in time by the broadcast algorithm.
	\end{enumerate}
	  \textcolor{red}{The detailed proof of Theorem~\ref{theorem imp oblivious monitoring} can be found in Appendix~A.}

\begin{corollary}
	Detecting unreachable processes using a failure detector is not sufficient to allow the implementation of the RTBRB abstraction if that failure detector is oblivious of the algorithm implementing RTBRB.
\end{corollary}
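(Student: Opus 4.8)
The plan is to argue by contradiction. Suppose some algorithm $\mathcal{A}$ satisfies all five RTBRB properties under Assumptions~\ref{sysob}, \ref{monito} and~\ref{nodep}, and exhibit one admissible execution in which $\mathcal{A}$ is forced to violate a property. The first step is to fold the properties into a single obligation: if a correct process RTBRB-broadcasts $m$ at real-time $t$, then RTBRB-Validity forces some correct process to deliver $m$, RTBRB-Agreement propagates this to every correct process, and RTBRB-Timeliness caps all these deliveries at $t+\Delta$. Hence $\mathcal{A}$ must guarantee that \emph{every} correct process delivers $m$ within $[t,t+\Delta]$, and my target is to construct an execution containing a correct process that provably cannot.

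The construction exploits the two-window mismatch flagged in the statement. Fix a broadcaster $p_i$ and a second non-Byzantine process $p_j$. Before $t$, run the network quietly with all of $p_j$'s heartbeats delivered, so that by Definition~\ref{reachable} $p_j$ is timely reachable and, by the Strong timed accuracy of $\mathcal{PR}$ (Definition~\ref{pfrd}), is not suspected. At $t$, $p_i$ broadcasts $m$; during the whole interval $[t,t+\Delta]$ I omit every transmission carrying $\mathcal{A}$'s information about $m$ toward $p_j$, while continuing to deliver $\mathcal{PR}$'s heartbeats to and from $p_j$. Since each transmission is independently lost with probability at least $\epsilon_1>0$ and only finitely many $\mathcal{A}$-messages are aimed at $p_j$ in a finite window, this joint loss pattern has strictly positive probability and is therefore an admissible execution that $\mathcal{A}$ must handle. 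This is exactly where the two ideas enter: $\mathcal{PR}$ certifies $p_j$ on the basis of its \emph{past} $\Delta_l$-window (and its completeness would only fire $\Delta_m$ rounds after $p_j$ became unreachable in $\mathcal{PR}$'s own heartbeat-defined sense, which never happens here), whereas $\mathcal{A}$ needs reachability \emph{forward} from $t$; and because $\mathcal{PR}$ is oblivious its heartbeats travel independently of $\mathcal{A}$'s messages, so one stream can arrive intact while the other is entirely lost.

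The contradiction then follows. By Assumption~\ref{sysob} the suspicion output of $\mathcal{PR}$ is a function of its own heartbeat traffic alone and carries none of $\mathcal{A}$'s information; since $p_j$'s heartbeats are delivered throughout, $\mathcal{PR}$ never suspects $p_j$, neither at $p_j$ nor at any other process. Thus $p_j$ is non-Byzantine and is not flagged through the only failure-information channel available to $\mathcal{A}$ under Assumptions~\ref{monito} and~\ref{nodep}, so it is a correct process obliged to deliver $m$ by $t+\Delta$. Yet during $[t,t+\Delta]$ the only inputs $p_j$ receives are $\mathcal{PR}$'s $m$-independent heartbeats; it obtains no information whatsoever about $m$ and cannot deliver it in time, violating the combined obligation. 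I note that the same construction subsumes Theorem~\ref{theorem imp using FDs}: a plain crash detector $\mathcal{P}$ detects only crashes, so it a fortiori does not suspect the alive-but-isolated $p_j$.

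I expect the main obstacle to be ruling out the ``self-crashing escape'': in principle $\mathcal{A}$ might try to let $p_j$ declare itself faulty once it stops receiving traffic, thereby discharging the delivery obligation. The way I would close this is an indistinguishability step. Let $E_1$ be the execution above and $E_0$ be identical except that no broadcast is issued and the same (now vacuous) $\mathcal{A}$-traffic toward $p_j$ is omitted; by obliviousness $\mathcal{PR}$ behaves identically in both, so $p_j$'s entire local view on $[t,t+\Delta]$ coincides and it must take the same actions in $E_1$ and $E_0$. In $E_0$, RTBRB-Integrity forbids $p_j$ from delivering $m$, and $p_j$ cannot base any action on an $m$ it never observed; hence its behavior in $E_1$ cannot depend on the broadcast either. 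The delicate point --- and the reason the impossibility is tied to \emph{obliviousness} rather than to failure detection as such --- is that any crash trigger $\mathcal{A}$ can legitimately act on is, by Assumptions~\ref{sysob} and~\ref{nodep}, derived from the $m$-independent $\mathcal{PR}$ stream, so it cannot be correlated with the selective loss of the broadcast: $\mathcal{A}$ can crash $p_j$ in response to $\mathcal{PR}$ (which never suspects it) but not in response to missing $m$. This is precisely the broadcast-coupled self-crashing that the later \textit{RT-ByzCast} design reintroduces in order to circumvent the result, and making this separation rigorous is the crux of the argument.
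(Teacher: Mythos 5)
Your core construction is the paper's own: over the finite window $[t,t+\Delta]$ only finitely many algorithm messages can be sent, each transmission is lost with probability at least $\epsilon_1>0$, so with positive probability all of $\mathcal{A}$'s traffic to one process is omitted while the oblivious detector's (independently sent) heartbeat stream is untouched; $\mathcal{PR}$ therefore keeps certifying $p_j$, and Validity, Agreement and Timeliness jointly force a delivery that cannot happen. The paper's Appendix~A isolates the broadcaster's outgoing messages rather than one receiver's incoming ones, but the two constructions are interchangeable, and your explicit chaining through Agreement actually makes precise a step the paper leaves implicit (Validity plus Timeliness alone are satisfiable when nobody else ever delivers).

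The gap is exactly where you located the crux. Your claim that ``any crash trigger $\mathcal{A}$ can legitimately act on is, by Assumptions~\ref{sysob} and~\ref{nodep}, derived from the $m$-independent $\mathcal{PR}$ stream'' does not follow from those assumptions: Assumption~\ref{nodep} leaves $\mathcal{A}$ full use of the network, so $\mathcal{A}$ may send its own keep-alive traffic and instruct a process to crash itself after a silent window --- this needs no component beyond the network. Against such an algorithm your scenario yields no contradiction: $p_j$ self-crashes in both $E_0$ and $E_1$ (your indistinguishability step even guarantees it does so in both), it is then \emph{faulty} under the paper's definition of correctness, and the delivery obligation evaporates. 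This is not a pathological corner case; \textit{RT-ByzCast} is precisely an algorithm of this kind, which is how the paper circumvents its own impossibility. The paper closes the loophole not by deriving anything from obliviousness but by fiat: its proof is explicitly stated only for algorithms that ``cannot change which processes are in $\mathcal{L}$,'' i.e., whose set of correct processes is determined independently of the algorithm's actions, and the corollary is read under that restriction. Your argument becomes correct --- and then coincides with the paper's --- if your final paragraph is replaced by that same restriction on the algorithm class; as written, the restriction cannot be deduced from Assumptions~\ref{sysob}--\ref{nodep}.
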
 

In short, in order to implement RTBRB in a probabilistically synchronous network, we need to (1) integrate the proactive reachability failure detector within the protocol implementing RTBRB and (2) allow this compound to decide which processes are reachable within the hard timing~constraints.
In other words, in order to implement RTBRB, the reachability from the perspective of the algorithm implementing RTBRB should be reconciled with that of $\mathcal{PR}$. We accordingly propose in what follows a solution that implements RTBRB by embedding the $\mathcal{PR}$ functionality in the RTBRB protocol. 


\vspace*{-5pt}\section{An Algorithm Implementing RTBRB}\label{implementing RTBRB}

In this section, we propose an algorithm, which we call \textit{RT-ByzCast} that implements the RTBRB abstraction by avoiding the traditional \textcolor{red}{design pitfalls highlighted in Section~\ref{feasibility of implementing RTBRB}.}
\textcolor{red}{\textit{RT-ByzCast} relies on three main things: processes monitoring each other, message diffusion and signature aggregation over a time-window, and processes capable of crashing themselves. We present the intuition behind each.}
\paragraph{Process monitoring} Processes do not know beforehand when a broadcast might be invoked. \textcolor{red}{In our system, omissions on any link may be of unbounded durations (even if with a small probability). Thus, there might exist a correct process $p$ that stays unaware of a broadcast invoked by process $q$: $p$ may lose for an unbounded amount of time messages relative $q$'s broadcast.}

 In order to provide any form of real-time guarantee, it is crucial to detect the occurrence of such timing violations and react to them. \textcolor{red}{We develop the \textit{proof-of-life} function (Function~\ref{alg: proof-of-life}), which requires processes to periodically exchange heartbeats (and echo received ones) when they do not know of any broadcast. A process $p$ that does not hear enough echoes of its own heartbeats can thus suspect that it either has lost communication with many other processes or that it does not know about an invoked broadcast.}
 
\paragraph{\textcolor{red}{Message diffusion and signature aggregation}} In our system, messages can be omitted, hence re-transmissions are needed to guarantee successful delivery. \textcolor{red}{A digital signature of process $p$ on some message $m$ constitutes an unforgeable proof that $p$ processed $m$, is aware of it, and has sent (relayed) $m$. Our algorithm requires processes to aggregate, over some pre-defined time-window, the different observed signatures on a received broadcast message (including their own signature). This way a process receiving a message $m$ can know (from $m$'s aggregated signature) the identity of processes that have received $m$ (within the pre-defined time-window). Besides aggregation our algorithm requires processes to diffuse a received message $m$ (along with its aggregated signature) by re-transmitting $m$ within the pre-defined time-window multiple times and to other processes. This diffusion helps mask network omissions by exploiting multiple network paths, e.g., relaying information via intermediate processes.}

In fact, our time-window creates a slack period of ``collective re-transmissions" in which, even if a message is omitted a few times, it still has a high chance (as shown in Section~\ref{Evaluation}) of being successfully received by the end of that time-window (which can be viewed now as the new deadline). 
We denote by $\mathbb{R}$ the duration of that time-window (in rounds). \textcolor{red}{$\mathbb{R}$ is a parameter of our algorithm (known to processes) and represents the delay we tolerate for a round-trip message delay.} The larger the value of $\mathbb{R}$, the bigger the slack period, but the higher the chances of a message being received in that duration (see Section~\ref{Evaluation}). In this sense, $\mathbb{R} = f(k)$, meaning that $\mathbb{R}$ is a function of $k$ (the anticipated omission degree), $\mathbb{R} \geq 2k+2$ to account for a round-trip bound.


\begin{figure}[t] 
	\begin{center}
		\includegraphics[scale=0.22]{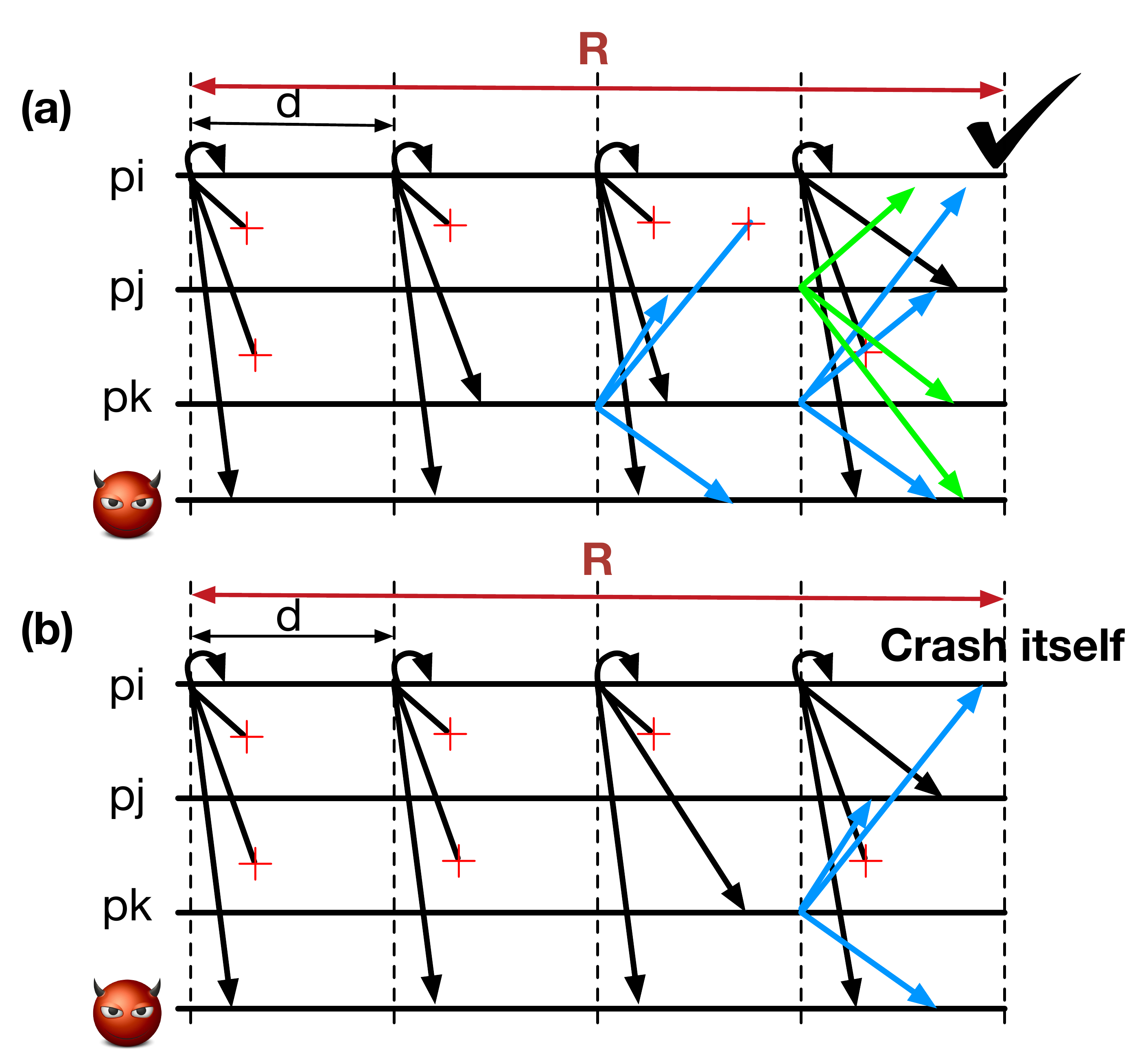}
		\begin{center}
			\vspace*{-10pt}
			\caption{ (a) $p_i$ gathers $2f+1$ signatures in $\mathbb{R}$ rounds; (b) $p_i$ does not gather $2f+1$ signatures in $\mathbb{R}$ rounds and crashes itself.}\label{exmaple exec}
			\vspace*{-20pt}
		\end{center} 
	\end{center}
\end{figure}

\paragraph{Self-crash capabilities} 
\textcolor{red}{A message sent by a non-Byzantine process $p$ might still fail to reach ``enough'' processes, despite diffusion and aggregation within a time-window of duration $\mathbb{R}$ (assumptions on omission degree, despite their high coverage, might be violated). In that case, our algorithm triggers $p$ to crash itself (we formally define when a process crashes itself in Definition~\ref{self-crash}). 
	A non-Byzantine process $p$ that ``crashes itself'' is no longer correct with respect to the algorithm's correctness properties. }
	



\floatname{algorithm}{Function}
\begin{algorithm}[t]
	\caption{ \textit{proof-of-life($\Delta_{r}$)}@ process $p_j$} \label{alg: proof-of-life}
	\begin{algorithmic}[1]
		\footnotesize	
		\For{every round $r$}
		\State \textbf{Send} \texttt{HB($\{p_j,r\};\Phi_{p_j}$)} to all $p\in \mathit{\Pi}$
		\EndFor
		\Event {$<$receive \texttt{HB($\{p_i,r\};\textcolor{red}{sigs}$)} at some round $r'>$}
		\\// $\textcolor{red}{sigs}$ designates the signatures on $\{p_i,r\}$.
		\State \textbf{Verify} message validity
		\State \textbf{Set} $\textcolor{red}{sigs} =\textcolor{red}{sigs} \bigcup \Phi_{p_j}$
		\For {every round in $[r'+1,r+\Delta_{r}]$}
		\State \textbf{Send} \texttt{HB($\{p_i,r\};\textcolor{red}{sigs}$)} to all $p\in \mathit{\Pi}$
		\EndFor 
		\EndEvent 
		\For {every round $r$ where $r>time_{w}$}
		\State \textbf{Compute} $\mathcal{R}_{HB}(p_j,r)$
		\\// $\mathcal{R}_{HB}(p_j,r)$:set of processes that received some heartbeat from $p_j$ (directly or indirectly), during some round in $[r-1-\Delta_{r},r~-~1]$
		\If{$|\mathcal{R}_{HB}(p_j,r)|\leq~2f$}
		\State \textbf{Transition} to \textit{Dead-State}
		\EndIf
		\EndFor 
	\end{algorithmic}
\end{algorithm}

\vspace*{-30pt}
\subsection{RT-ByzCast Overview}
We present now a high-level view of how our algorithm works in systems where $n=3f+1$ processes. 
\subsubsection{Behavior of Processes Unaware of a Broadcast}
\textcolor{red}{Every process that did not receive any message relative to a broadcast, executes the \textit{proof-of-life} function (Function~\ref{alg: proof-of-life}), where a process periodically sends a signed heartbeat message to all other processes.  When process $p_j$ receives $p_i$'s heartbeat at round $r$, $p_j$: (i) aggregates signatures relative to $p_i$'s heartbeats (signatures of heartbeats sent by $p_i$ in $[r,r-\mathbb{R}]$), (ii) appends its own signature to the formed aggregate, and (iii) periodically echoes $p_i$'s heartbeat, with the aggregated signatures,} to all other processes (see Figure~\ref{exmaple exec}). A process $p$ executing the \textit{proof-of-life} crashes itself when $p$ does not receive in any time-window \textcolor{red}{of duration $\mathbb{R}$}, $2f+1$ signatures on its own heartbeats (including its own signature).

\subsubsection{Behavior of Processes Knowledgeable of a Broadcast}
In this case, a process echoes periodically the broadcasted value, only if that value is indeed sent by the process that originally issued the broadcast. If multiple values are heard relative to a single process, the first heard value is the one to be echoed. \textcolor{red}{Processes continue to execute the proof-of-life function, however, piggybacking proof-of-life messages to the echo messages of the broadcast.}

Similar to the proof-of-life function, 
processes aggregate and append signatures relative to a given broadcast value in a window of $\mathbb{R}$ rounds (from the current round). \textcolor{red}{Upon receiving a value that is signed by more than $2f$ processes, a process delivers that value. A process that does not receive more than $2f$ signatures in a $\mathbb{R}$ time-window after being aware of the broadcast crashes itself (see Definition~\ref{self-crash}).}

\subsubsection{RT-ByzCast Latency}	
Our \textit{RT-ByzCast} algorithm delivers messages within an upper bound of $3\mathbb{R}$ rounds. The following example briefly highlights the intuition behind requiring $3\mathbb{R}$ (the detailed proof can be found in Appendix~B. 
Assume a correct process issues a broadcast at round $r$. \textcolor{red}{By round $r+\mathbb{R}$, at least $2f$ other} processes would have seen that broadcast (otherwise the broadcasting process kills itself). In the worst case, those processes would see the broadcast at round $r+\mathbb{R}-1$. Assume that the remaining $f$ processes in \textcolor{red}{the system have not yet received the broadcast message}. By round $r+2\mathbb{R}$ these $f$ processes should have heard about the broadcast or otherwise they would kill themselves (processes that already received the broadcast send heartbeats only via echo messages relative to the broadcast). After receiving the broadcast, these $f$ processes have an additional $\mathbb{R}$ rounds to collect enough signatures and hence will deliver the broadcast message at worst by $r+3\mathbb{R}$.
\floatname{algorithm}{Function}
\begin{algorithm}[t]
	\caption{ \textit{aggregate-sig$_{p_j}$($v, p_i, \Phi_{p_x},...,\Phi_{p_z}, p_k$)}} \label{aggregate functions}
	\begin{algorithmic}[1]
		\footnotesize	\If {$(v,...) \notin \texttt{Msg}[p_i][p_k]$} 
		\State \textbf{Set} $\textcolor{red}{sigs} = \Phi_{p_j}\bigcup\Phi_{p_x},...,\Phi_{p_z}$,  after verifying that $p_x,...,p_z$ indeed signed~$v$.
		\State \textbf{Add} $(v,\textcolor{red}{sigs})$ to $\texttt{Msg}[p_i][p_k]$,
		\EndIf
		\If {$(v,...)\in \texttt{Msg}[p_i][p_k]$} 
		\State Consider that $\textcolor{red}{sigs}=...$, then \textbf{update} $(v,\textcolor{red}{\textcolor{red}{sigs}})$ in $\texttt{Msg}[p_i][p_k]$ such that $(v,\textcolor{red}{sigs})$ includes as well the signatures relative to $p_x,...,p_z$, i.e., $\textcolor{red}{sigs}=\textcolor{red}{sigs}\bigcup{\Phi_{p_x},...,\Phi_{p_z}}$.
		\EndIf
		\State \textbf{return} $\textcolor{red}{sigs}$;
	\end{algorithmic}
\end{algorithm}
\begin{algorithm}[t]
	\caption{ \textit{deliver-message$_{p_j}$($p_i, v,\textcolor{red}{sigs}$)} @ round $r$} \label{deliver functions}
	\begin{algorithmic}[1]
		\footnotesize
		\State \textbf{Deliver} $v$ (if not already delivered) 
		\State \textbf{Initialize} $\mathcal{R}_{deliver}(p_i,r)=p_j$ 
		\State \textbf{Stop} sending any \texttt{Echo()}
		
		\State \textbf{Send} \texttt{Deliver$_{p_j}$($(p_i,v,\textcolor{red}{sigs});\Phi_{p_j}$)} at the beginning of every round in $[r+1,r+1+2\mathbb{R}]$, where $\textcolor{red}{sigs}$ contains signatures of all processes in $\mathcal{R}_{echo}(p_i,r_0,v)$. 
		
		// $\Phi_{p_j}$ is $p_j$'s signature on the new payload: $(p_i,v,\textcolor{red}{sigs})$
		\State At the beginning of round $r+2+2\mathbb{R}$:
		\If {(no \texttt{Echo()}  or \texttt{RTBRB-broadcast()} is being sent)} 
		\State \textbf{Execute} \textit{proof-of-life($\mathbb{R}$)} (non-piggyback mode).  
		
		\EndIf
	\end{algorithmic}
\end{algorithm}

\vspace*{-20pt}\subsection{\textcolor{red}{Description of RT-ByzCast}}\label{RT-ByzCast}
\textcolor{red}{We now describe our algorithm formally and in more details.
Using our \textit{RT-ByzCast } algorithm, non-Byzantine processes can be in one of two states: the \textit{Alive-State} or the \textit{Dead-State}. Non-Byzantine processes in the \textit{Alive-State} follow the algorithm faithfully}. Processes in the \textit{Dead-State}, however, refrain from sending any messages and do not respond to any higher-level application requests. A process in the \textit{Dead-State} appears to have crashed as in \textit{fail-stop} failures~\cite{Schneider:1984:BGA:190.357399}. For presentation simplicity, we consider that once a process is in the \textit{Dead-State}, it remains in the \textit{Dead-State} forever. \textcolor{red}{In Section~\ref{Dead-State Revival}, we discuss how processes can be revived, by returning to the \textit{Alive-State} and participating in the algorithm}.

Initially, after receiving an initiation signal from the clock, all processes are awakened and execute in the \textit{Alive-State}. 
Every process $p_i$ defines a \textcolor{red}{time-window} of a known fixed size of $\mathbb{R}>1$ rounds and $\texttt{Msg}[]^n[]^n$, an initially empty array to store the values broadcast relative to each process, with the corresponding  signatures on these values. $\texttt{Msg}[]^n[]^n$ is the local data structure that stores relative to each process that issues a broadcast and each process that echoes it, the broadcast values and their aggregated signatures. For example, $\texttt{Msg}[p_i][p_k]$ stores all messages with aggregated signatures that are broadcast by some process $p_i$ and that are echoed by a process~$p_k$. After initialization, every process in the \textit{Alive-State} executes a \textit{proof-of-life} \textcolor{red}{function with parameter~$\Delta_{r}=\mathbb{R}$}. 

A process $p_i$ that executes the proof-of-life function sends a heartbeat \texttt{HB($\{p_i,r\};\Phi_{p_i}$)} to all processes at the beginning of every round~$r$, where $\Phi_{p_i}$ denotes $p_i$'s signature over the heartbeat. A process $p_j$ receiving \texttt{HB($\{p_i,r\};\textcolor{red}{sigs}$)} at some round $r'$ verifies that the heartbeat has been indeed seen by the processes that signed it ($\textcolor{red}{sigs}$ is the set of signatures on $\{p_i,r\}$). Then $p_j$ appends its signature to $\textcolor{red}{sigs}$ and sends \texttt{HB($\{p_i,r\};\textcolor{red}{sigs}$)} to all other processes in all rounds $\in[r'+1,r+\Delta_{r}]$ (lines 7-10). A process $p_i$ computes at the beginning of every round $r:r>\Delta_{r}$ the set $\mathcal{R}_{HB}(p_i,r)$, which is the set of distinct processes that were able to receive some heartbeat from $p_i$ (directly or indirectly), during some round in $[r-1-\Delta_{r},r~-~1]$.

Precisely, let us denote by $M$ the set of \texttt{HB($\{p_i,r''\};*$)}  heartbeats that $p_i$ received. Then $\mathcal{R}_{HB}(p_i,r)$ contains all processes that have their signature in any heartbeat in $M$ such that round $r''\in [r-1-\Delta_{r},r-1]$.  For any round after $\Delta_{r}$ rounds elapse since system initialization (i.e., $\forall$~rounds $r:r>\Delta_{r}$), if ever $|\mathcal{R}_{HB}(p_i,r)|\leq~2f$, then process $p_i$ transitions to the \textit{Dead-State} (lines 12-18).

 \vspace{5pt}
 \subsubsection*{\textit{RT-ByzCast: The Algorithm} (Algorithm~\ref{main alg})} \label{main algo}
 After system start-up, every process $p_i$ begins by executing the proof-of-life function with $\Delta_{r}= \mathbb{R}$. A process $p_i$ that wishes to broadcast a value $v$ at round $r-1$ 
 initializes an empty set denoted by $\mathcal{R}_{echo}(p_i,r,v)$. Then $p_i$ signs $(p_i,r-1,v)$ with an unforgeable signature $\Phi_{p_i}$ and produces the tuple $\{(p_i, r-1,v);\Phi_{p_i}\}$. This tuple is made of two parts: a \textit{payload} and a \textit{signed-by} part. For example, the tuple $\{(p_i,r-1,v);\Phi_{p_i}\}$ has $(p_i,r-1,v)$ as payload and $\Phi_{p_i}$ as the signed-by.  \textcolor{red}{Afterwards process $p_i$ sends at the beginning of round $r-1$, the message \texttt{RTBRB-broadcast($(p_i,r-1,v);\Phi_{p_i}$)} to all processes. 
 When a process $p_j$ (that is executing the proof-of-life function) first receives an \texttt{RTBRB-broadcast($(p_i,r',v);\Phi_{p_i}$)} message in round $r\geq r'$ from $p_i$}, process $p_j$ initializes an empty set denoted by $\mathcal{R}^{p_j}_{echo}(p_i,r+1,v;\Phi_{p_i})$. \textcolor{red}{Afterwards} $p_j$ retrieves the tuple $ \{(p_i,r',v);\Phi_{p_i}\}$, verifies that $p_i$ indeed sent $v$, and produces the tuple $\{(p_i,r',v;\Phi_{p_i});\Phi_{p_j}\}$ by appending $p_j$'s signature to the signed-by field.
 
\textcolor{red}{Process $p_j$  starts sending at the beginning of every round (as of round $r$ onward) \texttt{Echo$_{p_j}$($(p_i,r',v;\Phi_{p_i});\Phi_{p_j}$)} to all other~processes.} Process $p_j$ continues to execute the \textit{proof-of-life} function however by piggybacking its heartbeats on \texttt{Echo$_{p_j}$($(p_i,r',v;\Phi_{p_i});\Phi_{p_j}$)}.

\textcolor{red}{When process $p_j$ receives a (valid) \texttt{Echo$_{p_k}$($(p_i,r',v;\Phi_{p_i});\Phi_{p_x},...,\Phi_{p_z}$)} at some round $r$ , $p_j$ behaves as follows.} 

\textbf{A.~}If $p_j$ is not sending any \texttt{Echo()}(lines 14-27):
then if $p_j$ has not already delivered a message from~$p_i$, $p_j$ aggregates signatures in the variable $\textcolor{red}{sigs}$. 
It does so using the function \textit{aggregate-sig$_{p_j}$($...$)} (detailed Function~\ref{aggregate functions}) that combines all observed signatures on messages sent by $p_k$ containing $(p_i,r',v)$. 
Then if signatures in $\textcolor{red}{sigs}$ do not add up to $2f+1$ processes, $p_j$ sends \texttt{Echo$_{p_j}$($(p_i,r+1,v;\Phi_{p_i});\Phi_{p_j}$)} and initializes a set ($\mathcal{R}_{echo}(p_i,r+1,v)$) that tracks the number of signatures collected on the \texttt{Echo$_{p_j}$($(p_i,r+1,v;\Phi_{p_i});...$)}. Otherwise, i.e., if signatures in $\textcolor{red}{sigs}$ amount to at least $2f+1$ processes, $p_j$ sets $\mathcal{R}_{echo}(p_i,r,v)=\textcolor{red}{sigs}$ and  delivers the message by executing \textit{deliver-message($v,\textcolor{red}{sigs}$)}.	Any process $p_j$ that executes \textit{deliver-message($p_i,v,\textcolor{red}{sigs}$)} delivers message $(p_i,v,\textcolor{red}{sigs})$ and only sends \texttt{Deliver$_{p_j}$($(p_i,v,\textcolor{red}{sigs});...$)} for $2\mathbb{R}$ rounds. $p_j$ should collect a least $2f+1$ signatures during the first $\mathbb{R}$ rounds. This requirement ensures that there is a full $\mathbb{R}$-duration during which at least $f+1$ correct processes are sending only such deliver messages\footnote{This requirement ensures agreement on delivering the same message among all correct processes, when broadcasting nodes are malicious.}. 

If no \texttt{Echo()} or \texttt{RTBRB-broadcast()} is being sent/received, $p_j$ executes the proof-of-life function with parameter $\mathbb{R}$ (without piggybacking). 

\floatname{algorithm}{Algorithm}
\begin{algorithm}[H] 
	\caption{\textit{RT-ByzCast}} \label{main alg}
	\begin{algorithmic}[1]
		\footnotesize	
		\State \textbf{Init:} $\texttt{Msg}[]^n[]^n = \emptyset$ 
		\State \textbf{Execute} \textit{proof-of-life($\mathbb{R}$)};
		\Event {$<$$p_i$ wants to broadcast a value $v$$>$}
		\State \textbf{Execute} \textit{proof-of-life} function in piggyback mode
		\State \textbf{Initialize} $\mathcal{R}_{echo}(p_i,r,v)=\emptyset$ 
		\State \textbf{Send} periodically starting from the current round \hspace*{4.2mm}\texttt{RTBRB-broadcast($(p_i,r_{current},v);\Phi_{p_i}$)} to all $p\in \mathit{\Pi}$
		\EndEvent
		
		\Event{$<$receive \texttt{RTBRB-broadcast($(p_i,r',v);\Phi_{p_i}$)}in round $r-1\geq r'$ for the first time$>$}
		\State \textbf{Execute} \textit{proof-of-life} function in piggyback mode
		\State \textbf{Initialize} $\mathcal{R}^{p_j}_{echo}(p_i,r,v)=p_i;$ $\textcolor{red}{sigs} = \Phi_{p_j}$
		\State \textbf{Send} \texttt{Echo$_{p_j}$($(p_i,r',v;\Phi_{p_i});\textcolor{red}{sigs}$)} to all $p\in \mathit{\Pi}$ at rounds$\geq r$
		\EndEvent
		
		\hspace{-12mm}\textbf{$@$ process $p_j$:}
		\Event{$<$receive \texttt{Echo$_{p_k}$($(p_i,r',v;\Phi_{p_i});\Phi_{p_x},...,\Phi_{p_z}$)} at round~$r$}
		
		\If{$p_j$ is not sending any \texttt{Echo()}}
		\State \textbf{Set} $\textcolor{red}{sigs}$ = \textit{aggregate-sig$_{p_j}$($v, p_i, \Phi_{p_x}...\Phi_{p_z}, p_k$)}
		\If {$p_j$ has not already delivered a message relative to~$p_i$}
		\State \textbf{Execute} \textit{proof-of-life} function in piggyback mode. 
		\State \textbf{Initialize} $\mathcal{R}_{echo}(p_i,r+1,v)=\emptyset$
		\If {$\textcolor{red}{sigs}\leq 2f$} 
		\State \textbf{Send} at the beginning of every cycle (as of round $r+1$ \hspace*{13mm}onward)  \texttt{Echo$_{p_j}$($(p_i,r+1,v;\Phi_{p_i});\Phi_{p_j}$)} if $k\neq j$
		\EndIf
		\EndIf
		\If {$\textcolor{red}{sigs}> 2f$ (for the first time)} 
		\State \textbf{Set} $\mathcal{R}_{echo}(p_i,r,v)=\textcolor{red}{sigs}$ 
		\State \textbf{Execute} \textit{deliver-message($p_i,v,\textcolor{red}{sigs}$)}
		\EndIf
		\EndIf
		
		\If {$p_j$ is sending an \texttt{Echo$_{p_j}$($(p_i,r',v;\Phi_{p_i});...$)}}
		\State \textbf{Set} $\textcolor{red}{sigs}$ = \textit{aggregate-sig$_{p_j}$($v, p_i, \Phi_{p_x}...\Phi_{p_z}, p_k$)}
		\If {$\textcolor{red}{sigs}> 2f$ (for the first time)}
		\State Set $\mathcal{R}_{echo}(p_i,r,v)=\textcolor{red}{sigs}$
		\State \textbf{Execute} \textit{deliver-message($p_i,v,\textcolor{red}{sigs}$)}.
		\EndIf 
		\If {$\textcolor{red}{sigs}\leq 2f$ $\land$ $k=j$} 
		\State \textbf{Set} $\mathcal{R}_{echo}(p_i,r,v)=\textcolor{red}{sigs}$. 
		\EndIf
		\If {$\textcolor{red}{sigs}\leq 2f$}
		\State \textbf{Send} at the beginning of every cycle (as of round $r+1$ \hspace*{13mm}onward) \texttt{Echo$_{p_k}$($(p_i,r',v;\Phi_{p_i});\textcolor{red}{sigs}$)} to all $p \in \mathit{\Pi}$
		\EndIf
		\EndIf
		
		\If {$p_j$ is sending \texttt{Echo$_{p_j}$($(p_i,r'',v';\Phi_{p_i});*$)} $:v'\neq v$}
		\State \textbf{Set} $\textcolor{red}{sigs}$ = \textit{aggregate-sig$_{p_j}$($v', p_i, \Phi_{p_x}...\Phi_{p_z}, p_k$)}
		\If {$\textcolor{red}{sigs}> 2f$ (for the first time)}
		\State \textbf{Set} $\mathcal{R}_{echo}(p_i,r,v)=\textcolor{red}{sigs}$
		\State \textbf{Execute} \textit{deliver-message($p_i,v',\textcolor{red}{sigs}$)}
		\EndIf
		\EndIf
		\EndEvent 
		\Event {\textcolor{red}{$<$receive \texttt{Deliver$_{p_k}$($(p_i,v,\textcolor{red}{sigs});\Phi_{p_x}...\Phi_{p_z}$)} at round~$r$$>$}}
		\If {\textcolor{red}{$((p_i,v)$ is not delivered yet}}
		\State \textcolor{red}{\textbf{Deliver} $v$}
		\State \textbf{Stop} sending any \texttt{Echo()} 
		\State \textcolor{red}{\textbf{Initialize} set $\mathcal{R}_{deliver}(p_i,r)= \{p_x,...,p_z\}$}
		\Else
		\State \textcolor{red}{$\mathcal{R}_{deliver}(p_i,r')=\mathcal{R}_{deliver}(p_i,r')\bigcup \{p_x,...,p_z\}$.} 
		\EndIf
		\State \textbf{Send} \texttt{Deliver$_{p_j}$($(p_i,v,\textcolor{red}{sigs});signatures$)} to all $p\in\mathit{\Pi}$~at \hspace*{2.8mm} every round in $[r+1,r+1+2\mathbb{R}]$, $signatures$  contains the signatures \hspace*{4mm}of all processes in $\mathcal{R}_{deliver}(p_i,...)$. 
		\State \textbf{Execute} same commands as lines 5-8 of \textit{deliver-message}$_{p_j}$(...)
		\EndEvent 
	\end{algorithmic}
\end{algorithm}

\noindent\textbf{B.~} If $p_j$ is sending an \texttt{Echo$_{p_j}$($(p_i,r',v;\Phi_{p_i});...$)} \textcolor{red}{(lines 30-42)}: \textcolor{red}{In this case, if $p_j$ has not delivered a message from~$p_i$, $p_j$ aggregates signatures in $\textcolor{red}{sigs}$ (using the function \textit{aggregate-sig$_{p_j}$($...$)})}. Then $p_j$ checks if $\textcolor{red}{sigs}$ contains more than $2f$ signatures, in which case $p_j$ sets $\mathcal{R}_{echo}(p_i,r,v)=\textcolor{red}{sigs}$ and delivers $(p_i,r',v)$ by executing \textit{deliver-message($p_i,v,\textcolor{red}{sigs}$)}. Otherwise, i.e., if $|\textcolor{red}{sigs}|\leq 2f$, $p_j$ sends echoes of $(p_i,r',v)$ with the new aggregated signatures, if ($k=j$).

\noindent\textbf{C.~}If $p_j$ is sending an \texttt{Echo$_{p_j}$($(p_i,r'',v';\Phi_{p_i});*$)} where $v'\neq v$: $p_j$ aggregates signatures in $\textcolor{red}{sigs}$ (using the function \textit{aggregate-sig$_{p_j}$($v', p_i, \Phi_{p_x}...\Phi_{p_z}, \Phi_{p_k}$)}). Then $p_j$ checks if $\textcolor{red}{sigs}$ contains more than $2f$ signatures, in which case $p_j$ delivers $(p_i,r'',v')$ by executing \textit{deliver-message($p_i,v',\textcolor{red}{sigs}$)}.

If $p_j$ receives \texttt{Deliver$_{p_k}$($(p_i,v,\textcolor{red}{sigs});\Phi_{p_x}...\Phi_{p_z}$)} for the first time at round~$r$ (lines 49-58), $p_j$ delivers $(p_i,v,\textcolor{red}{sigs})$ by executing \textit{deliver-message($p_i,v,\textcolor{red}{sigs}$)}.
If $p_j$ has previously seen such a \texttt{Deliver()} message, $p_j$ aggregates in the variable $signatures$ all the signatures it has on \texttt{Deliver$_{p_j}$($(p_i,v,\textcolor{red}{sigs});...$)} messages. Then $p_j$ sends \texttt{Deliver$_{p_j}$($(p_i,v,\textcolor{red}{sigs});signatures$)} at the beginning of every round $\in[r+1,r+1+2\mathbb{R}]$. 

\textcolor{red}{At the beginning of round $r+2+2\mathbb{R}$, if $p_j$ is not sending any \texttt{Echo()} or \texttt{RTBRB-broadcast()}, then $p_j$ executes the proof-of-life function with parameter $\mathbb{R}$ (without piggybacking).} 

\textcolor{red}{\vspace*{-5pt}\begin{definition}\label{def: invalid}
	Any \texttt{RTBRB-broadcast($p_i,...$)} or \texttt{Echo($p_i,...$)} message is termed \textit{invalid} when it possesses any incorrect signature. 
	Any \texttt{Deliver($(p_i,...,\textcolor{red}{sigs})...$)} is termed \textit{invalid} when it possesses any incorrect signature, 
	or when $\textcolor{red}{sigs}$ has less than $2f+1$ correct signatures. If a process receives invalid messages, it ignores them.
\end{definition}}
\textcolor{red}{\vspace*{-5pt}\begin{remark}\label{rem: piggyback}
	A process sending \texttt{RTBRB-broadcast()}, \texttt{Echo()} or \texttt{Deliver$_{p_j}$()}, executes \textit{proof-of-life} by piggybacking heartbeats to the sent \texttt{RTBRB-broadcast()}, \texttt{Echo$_{p_j}$()} and \texttt{Deliver$_{p_j}$()}; otherwise heartbeats are sent in individual messages. This is represented in lines (4), (9), and (17) of Algorithm~\ref{main alg}.
\end{remark}}
  
\vspace*{-5pt}\begin{definition}\label{self-crash}
\textcolor{red}{	A process $p_j$ crashes itself and transitions to the \textit{Dead-State} at the end of round $r_{cur}$ if any of the four cases below is satisfied.}
	
	\textbf{Case 1.} A process $p_j$ transitions to the \textit{Dead-State} if both conditions below hold.
	\begin{enumerate}
		\item \textcolor{red}{$\exists r': |\mathcal{R}_{echo}(p_i,r',v)|\leq 2f,~\forall v$ $\land$ $r_{cur} \geq r'+\mathbb{R}$}
		\item If $p_j$ has not \textit{discovered a }\textit{lie}. A process $p_j$ is said to ``discover a lie'', if $p_j$ can verify that two different values have been sent by the process issuing the broadcast. 
	\end{enumerate}

	\indent \textbf{Case 2.} $p_j$ transitions to the \textit{Dead-State} if both conditions hold: $\exists r': |\mathcal{R}_{deliver}(p_i,r')|\leq 2f$, and $r_{cur} \geq r'+\mathbb{R}$.
	
	\textbf{Case 3.} Process $p_j$ transitions to the \textit{Dead-State} if $p_j$ does not hear from at least $2f+1$ processes in any $\mathbb{R}$ window.
	
	\textbf{Case 4.} \textcolor{red}{Process $p_j$ transitions to the \textit{Dead-State} if $p_j$ in some duration $\mathbb{R}$ does not see echoes of its messages from at least $2f+1$ processes. }
\end{definition}

\vspace*{-5pt}\textcolor{red}{We give now a brief intuition behind each of the four cases. In Case 1,  the first condition ensures that processes, which cannot communicate with a Byzantine quorum, crash themselves. 
The second condition avoids the denial of service attack that a Byzantine process can launch. 
For Case 2 and Case 3, consider a setting where a correct process $p$ aggregates $2f+1$ signatures, delivers a message and directly loses communication with other processes.} Now assume that other processes detect that the broadcasting process is lying. These other processes do not kill themselves (by Case 1). If $p$ stays alive this would violate agreement. Case 2 eliminates this~violation.  Case 3 eliminates a violation in agreement when the processes that detect a lie are fewer than $f+1$ and get disconnected temporarily from the other correct processes that deliver some~$v$. 
Case 4 makes sure that any process whose messages are not seen by (at least) $2f+1$ processes in any duration of value $\mathbb{R}$ crashes itself. This case 4 is used to allow the detection of processes that crash themselves (see Section~\ref{detect dead-state}).

\textcolor{red}{We present in Appendix~B 
the proof that \textit{RT-ByzCast} implements the RTBRB abstraction.}

\vspace*{-10pt}\section{Evaluation}\label{Evaluation}

\textcolor{red}{We now present an analytic performance study of \textit{RT-ByzCast}. We evaluate first the probability of having a correct process in \textit{RT-ByzCast}, enter the \textit{Dead-State}, i.e., crash itself.} 

\vspace*{-10pt}\subsection{Probability of a Correct Process Shutdown}\label{process shutdown}
The probability of a correct process crashing itself is crucial, as it may hinder the reliability of the whole system, namely the system's liveness. For instance, in systems where $n=3f+1$ (which are a subset of systems with $f<\frac{n}{3}$), having a single correct process crash itself can result in a cascade of self-induced crashes leading the whole system to shutdown. This cascade is caused by the fact that one correct process crashing itself may prevent correct processes ever gathering the required $2f+1$ quorum, e.g., if in the worst case all $f$ processes are Byzantine in addition to the correct process that crashed itself. 



In our simulation setting, we first assume that all messages can be lost/omitted independently and with the same probability, i.e., we assume that $P^{o}_{ij}(t)=p~\forall~ t,i\neq j$. \textcolor{red}{In Appendix~C, we study the effect of having correlated losses/omissions that may result in bursts and show that a large enough slack window masks losses/omissions regardless of correlations and preserves drawn conclusions under the independent assumption}. We describe the simulation setting in which we perform our evaluation. 




Let us denote by $\mathcal{C}$, the minimum set of non-Byzantine processes, i.e,  $|\mathcal{C}|=n-f$. We run our simulations for $|\mathcal{C}|~\in~\{5,10,20,30,40,50,100,200\}$ processes. For each value of $|\mathcal{C}|$, we consider various probability values with which a sent message can be lost/omitted. 
 Namely, we consider the probability of losing/omitting a message sent at any point in time to be $p~\in~\{10^{-6},10^{-5},10^{-4},10^{-3},10^{-2},10^{-1}, 0.2, 0.3,0.4,0.5,\\0.6,0.7,0.8,0.9\}$. For a given value of $|\mathcal{C}|$ and $p$, we invoke a broadcast at one of the processes and record, after $\mathbb{R}$ rounds of communication, if any process does not receive $|\mathcal{C}|$ signatures on the value being broadcast. We repeat such an instance $10^6$~times. We~report our results showing: $$\resizebox{0.95\hsize}{!}{$\frac{\text{num. of instances in which some correct process crashes itself}}{10^6}$},$$ for $\mathbb{R} \in~\{5,6,10,15\}$ rounds respectively. We select these values of $\mathbb{R}$ to show that a well chosen fixed value regardless of system size and loss probability allows our \textit{RT-ByzCast} algorithm to be reliable. 

\begin{figure}[t] 
	\begin{center}
		\includegraphics[scale=0.2]{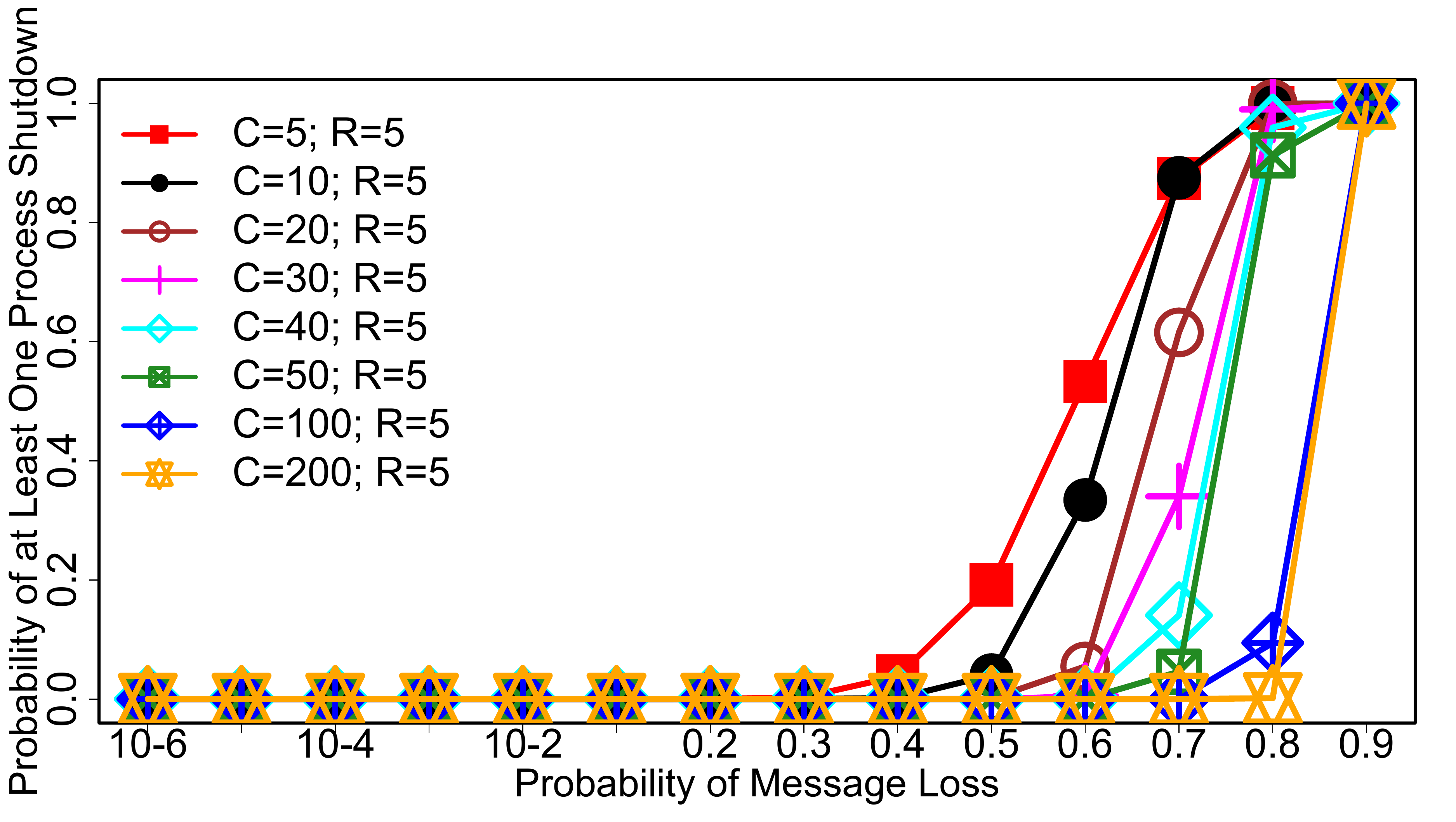}
		\begin{center}
			\vspace*{-10pt}
			\caption{The probability of a correct process crashing itself in a system after $10$ communication rounds ($\mathbb{R}=5$).}\label{pshut5}
			\vspace*{-15pt}
		\end{center} 
	\end{center}
\end{figure}
\begin{figure}[t] 
	\begin{center}
		\includegraphics[scale=0.2]{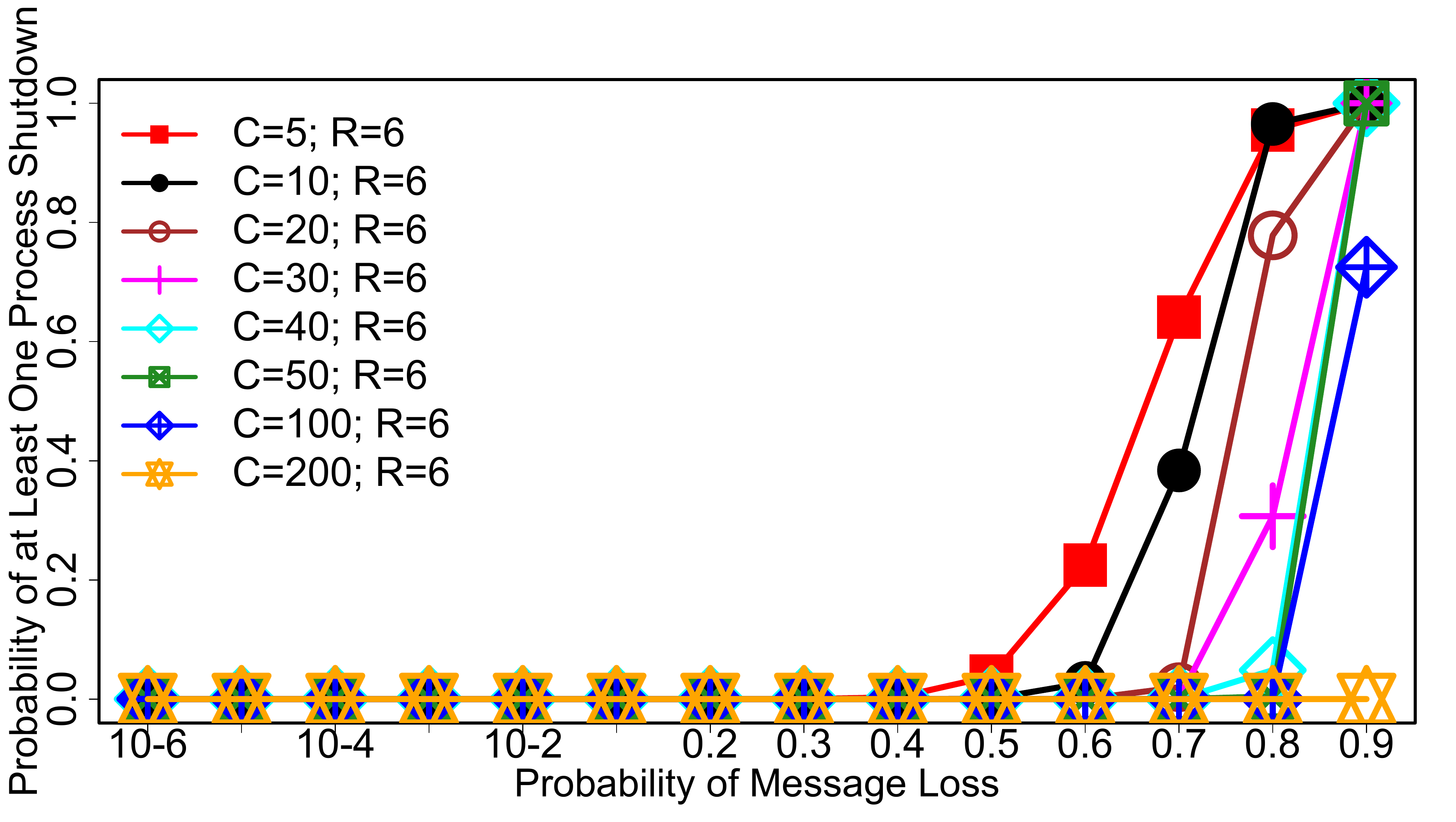}
		\begin{center}
			\vspace*{-10pt}
			\caption{The probability of a correct process crashing itself in a system after $10$ communication rounds ($\mathbb{R}=6$).}\label{pshut6}
			\vspace*{-15pt}
		\end{center} 
	\end{center}
\end{figure}
\begin{figure}[t] 
	\begin{center}
		\includegraphics[scale=0.2]{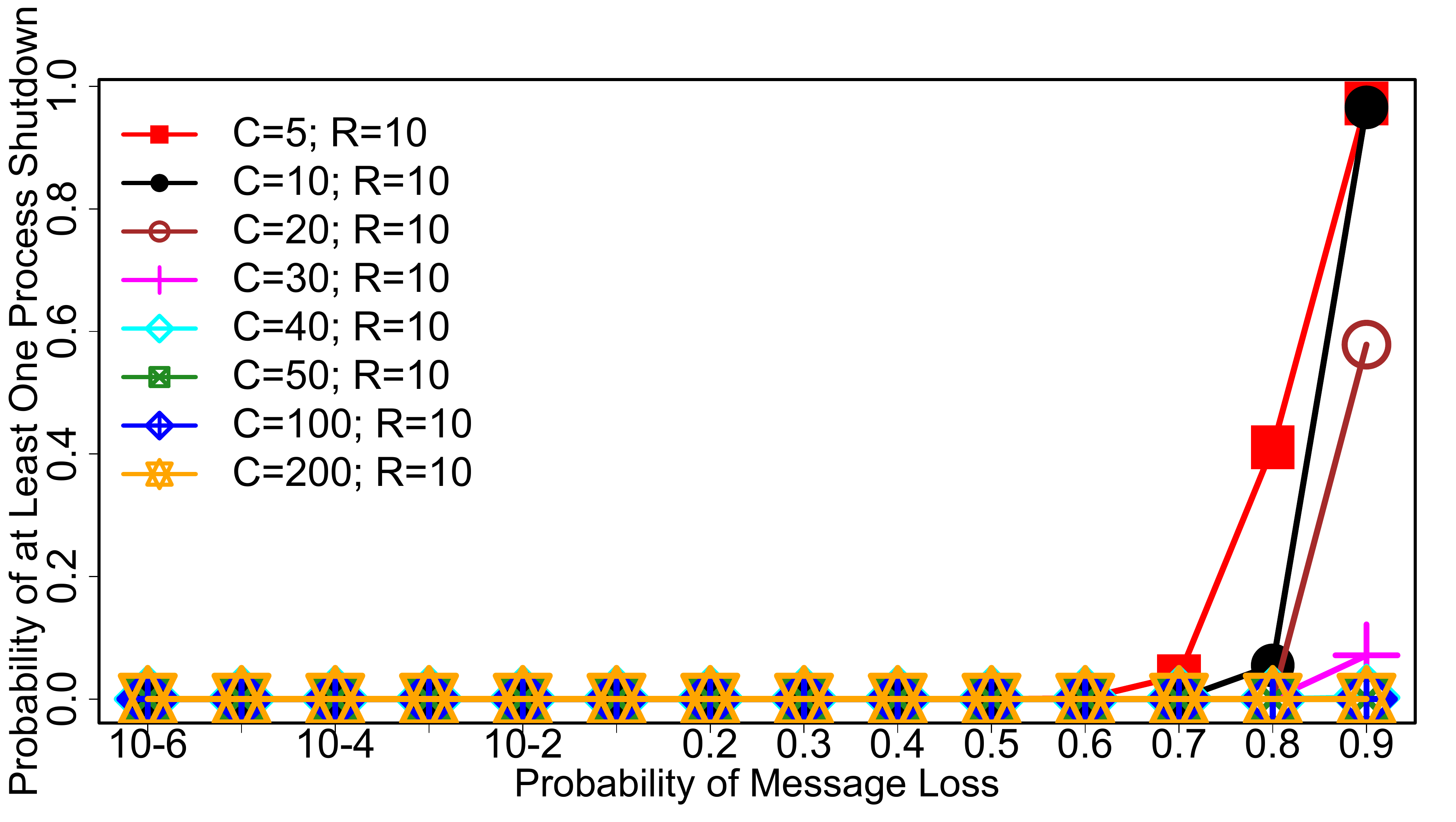}
		\begin{center}
			\vspace*{-10pt}
			\caption{The probability of a correct process crashing itself in a system after $10$ communication rounds ($\mathbb{R}=10$).}\label{pshut10}
			\vspace*{-15pt}
		\end{center} 
	\end{center}
\end{figure}
\begin{figure}[t] 
	\begin{center}
		\includegraphics[scale=0.2]{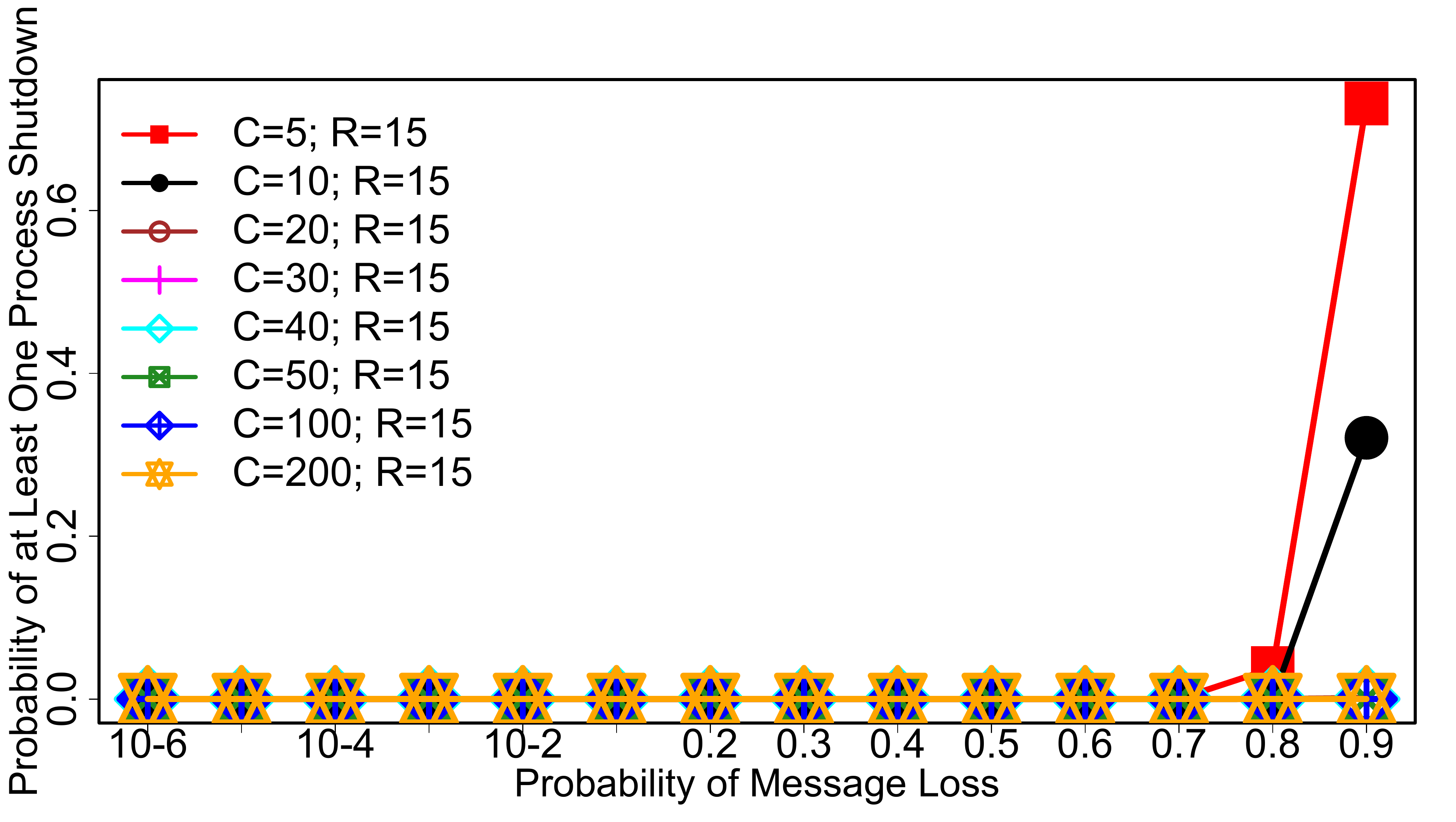}
		\begin{center}
			\vspace*{-10pt}
			\caption{The probability of a correct process crashing itself in a system after $15$ communication rounds ($\mathbb{R}=15$).}\label{pshut15}
			\vspace*{-15pt}
		\end{center} 
	\end{center}
\end{figure}
Our results in Figure~\ref{pshut5}, Figure~\ref{pshut6}, Figure~\ref{pshut10}, and Figure~\ref{pshut15} show that our \textit{RT-ByzCast} algorithm achieves a higher reliability, i.e., a smaller probability of having a correct process crash itself, as the number of non-Byzantine processes, $|\mathcal{C}|$, increases. This aspect of our algorithm is fundamental to the typical applications of the target CPS systems we envisage, e.g., the smart grid, where computing sensing and communication devices are deployed on a large scale.

\textcolor{red}{Moreover, from a different perspective, namely by fixing $|C|$ and varying the window size ($\mathbb{R}$), we show in Appendix~D that increasing the number of communication rounds (i.e., the value of $\mathbb{R}$) increases the reliability of our \textit{RT-ByzCast} algorithm for any number of correct processes.} In fact, with $\mathbb{R}=10$ the probability of a process crashing itself becomes negligible even with up to $60\%$ losses/omissions rate.

\vspace*{-10pt}\subsection{Tolerating A Correct Process Self-Crash}\label{system shutdown}
Despite the small probability of having a process crash itself, we provide in this section a means to allow the system using \textit{RT-ByzCast} to tolerate processes self-crashing while still being available. Let us respectively denote by $Prob(p_{shutdown})$ and $Prob(sys_{shutdown})$ the probability that a process kills itself and the probability that the whole system shuts down. A system where $n=3f+1$ shuts down, in the worst-case, as soon as a single process crashes itself. Hence, an upper bound on the probability that the whole system shuts down can be formally expressed by:
\begin{equation}\label{leastr}
\resizebox{0.94\hsize}{!}{$
	\begin{split}
	Prob(&sys_{shutdown})= 1- (1-Prob(p_{shutdown}))^{2f+1}.
	\end{split}$}
\end{equation}

A system using our \textit{RT-ByzCast} algorithm can be made more reliable while tolerating the maximum number of Byzantine processes. Precisely, we can allow the system to tolerate a single correct process crashing itself, after which any additional correct processes that crashes itself would trigger the whole system to shutdown. To do so, for any desired value of $f$, the maximum number of tolerable Byzantine processes, we choose the total number of processes to be $n =3f+3.$ Note that selecting $n$ as such means that the Byzantine quorum now becomes of size $2f+2$ rather than $2f+1$. Correct processes can gather such a quorum even if one correct process crashes itself and all $f$ processes are~Byzantine. 

With this optimization, the probability that whole system shuts down takes a new value, which is expressed by:
\begin{equation}\label{modr}
\resizebox{0.96\hsize}{!}{$
	\begin{split}
	Prob(&sys_{shutdown})=1-  \\&\sum_{k=0}^{1}\dbinom{n-f}{k} (Prob(p_{shutdown}))^{k} (1-Prob(p_{shutdown}))^{n-f-k}.
	\end{split}$}
\end{equation}

From Figure~\ref{pshut10}, a correct process crashes itself  with a probability~$<~10^6$ in systems using more than 5 correct processes, $\mathbb{R} = 10$ rounds, and experiencing  network omission rate of up to $60\%$. Accordingly, we perform a numerical analysis of the probability that the whole system shuts down as expressed in \eqref{leastr}, and \eqref{modr} and we show our results in Figure~\ref{shutdown}. 

The graphs of Figure~\ref{shutdown}  highlight the increase in reliability between the two system design choices showing that a system implementing \textit{RT-ByzCast} operates with a very high reliability (negligible probability of system shutdown).

Enhancing the reliability of the system further, by allowing it to tolerate more correct processes crashing themselves, is possible. However, doing so requires the system to have a mechanism of globally detecting aborted processes that crash themselves. We further elaborate on that in what follows. 

\begin{figure}[t] 
	\begin{center}
		\hspace*{-8pt}
		\includegraphics[scale=0.22]{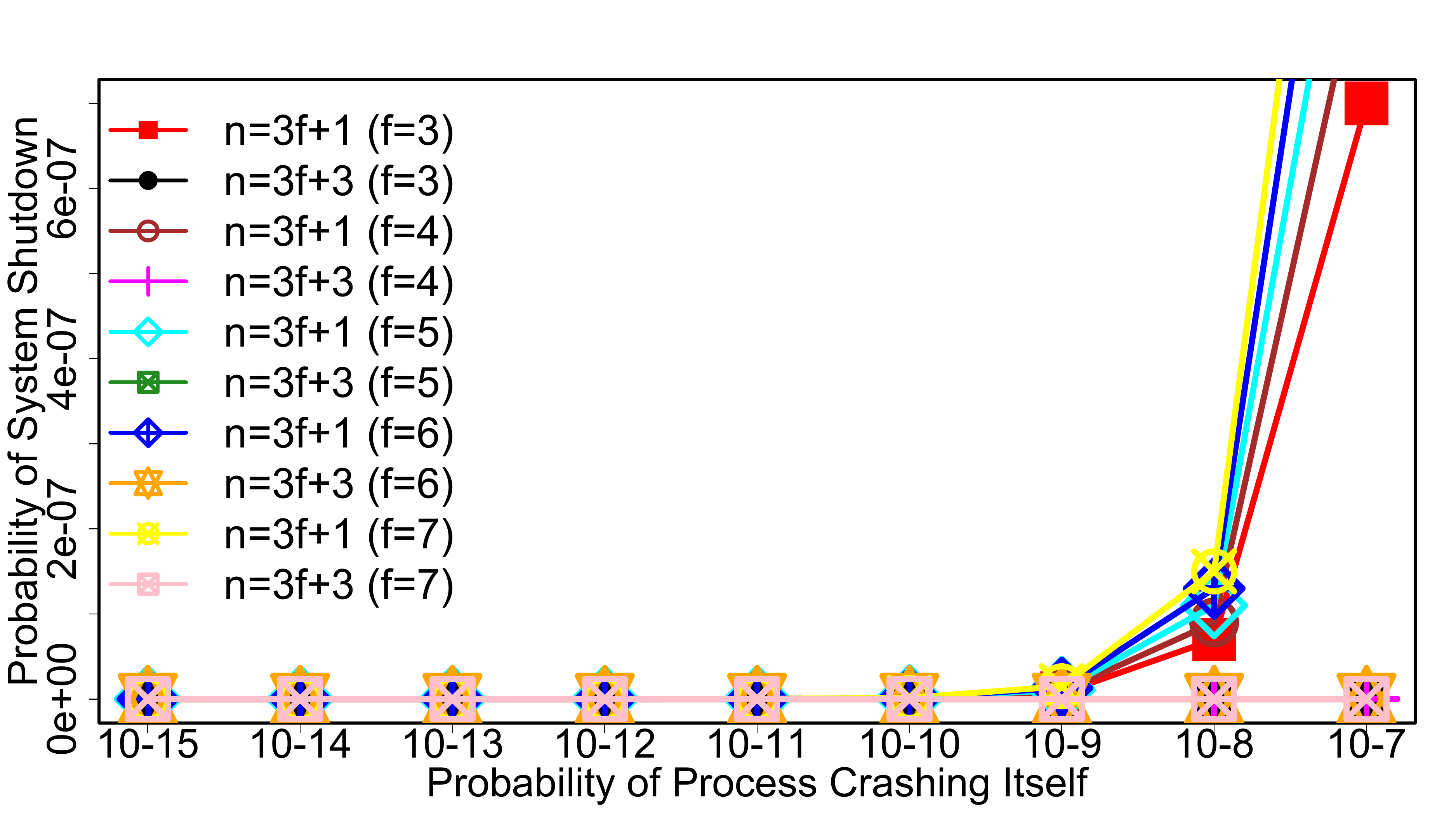}
		\begin{center}
			\vspace*{-10pt}
			\caption{The probability of the whole system shutting down, when $\mathbb{R}=10$, w.r.t. different values of the probability of a correct process crashing itself under $60\%$ network loss rates (taken from Figure~\ref{pshut10}).}\label{shutdown}
			\vspace*{-15pt}
		\end{center} 
	\end{center}
\end{figure}

\subsection{Tolerating Any Number of Processes Self-Crashing}\label{sec: multiple crashes}
In this section, we introduce a way that allows us to design, using \textit{RT-ByzCast}, systems that can tolerate any number of correct processes crashing themselves. The solution is based on two main things: (i) over-provisioning and (ii) self-crash detection. To over-provision in system design, namely means to include additional redundant process replicas. In other words, to tolerate $f$ Byzantine processes and $rep$ simultaneous process self-crashes in the system ($\forall f,rep$) we choose the total number of processes, $n$, to satisfy  $n~=~3f~+~rep+1.$

Processes then will execute our \textit{RT-ByzCast} using Byzantine quorums depending on the value of $n$, which can only monotonically decrease (due to processes crashing themselves). Designing a system as such means that the system can tolerate up to $rep$ correct processes crashing themselves simultaneously (for any value of $rep$). 

However, in order to support system liveness for up to $rep$ correct processes crashing themselves, the system would need to detect, within $\mathbb{R}$ rounds after having crashed, every process that crashed itself. This detection allows other correct processes to update $n$ and the needed quorums, without resulting in a cascade of self crashes leading to system shutdown. To this end, we describe in what follows a mechanism to detect crashed processes within $\mathbb{R}$ rounds after~crashing.

\subsubsection*{Detecting a Correct Process in the \textit{Dead-State}}\label{detect dead-state}

We recall that when a process $p_i$ executes the proof-of-life function as well as when that process issues a broadcast, $p_i$ needs to issue a message that should be seen by $2f+1$ processes within a window of $\mathbb{R}$ rounds; otherwise $p_i$ would crash itself (case 4 of Section~\ref{RT-ByzCast}). In this sense, having any set of $2f+1$ processes not hearing from $p_i$ in a window of $\mathbb{R}$ rounds indicates that $p_i$ has crashed itself (or is Byzantine). Note that a process $p_j$ can hear from $p_i$ either directly or via other processes. 

The algorithm for detecting process self-crashes (and Byzantine processes mimicking a self-crash behavior) works as follows. Every process now includes in its message the round number during which it has sent the message. Every process $p_i$ appends to every message it sends a list $\mathcal{L}_{p_i}$ of size $n$, relative to the $n$ processes in the system. $\mathcal{L}_{p_i}$ contains, for every process $p_j \in \mathit{\Pi}$, the time-stamp relative to last time that $p_j$ sent a message (based on $p_j$'s messages that $p_i$ received). Note that a stored time-stamp relative to process $p_j$ should have an associated proof of validity, that being the signature of $p_j$ on a message holding that time-stamp (invalid time-stamps are ignored). Every process merges all the $\mathcal{L}$ lists received such that the time-stamp with largest valid value for each process is kept. In every round $r:r>\mathbb{R}$, if a process $p_i$ receives $2f+1$ lists such that the time-stamp relative to some process $p_j$ is less than the current round minus $\mathbb{R}$ in all received $\mathcal{L}$, then $p_i$ does the following:
(1) claims $p_j$ as crashed, (2) updates $n$ to $n-1$ and updates as well the new corresponding Byzantine quorum size, and (3) ignores all messages relative to $p_j$ that may arrive at a later time. 





\vspace*{-10pt}\subsection{\textcolor{red}{Performance of \textit{RT-ByzCast}}}\label{key}

\textcolor{red}{In addition to evaluating the reliability of \textit{RT-ByzCast}, we evaluate how fast it can deliver messages. In Appendix~E, we also show the incurred cost that \textit{RT-ByzCast} has on the network bandwidth. \textit{RT-ByzCast} has a latency of at most $3\mathbb{R}$ to deliver a message to all correct processes. Recall that we define $\mathbb{R}$ as a slack period of collective re-transmissions of some message $m$ sent by process $p$. The value of $\mathbb{R}$ should be chosen such that $p$ can know with high probability that at least $2f+1$ processes received $m$.}  


\textcolor{red}{We now simulate the value of $\mathbb{R}$ (in communication rounds) as follows. We initially set $\mathbb{R}=1$ round and run our reliability experiments for $2\times 10^{5}$ times. We gradually increase the value of $\mathbb{R}$ until no node aborts in all $2\times 10^{5}$ repetitions. In our experiments, we assumed the worst case situation where all ``$f$" nodes are Byzantine and can obstruct our algorithm by not forwarding messages' signatures. Figure~\ref{fig:rounds} represents the value  of $\mathbb{R}$ for 10, 50, 100 and 200 nodes, depending on the probability of message loss.  First, as one may expect the empirical value of $\mathbb{R}$ increases with the probability of message loss. Second, we observe that increasing the system's size decreases the value of $\mathbb{R}$. This may appear unexpected, however, it is due to the fact that nodes forward messages in an all-to-all manner where messages can be probabilistically lost. Therefore, message losses have a lower impact on the broadcast delay in large scale systems. For example, with 90\% message losses, $\mathbb{R}$ is 41 (7) rounds with 10 (200) nodes resp.} 

\begin{figure}[t] 
	\begin{center}
		\includegraphics[width=0.85\columnwidth]{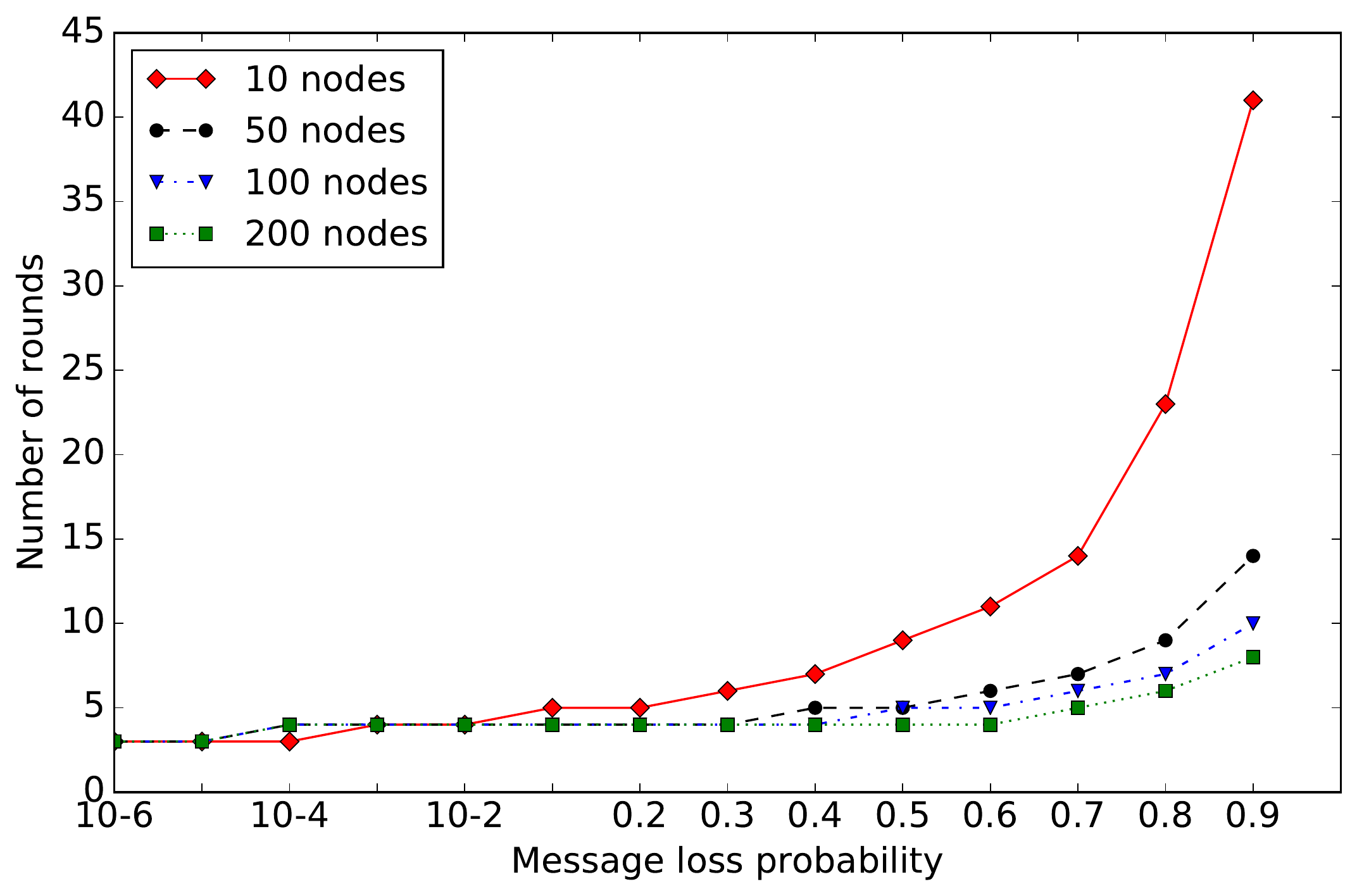}
		\begin{center}
			\vspace*{-10pt}
			\caption{\textcolor{red}{The size of the time window ($\mathbb{R}$) depending on the system size and the message loss probability.}}\label{fig:rounds}
			\vspace*{-15pt}
		\end{center} 
	\end{center}
\end{figure}

\textcolor{red}{To obtain the latency of \textit{RT-ByzCast}, we empirically evaluate the worst case computational delay ($d_{max}$) needed per node per round. This delay includes the time a node needs to process all received messages and prepare the corresponding messages to be sent.
Nodes use the ecdsa-256 encryption scheme~\cite{johnson2001elliptic}, which is known for being fast and for having relatively short signatures. The overall worst case latency of \textit{RT-ByzCast} would then be computed as $3\times\mathbb{R}\times d_{max}$. We illustrate this overall latency in Figure~\ref{fig:broadcasttime}. We observed from our experiments that having more Byzantine nodes in the system puts a higher computing load on correct nodes.  The variations in Figure~\ref{fig:broadcasttime} for a given system size result from two main factors: (i) the value of $\mathbb{R}$ increases with message losses (Figure~\ref{fig:rounds}), and (ii) the value of $d_{max}$ decreases with higher message losses, depending on the maximum number of simultaneously received messages per round. With low loss rates (from $10^{-6}$ to $10^{-4}$ in Figure~\ref{fig:broadcasttime}), a slight increase in the value of $\mathbb{R}$ increases the broadcast time noticeably. As message losses become more frequent (e.g., from $10^{-4}$ to $0.5$) the value of $d_{max}$ decreases, reducing the broadcast delay, until the value of $\mathbb{R}$ increases sufficiently to compensate and increase again the broadcast time (e.g., from $0.5$ to $0.9$).}


\textcolor{red}{Our numbers in Figure~\ref{fig:broadcasttime} show that without any performance optimizations, and despite message losses, \textit{RT-ByzCast} can meet the timing constraints of various applications, such as (i) power system automation and substation automation applications (IEC 61850-5 standard~\cite{IEC61850}), e.g., transfer of automation functions (TT3 class messages) $\approx 100$~ms (systems size $\leq 50$), slow speed auto-control functions, time-tagged alarms, event records,  set-values read/write operations (TT4 class messages) $\leq 500$~ms, (ii) continuous control (e.g. temperature-driven) applications $\leq 1$~s~\cite{Dav},  and (iii) operator commands of SCADA applications ($\leq 2$ s)~\cite{Moga:2016}.}

\textcolor{red}{However, we believe that 
\textit{RT-ByzCast}'s latency has a substantial margin for
improvement, whose avenues we discuss in the conclusions.}



\begin{figure}[t] 
	\begin{center}
		\includegraphics[width=0.85\columnwidth]{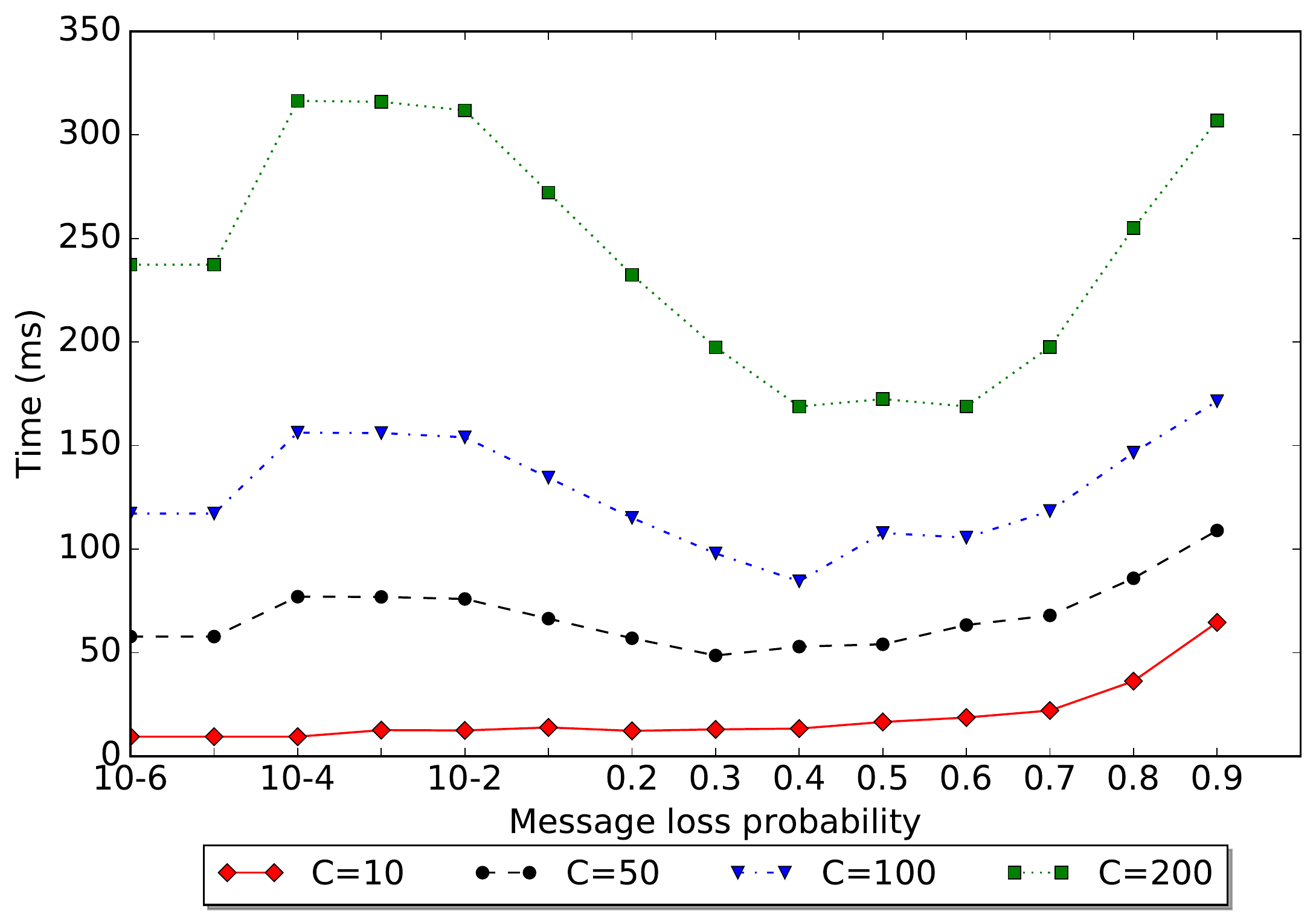}
		\begin{center}
			\vspace*{-10pt}
			\caption{\textcolor{red}{Total broadcast time depending on the system size and the message loss probability.}}\label{fig:broadcasttime}
			\vspace*{-15pt}
		\end{center} 
	\end{center}
\end{figure}

\vspace*{-10pt}\section{Reviving Processes in The \textit{Dead-State}}\label{Dead-State Revival}

In order to allow processes to leave the \textit{Dead-State}, we propose a modification to our \textit{RT-ByzCast} algorithm. A process $p_i$ that issues a broadcast at some round $r$ sends now \texttt{RTBRB-broadcast($p_i,r,v;\Phi_{p_i}$)} instead of \texttt{RTBRB-broadcast($p_i,v;\Phi_{p_i}$)}. In other words, $p_i$ includes the round in which the broadcast is issued (first sent) in the payload of the \texttt{RTBRB-broadcast()} message that $p_i$ sends. The behavior of our \textit{RT-ByzCast} algorithm of Section~\ref{RT-ByzCast} remains the same.

A process in the \textit{Dead-State} mimics the behavior of a crashed process in the \textit{fail-stop} model~\cite{Schneider:1984:BGA:190.357399}. However, unlike an actual crashed process, a process $p_i$ in the \textit{Dead-State} continues to listen and process received messages. In other words, a process $p_j$ that is in the \textit{Dead-State} executes our \textit{RT-ByzCast} algorithm, however with the exception that $p_j$ refrains from sending any message. If process $p_j$ (in the \textit{Dead-State}) detects that some value $v$ relative to process $p_i$ has been signed by at least $2f+1$ processes (i.e., $|\mathcal{R}_{echo}(p_i,r,v)|> 2f$), then $p_j$ performs a check: if the current round, $r_{cur}$, satisfies $r_{cur}-2\mathbb{R}>r$, where $r$ is the round in which $v$ was issued, then $p_j$ delivers the message $v$ (if it has not done so), shifts to the \textit{Alive-State}. Then $p_j$ transmits during every round in $[r_{cur}+1,r_{cur}+1+\mathbb{R}]$ the message \texttt{Deliver($(p_i, r,v,\textcolor{red}{sigs});*$)} and resumes as in Section~\ref{RT-ByzCast}.


\vspace*{-10pt}\section{The Case of a Dynamic System}\label{dynamic system}
So far in the paper, we have discussed systems where the maximum number of processes $n$ is static and known to all processes. In this section, we consider a dynamic system where the maximum number of processes in the system is unknown, as processes can randomly join and leave. For consistency, as the system now allows processes to join at any point in time, we assume the following: a process that transitions to the \textit{Alive-State} at round $r$ is not required to deliver any messages that were broadcast at rounds $<r$.

We consider that correct processes now can exist, at any point in time, in one of three possible states: \textit{Alive-State}, \textit{Dead-State}, and \textit{Pending-State}. Only those processes that exist in the \textit{Alive-State} are the ones that should guarantee the RTBRB properties. We assume that initially $x>3$ processes exist and are awakened by the clock initialization signal. We assume that these processes constitute a \textit{trusted pool} and that their identity is known to any process that wishes to join the system later. Some processes within the trusted pool, at most $\lfloor\frac{x-1}{3}\rfloor$, can be Byzantine. Our trusted pool can be viewed as a trusted authority, which is a widely adopted assumption in Byzantine group communication protocols~\cite{Miguel-correia2002efficient-byzantine-resilient-116,reiter}. In fact, in the absence of a centralized
admission  control  scheme, a  trusted entity  becomes a necessity for allowing joins in an intrusion-tolerant system~\cite{Danezis:2005:SDR:2156732.2156750}.  Processes in the trusted pool constitute the initial system and execute our \textit{RT-ByzCast} algorithm. 

Hence, the total number of processes, $n$, is initially equal to $x$ ($n$ however is not fixed). $n$ is the number of processes that exist in either the \textit{Alive-State} or the \textit{Dead-State}. We consider that at any point in time at most $f=\lfloor\frac{n-1}{3}\rfloor$ can be Byzantine. 

We assume that every process maintains a list $\mathcal{ID}$ that contains the ids of the processes in the system. Initially the $\mathcal{ID}$ list contains the ids of all processes in the trusted pool. A process $p_j$ that wants to join the system at round $r$, initializes its state to the \textit{Pending-State}, sets $n=x$, and defines $\mathcal{ID}$ as the list of ids of all processes in \text{trusted pool}. 

$p_j$ begins to send at the beginning of every round in $[r+1,r+1+\mathbb{R}]$ the heartbeat message, \texttt{HB($\{p_j,key_j,r\};\Phi_{p_j}$)}, to the processes in the trusted pool. Upon the receipt of a \texttt{HB($\{p_j,key_j,r\};\Phi_{p_j}$)} in round $r':r'\in[r+1,r+1+\mathbb{R}]$ a process $p_i$ in the trusted pool executes the following, if $p_i$ is not sending some message for another potential joining process. $p_i$ verifies the received heartbeat. Then $p_i$ creates a \texttt{Join($\{p_j,key_j,r\}\{n,\mathcal{ID}\};\Phi_{p_j},\Phi_{p_i}$)} message and sends this message in rounds $r'':r''\in[r'+1,r+1+\mathbb{R}]$ it to all processes in the trusted pool, and to $p_j$ as well.

If a process $p_k$ in the trusted pool receives \texttt{Join($\{p_j,key_j,r\}\{n,\mathcal{ID}\};\Phi_{p_j},...,\Phi_{p_i}$)} at some round $r':r'\in[r+1,r+1+\mathbb{R}]$, $p_k$ verifies the message received and validates that the values of $n$ and $\mathcal{ID}$ match with its local values. Then, considering that $\mathcal{SIG}(p_k)$, contains the signatures that $p_k$ received so far relative to $\{p_j,key_j,r\}\{n,\mathcal{ID}\}$ ($p_k$'s signature is included in $\mathcal{SIG}(p_k)$),  process $p_k$ executes $\mathcal{SIG}(p_k) = \mathcal{SIG}(p_k)\bigcup\{\Phi_{p_j},...,\Phi_{p_i}\}$. If $\mathcal{SIG}(p_k)$ contains signatures from at least $\lfloor\frac{x-1}{3}\rfloor +1$ processes, then $p_k$ sends in rounds $r'':r''\in[r'+1,r+1+\mathbb{R}]$ \texttt{Join($\{p_j,key_j,r\}\{n,\mathcal{ID}\};\mathcal{SIG}(p_k)$)} to all processes in the trusted group and to $p_j$ as well. 
	

The joining process $p_j$ also executes the same steps when receiving  \texttt{Join($\{p_j,key_j,r\}\{n,\mathcal{ID}\};\Phi_{p_j},...,\Phi_{p_i}$)} messages. If at some round $r'\leq r+1+\mathbb{R}$, $\mathcal{SIG}(p_j)$ relative to some $(n,\mathcal{ID})$ contains signatures from at least $2\lfloor\frac{x-1}{3}\rfloor +1$ processes in the trusted pool, $p_j$ transitions to the \textit{Alive-State}, updates its parameters accordingly (to be equivalent to the new values of $n+1$ and $\mathcal{ID}\bigcup p_j$), and invokes an \texttt{RTBRB-broadcast(join$(p_j,key_j)$;$\Phi_{p_j}$)}. Otherwise, i.e., if $p_j$ fails to collect enough signatures on some values of $n$ and $\mathcal{ID}$ by round $r+1+\mathbb{R}$, then $p_j$ remains in the \textit{Pending-State}, quits sending any messages, and tries to join the system again after a random duration~elapses.

All processes that take part of the \texttt{RTBRB-broadcast(join$(p_j,key_j)$;$\Phi_{p_j}$)} do not count $p_j$'s signature as part of any quorum (but simply use it for verifying the validity of the message sent). Any process that \texttt{RTBRB-delivers(join$(p_j,key_j)$;*)}, except $p_j$, updates the set of public keys to include $key_j$ and the local parameters to $n=n+1$ and  $f=\lfloor\frac{n-1}{3}\rfloor$, $\mathcal{ID}\bigcup p_j$. That concludes the joining procedure for processes. 
A process $p_l$ that wants to leave the system, issues an \texttt{RTBRB-broadcast(leave$(p_l,key_l)$;$\Phi_{p_l}$)}. $p_l$ remains in the \textit{Alive-State} until $p_l$ delivers the message relative to its leave request broadcast, after which $p_l$ shifts to the \textit{Dead-State}. If it wishes, $p_l$ can join the system later.

\vspace*{-10pt}\section{RT-ByzCast Potential Application Domains}\label{RT-ByzCast applicability}
The problem we address --- disseminating information in a real-time Byzantine-resilient manner given communication synchrony disruptions --- is of fundamental importance to any critical distributed system seeking dependability and security in the CPS context. RTBRB is important to, e.g., (1)~achieve a real-time common view of the state of the system by independent processes, (2)~distribute sensory information in a coherent way or (3) help ensure the consistency of replicas driving a same end-component, e.g., an actuator. 

The difficulty of the problem lies in ensuring predictability and resilience despite  communication uncertainties, faults, and attacks that hinder the synchronism needed to meet real-time deadlines. In this sense, \textit{RT-ByzCast} showcases an algorithmic design solution to circumvent the impossibilities that prevent traditional modular distributed computing approaches from achieving real-time Byzantine-resilient information dissemination.  As a well-contained primitive, \textit{RT-ByzCast} can be integrated in implementations of existing systems and frameworks, for example those related to distributed control systems (DCSs)~\cite{Galloway} which include monitoring and control applications for factory automation, substation automation and smart grids. In fact, real-time execution platforms for DCSs, such as FASA~\cite{FASA,FTFT}, have architectures that adhere to the assumptions considered in this paper; thus \textit{RT-ByzCast} can be of substantial use to both the underlying system-monitoring algorithms as well as to the control applications being run (e.g., power control applications~\cite{Dav}). Other application systems include ship-board DCSs requiring distributed real-time data services~\cite{qosmanagement,jitrtr}, traffic control and agile manufacturing that require ``fresh" data reflecting real-world status~\cite{qosmanagement,ptudpr,rtpb,jitrtr}, and multi-gateway \emph{SmartData} construct protocols for CPSs on Wireless Sensor Networks or on the Internet of Things~\cite{guto}.

\vspace*{-5pt}\section{Related Work}\label{related work}

\textcolor{red}{The work in this paper has evolved from both~\cite{flaviu} and~\cite{AMP}, in what concerns the timing/synchrony aspect. In~\cite{flaviu}, it was assumed that all non-faulty processes remain connected synchronously, regardless of any process and network failures. This strong assumption about the network was too ideal, in terms of scale and timing behaviour. This resulted in a poor performance in practice (latency of $\approx 20$~s) that limit~\cite{flaviu}'s application areas. The limited performance is mainly attributed to deterministic network model (which in turn affected the solution design). Moreover, the system model in~\cite{flaviu} did not allow processes that malfunction (violates assumptions) to know that they are being treated as faulty by the model. 
 In contrast, RT-ByzCast provides latencies in the range of milliseconds (from few to few hundreds depending on system size and message loss). We achieve this performance gain by forcing processes to operate within the needed delays: we exclude processes that are incapable of meeting the desired timing requirements. Moreover, in our model, processes that violate timeliness assumptions transition to the \textit{Dead State} and hence are aware of their ``non-correctness" (in terms of time).}


\textcolor{red}{In~\cite{AMP}, the timeliness problem was addressed by what the authors called weak-fail-silence: despite the capability of the transmission medium to deliver messages reliably and in real-time, the protocol should not be agnostic of potential timing or omission faults (even if sporadic). This notion was embedded in a bounded omission specification that weakens the basic fail-silence (crash) hypothesis. In our paper, we make a significant advance, by taking a further step: providing these reliable real-time communication guarantees, in environments with much higher uncertainty levels (faults and attacks). In fact, the bounded omissions assumption of~\cite{AMP} could not be taken as is, if we were to tolerate such higher and more uncertain fault sets (as we consider): it could easily lead to system unavailability in faulty~periods.}

\textcolor{red}{The rest of the literature on broadcast primitives, to the best of our knowledge, either does not take into account timeliness and maliciousness or addresses them separately. We next summarize few such related works.}

\vspace*{-10pt}
\subsection{Byzantine Reliable Communication}

In order to withstand unpredictable security threats and software unreliability arising within networks, various works have investigated how to provide a reliable broadcast capable of tolerating arbitrary process behavior. A celebrated algorithm in distributed computing is that of Bracha and Toueg~\cite{Bracha:1987}, which implements reliable broadcast in an asynchronous system of $n$ processes with at most $f<\frac{n}{3}$ Byzantine processes (proved to be an upper bound on the number of tolerable Byzantine processes~\cite{Bracha:1985,Bracha:1987}). The broadcast algorithm of~\cite{Bracha:1987}, similar to the one proposed in this paper, relies on echoing enough messages before delivering a value. Later works tried to further optimize this algorithm, e.g., in terms of the total number of asynchronous communication rounds needed for termination~\cite{raynal2015}. On a different level Obenshain et al.~\cite{yair} design and construct an intrusion-tolerant overlay capable of tolerating Byzantine actions, based on the key understanding that no overlay node should be trusted or given preference. The authors use a maximal topology with minimal weights 
to prevent routing attacks at the overlay level and rely on source routing augmented with
redundant dissemination methods to limit the effect of compromised forwarder processes.%

Unlike our \textit{RT-ByzCast} algorithm, the solutions of~\cite{Bracha:1985,Bracha:1987,raynal2015,yair} do not provide timeliness guarantees on message delivery, even when applied to weakly synchronous networks such as the one we consider in this paper.
\vspace*{-10pt}
\subsection{Timeliness Communication Guarantees}

Given the need for predictable responsiveness in various applications today, many efforts have been made to devise timely broadcast (or routing/communication) algorithms, i.e., communication protocols with known fixed delays.

Timeliness on web communication, for example, has been addressed in content distribution networks~\cite{1019427,Pallis:2006:IPC:1107458.1107462} such as Akamai~\cite{Dilley:2002:GDC:613356.613741}. The concept behind such distribution networks is to place data in a close geographical proximity from its consumers and hence minimize latency. In a slightly different context, timeliness has been also addressed at the level of data center networks~\cite{Vamanan:2012:DDT:2342356.2342388,Wilson:2011:BNL:2018436.2018443}. The proposed solutions in this area mainly advocate modifications of network devices. Bessani et al. propose Jiter~\cite{jiter}, an application-layer routing, as a means to provide message latency and reliability assurances for control traffic in wide-area IP networks. Jiter routes deadline constrained control messages using an overlay network created on top of  multi-homed communication infrastructure. Babay et al.~\cite{yair1} present an overlay transport service that can provide highly reliable communication while meeting stringent timeliness guarantees. Their scheme relies on an analysis of real-world network data, upon which they develop timely dissemination graphs and specify targeted redundant transmissions to timeliness and reliability. 

Existing efforts have also investigated providing end-to-end guarantees in networks. Jacob et al.~\cite{endtoendreal-time} propose an  approach  to  integrate a  wireless  real-time  communication  protocol  into  CPS. Their approach decouples communication from application tasks. They devise a protocol based on dynamically establishing contracts between source/destination devices and the networking protocol. Another set of protocols known as resource  reservation  protocols (RSVP) combine flow specification, resource reservation, admission  control,  and  packet  scheduling  to  achieve  end-to-end QoS~\cite{rsvp}. Relying on software-defined networks, Kumar et al.~\cite{SDN} propose a framework that synthesizes network paths which can meet the requisite delay requirements for real-time flows.

On a different level Guerraoui at al.~\cite{Dav} devise an algorithm that allows processes to communicate in timely fashion via a real-time distributed shared memory (DSM). 
Using their DSM, write operations of a correct process become consistently visible to all alive processes within a known fixed delay. Their approach relies on bundling data messages with control traffic, namely traffic relative to the failure detector component used for monitoring processes in distributed control systems. 

However, all above solutions\cite{1019427,Pallis:2006:IPC:1107458.1107462,Dilley:2002:GDC:613356.613741,Vamanan:2012:DDT:2342356.2342388,Wilson:2011:BNL:2018436.2018443,jiter,yair1,endtoendreal-time,rsvp,SDN,Dav}, unlike \textit{RT-ByzCast}, cannot handle malicious process behavior and hence would fail if any process is compromised.




\vspace*{-10pt}
\section{Conclusion}\label{conclusion}
This paper studied how to realize the RTBRB abstraction: a
real-time Byzantine-resilient reliable broadcast in the presence of unbounded
communication losses and delays. We first showed that implementing RTBRB is challenging and in fact is impossible
under traditional paradigms that consider failure detection mechanisms
working independently of the distributed algorithms.  To circumvent it, we proposed
\textit{RT-ByzCast}, an algorithm that deploys temporal and spatial diffusion of messages, and signature aggregation in a sliding time-window to mask losses and
track process connectivity. \textit{RT-ByzCast} bounds network timing uncertainties by reconciling two mechanisms: cooperative round-based message retransmission
involving all processes; and proactive self-crashing of misbehaving
processes.  


We proved that \textit{RT-ByzCast} indeed implements the desired RTBRB
abstraction and we showed that it does so robustly and efficiently: we
evaluated the reliability of our algorithm, showing that it can
tolerate quite high loss rates ($\approx 60\%$) whilst still
delivering the real-time service and ensuring a negligible probability
that any correct process crashes itself, hence guaranteeing system
survivability. \textcolor{red}{We also showed that \textit{RT-ByzCast} meets the
timing constraints expected by a wide range of CPS and IoT applications,
whose latency requirements are normally inversely proportional to the
system size: RTBRB delays are in the order of the dozens of
milliseconds on average, ranging from few milliseconds to hundreds of milliseconds, as system size goes
from 10 to 200 nodes.
Furthermore, \textit{RT-ByzCast} design and minimal environment assumptions,
simplify its integration in existing architectures.} 


\textcolor{red}{By providing the first solution to the RTBRB problem in environments
where synchrony constraints co-exist with maliciousness, the
objective of this paper has been met: discovering and proving
impossibility results countering common beliefs about failure
detection, which will guide future research in these environments, as
well as developing algorithms to circumvent them which, whilst
non-optimised, already provide suitable performance for real world
CPS/IoT applications.}

\textcolor{red}{However, we believe
\textit{RT-ByzCast}'s latency has a substantial margin for
improvement, to meet more demanding timeliness needs, still in large
scale system deployments.  We plan to address these optimisations in
future work, such as using gossip paradigms to decrease the nodes'
transmission costs, and using multisignature/aggregate signature
schemes to amortize the costs of signatures.}



\vspace*{-5pt}\section{Acknowledgments}\label{acks}
This work is in part supported by the Fonds National de la Recherche Luxembourg through PEARL grant
FNR/P14/8149128.
\vspace*{-5pt}
\renewcommand{\baselinestretch}{0.93}
\bibliographystyle{IEEEtran}
\bibliography{RTBCast}
\vspace*{-20pt}
\renewcommand{\baselinestretch}{0.9}
\begin{IEEEbiography}[{\includegraphics[width=1in,height=1.25in,clip,keepaspectratio]{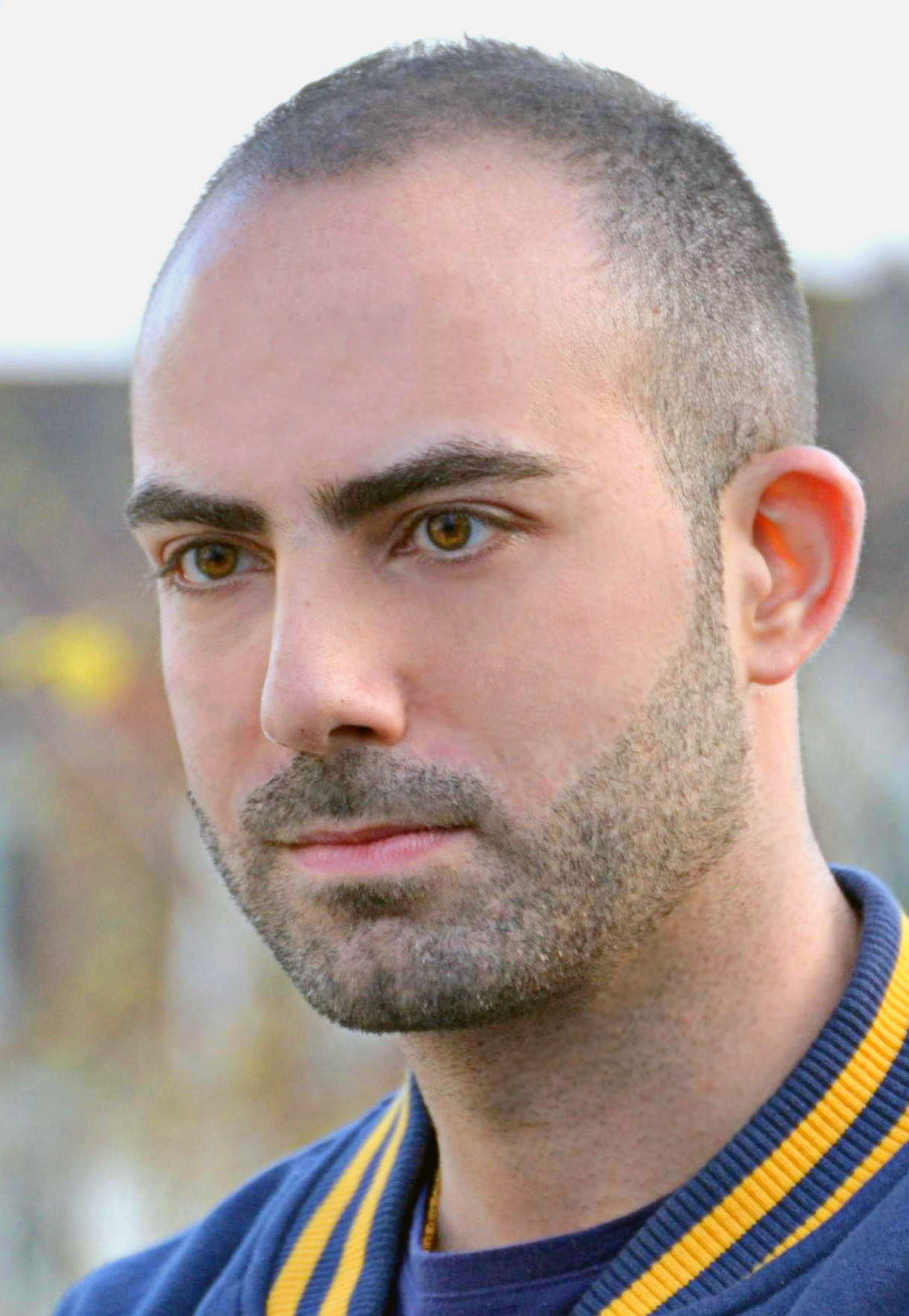}}]{David Kozhaya} 
	is a Scientist at ABB Corporate Research, Switzerland.
He received his PhD degree in Computer Science in 2016, from EPFL, Switzerland, where he was granted a fellowship from doctoral school. His primary research interests include reliable distributed computing, real-time distributed systems, and fault- and intrusion-tolerant distributed algorithms. His past work experiences span across interdisciplinary domains ranging from research, teaching programming languages and computer literacy, financial and market analysis, and management of non-profit organizations. 
\end{IEEEbiography}

\begin{IEEEbiography}[{\includegraphics[width=1in,height=1.25in,clip,keepaspectratio]{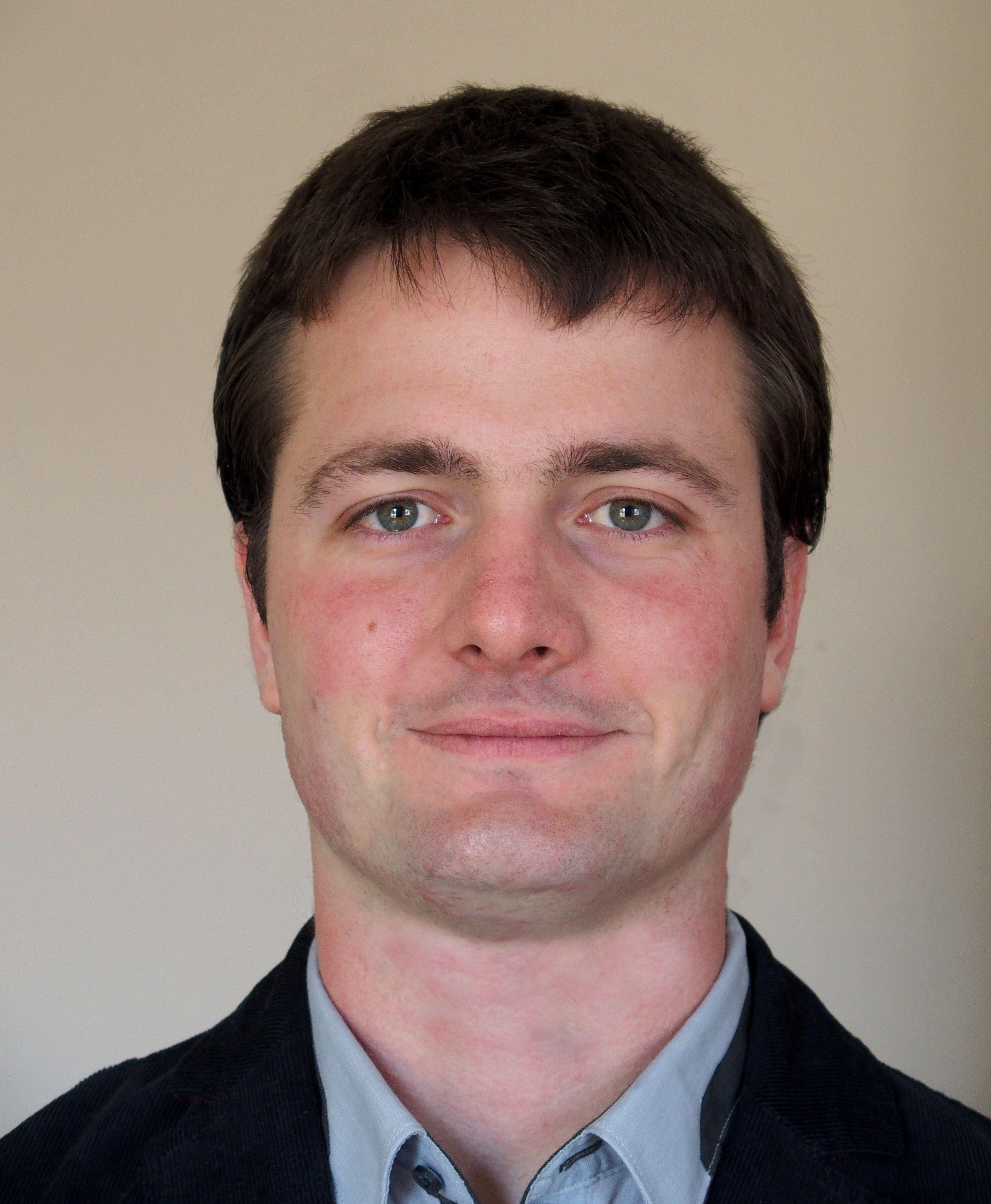}}]{J\'er\'emie Decouchant} \textcolor{red}{is a Research Associate at SnT, University of Luxembourg. He received his Ph.D in Computer Science from the University of Grenoble-Alpes, France. Before that he obtained an engineering degree (MSc) from the Ensimag engineering school, in Grenoble. His research focused on the design and analysis of mechanisms to protect distributed collaborative systems against selfish or Byzantine behaviors. More recently, he has been designing distributed and privacy preserving genomic information processing workflows.}
\end{IEEEbiography}

\begin{IEEEbiography}[{\includegraphics[width=1in,height=1.25in,clip,keepaspectratio]{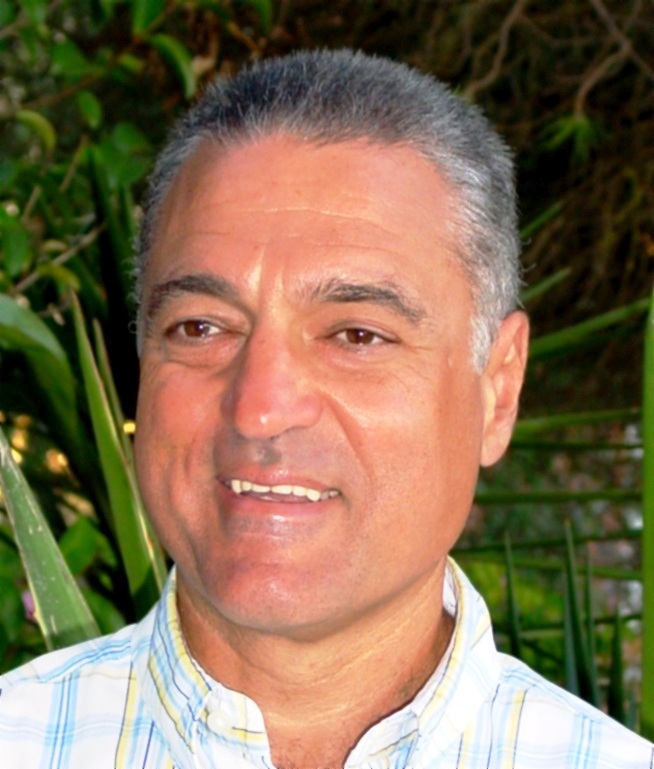}}]{Paulo Esteves-Verissimo} is a Professor and FNR PEARL Chair at the University of Luxembourg (UL), and head of the CritiX research group at UL's SnT Centre (http://wwwen.uni.lu/snt). He is adjunct Professor of the ECE Dept., Carnegie Mellon University. Previously, he has been a Professor of the Univ. of Lisbon. Verissimo is Fellow of IEEE and of ACM. He is Chair of the IFIP WG 10.4 on Dependable Computing and Fault-Tolerance and vice-Chair of the Steering Committee of the DSN conference. He is associate editor of the IEEE Transactions on Computers. He is interested in secure and dependable distributed architectures, middleware and algorithms for: resilience of large-scale systems and critical infrastructures, privacy and integrity of highly sensitive data, and adaptability and safety of real-time networked embedded systems. He is author of over 180 peer-refereed publications and co-author of 5 books.
\end{IEEEbiography}






	\appendices 
	\section{Proof of Theorem~2}\label{proof imp FD}
	\vspace*{-3pt}
	\textcolor{red}{Intuitively, Theorem~1 follows from Theorem~2, so we prove in this section Theorem~2.}
	We prove that it is impossible for an algorithm $\mathcal{A}$ that can obtain $\mathcal{L}$ (the list of non-suspected processes) to implement RTBRB-Validity and RTBRB-Timeliness, when $\mathcal{A}$ cannot change which processes are in $\mathcal{L}$. 
	
	By contradiction assume that an algorithm $\mathcal{A}$ implements RTBRB-Validity and RTBRB-Timeliness.  The $\Delta$ of the RTBRB-Timeliness is a fixed duration; hence any process can send a finite number of messages within that $\Delta$, e.g., $M$ messages. We now compute the probability that a correct process $p_i \in \mathcal{L}$ loses all $M$ messages sent to another process in $\mathcal{L}$. Recall that $P_{ij}(t)$ is the probability that link $l_{ij}$ loses a message at time $t$ $\forall i\neq j$. Let $P_{ij}(t\cap t')$ be the probability that link $l_{ij}$ loses the messages (if any is sent) at time $t$ and time $t'$. Since $0<P_{ij}(t)\leq 1$ $\forall t$, then 
	\begin{equation}\label{eq:p(x|t)}
	0<P_{ij}(t)=\frac{P_{ij}(t\cap t')}{P_{ij}(t'|t)}\leq 1~\forall~t',~t.
	\end{equation}
	By~\eqref{eq:p(x|t)}, $P_{ij}(t'|t)>0$ (and $0<P_{ij}(t\cap t')\leq 1$). By induction, we have $P_{ij}(t'|t,t_1,...,t_x)>0$ $\forall~t'>~t,t_{x}$. 
	Denote by $\mathsf{B_{ij}(t)}$ the event that link $l_{ij}$ loses all messages (if any is sent) during the interval $t+\Delta$. Let $t_{x}$ denote the times at which $p_i$ sends a message in $[t+\Delta]$. The probability of $\mathsf{B_{ij}(t)}$ happening is:
	\begin{equation*}
	\resizebox{\hsize}{!}{$
		\begin{split}
		Pr(\mathsf{B_{ij}(t)})&=P_{ij}(t_1 \cap t_2 \cap ...\cap t_M)\\&=P_{ij}(t_1)\times P_{ij}(t_2|t_1)\times ...\times P_{ij}(t_M|t_1,...,t_{M -1})>0.
		\end{split}$}
	\end{equation*}
	Given $0<P_{ij}(t)\leq 1~\forall t$ and $P_{ij}(t'|t,t_1,...,t_x)>0$ $\forall~t'>~t,t_{x}$, then we have $0<Pr(\mathsf{B_{ij}(t)})\leq 1$; there is a positive probability that the link connecting $p_i$ to $p_j$ ($p_i,p_j\in\mathcal{L}$) loses all messages sent by $p_i$ in any finite duration. Assuming independence between links, then there is a positive probability that all messages sent by $p_i$ in any duration $\Delta$ and to any finite number of processes are lost. Assume that $p_i$ is the process that invokes a broadcast. By RTBRB-Validity $p_i$ delivers its own message. However, no other process in $\mathcal{L}$ can deliver with probability 1, $p_i$'s message within any fixed duration after being broadcast.
	
	\section{\textit{RT-ByzCast} Proof of Correctness}\label{proof of correctness}
	We prove in what follows that our \textit{RT-ByzCast} algorithm guarantees all RTBRB properties (defined in Section~III).
	\subsection{RTBRB-Validity}
	
	\textcolor{red}{Consider that a correct process $p_i$ broadcasts message $m$ at round $r$. 
		\begin{lemma} \label{lemma: crashitself}
			At round $r+\mathbb{R}$, either: (i) $p_i$ can verify that at least $2f+1$ processes received its broadcast message, or (ii) $p_i$ crashes itself.
		\end{lemma}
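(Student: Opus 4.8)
The plan is to read off the claimed dichotomy directly from the self-crashing discipline that $p_i$ keeps applying \emph{while} it broadcasts, and to identify ``$p_i$ seeing echoes of its own heartbeats'' with ``$p_i$ having its broadcast acknowledged'', the bridge being the piggybacking mechanism of Remark~\ref{rem: piggyback}. Concretely, I would assume $p_i$ is still in the \textit{Alive-State} at round $r+\mathbb{R}$ and show that alternative~(i) must then hold; the converse (crashing) is alternative~(ii), so exactly one of the two always obtains.

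First I would recall that, upon invoking the broadcast at round $r$, $p_i$ does not stop running \textit{proof-of-life}$(\mathbb{R})$; it merely switches to piggyback mode, so each \texttt{RTBRB-broadcast} message carries $p_i$'s current heartbeat, and any correct process receiving this message triggers both its broadcast handler and its heartbeat handler, echoing $p_i$'s heartbeat together with the broadcast value and appending its own signature. The correspondence I must establish is this: for a process $p_j$, a valid signature of $p_j$ on an echo of $p_i$'s heartbeat is, because heartbeat and broadcast travel inside the same authenticated message, an unforgeable proof that $p_j$ also received $p_i$'s round-$r$ broadcast. By the authenticated-link and signature-unforgeability assumptions of the threat model, $p_i$ cannot be misled about this. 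Hence the set $\mathcal{R}_{HB}(p_i,\cdot)$ that $p_i$ accumulates from the heartbeat echoes it collects is precisely a set of processes whose broadcast receipt $p_i$ can verify.

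Next I would instantiate the crash rule. The final conditional of Function~\ref{alg: proof-of-life} (equivalently Case~4 of Definition~\ref{self-crash}) sends $p_i$ to the \textit{Dead-State} as soon as $|\mathcal{R}_{HB}(p_i,r')|\le 2f$ at a checked round $r'$. Evaluating this at $r'=r+\mathbb{R}$ and using $\Delta_{r}=\mathbb{R}$, the inspected window is $[r-1,\,r+\mathbb{R}-1]$, which contains round $r$, so the heartbeats piggybacked on the broadcast are exactly the ones being counted. Two cases result: if $|\mathcal{R}_{HB}(p_i,r+\mathbb{R})|>2f$, then by the correspondence above $p_i$ holds signatures proving that at least $2f+1$ processes received the broadcast (itself included, since its own signature sits on every one of its heartbeats) --- alternative~(i); otherwise $|\mathcal{R}_{HB}(p_i,r+\mathbb{R})|\le 2f$ and $p_i$ crashes itself --- alternative~(ii).

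The step I expect to require the most care is the timing argument: I must show that the echoes pertaining to the round-$r$ broadcast can actually reach $p_i$ by round $r+\mathbb{R}$ whenever they are not omitted, so that a genuine quorum of acknowledgements is never missed merely because the deadline is too tight. Here I would lean on the design constraint $\mathbb{R}\ge 2k+2$, which bounds a full round-trip even under up to $k$ consecutive omissions, together with the convention that messages carrying a round number $<$ the current round are discarded as late while earlier-than-deadline ones are retained. Verifying that the $\mathbb{R}$-window closing at $r+\mathbb{R}$ faithfully captures the outcome of the broadcast is the only non-bookkeeping part of the argument; everything else follows immediately from the definitions and from signature unforgeability.
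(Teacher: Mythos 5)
You have a genuine gap, and it comes from routing the dichotomy through the wrong crash rule. The paper's algorithm gives the broadcaster a dedicated bookkeeping set: upon broadcasting, $p_i$ initializes $\mathcal{R}_{echo}(p_i,r,v)$ and aggregates (via \textit{aggregate-sig}) the signatures carried by incoming \texttt{Echo()} messages --- signatures on the broadcast value $v$ itself. Case~1 of Definition~\ref{self-crash} then forces $p_i$ into the \textit{Dead-State} at any round $\geq r+\mathbb{R}$ at which $|\mathcal{R}_{echo}(p_i,r,v)|\leq 2f$, and the ``discovered a lie'' proviso of that case is discharged precisely by observing that $p_i$ is non-Byzantine and therefore never issues two different values --- a point your proof never addresses, because you never invoke Case~1 at all. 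This yields the lemma in one step, since every signature in $\mathcal{R}_{echo}$ is an unforgeable attestation of receipt of $m$. Your proof instead evaluates the \textit{proof-of-life} check on $\mathcal{R}_{HB}(p_i,r+\mathbb{R})$ (Function~\ref{alg: proof-of-life}) and rests on the bridge ``signature on an echo of $p_i$'s heartbeat $\Rightarrow$ receipt of $p_i$'s broadcast.'' That bridge fails exactly at the window you yourself compute: with $\Delta_r=\mathbb{R}$ the inspected interval is $[r-1,\,r+\mathbb{R}-1]$, which contains round $r-1$, a round at which $p_i$ had not yet broadcast and its heartbeat was sent as an individual, non-piggybacked message. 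A process that signed that round-$(r-1)$ heartbeat --- directly, or through an echo relayed by a process itself unaware of the broadcast --- is counted in $\mathcal{R}_{HB}(p_i,r+\mathbb{R})$ yet need never have seen $m$. Hence $|\mathcal{R}_{HB}(p_i,r+\mathbb{R})|>2f$ does not establish alternative~(i), and the claimed equivalence between the two sets collapses.

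A secondary symptom of the misrouting is your closing paragraph: the lemma makes no claim that echoes actually arrive by $r+\mathbb{R}$, so no round-trip argument via $\mathbb{R}\geq 2k+2$ or the omission degree $k$ is needed anywhere. The statement is a pure either/or consequence of a self-crash rule plus signature unforgeability; once you track $\mathcal{R}_{echo}$ and apply Case~1 (noting the non-Byzantine sender cannot be caught lying), everything is bookkeeping, and the step you flagged as ``the only non-bookkeeping part'' disappears.
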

		\begin{proof}
			$p_i$ is a non-Byzantine process and hence it does not send different messages (relative to some broadcast instance) to different processes. By Case 1 of Definition~7, $p_i$  crashes itself in any round $r_{curr}\geq r +\mathbb{R}$, if $p_i$ has received $\leq 2f$ distinct signatures (including its own) on messages echoed from its broadcast. Therefore, at round $r +\mathbb{R}$, $p_i$ either crashes itself (hence $p_i$ is no longer correct) or has at least $2f+1$ process signatures. 
		\end{proof}}
		
		\textcolor{red}{Validity is a property that concerns a correct sender, hence the case when $p_i$ does not crash itself. Following from Lemma~\ref{lemma: crashitself}, if $p_i$ does not crash itself by round $r+\mathbb{R}$, then $p_i$ can verify that at least $2f+1$ processes received its broadcast message, in which case (lines 25-27 of Algorithm 2) $p_i$ executes Function 4. Accordingly $p_i$ delivers $m$ and begins to send deliver messages as of round $r+\mathbb{R}+1$ (lines 1-4 of Function~4).} 
		
		\textcolor{red}{Following from Case 2 of Definition~7, $p_i$ remains correct (does not crash itself) at round $r+2\mathbb{R}+1$ if at least $2f+1$ processes sign its deliver message. In other words, message $m$ which is broadcast by a correct process $p_i$ (that does not crash itself within $[r,r+3\mathbb{R}]$) is eventually delivered by some correct process (in this case $p_i$ itself), proving validity.}

		\subsection{RTBRB-No Duplication}
		A process $p_j$ in our \textit{RT-ByzCast} algorithm delivers a message $m$ broadcast by $p_i$, if and if only if  $p_j$ has not previously delivered any $m$ relative to $p_i$. This is ensured by line 1 of Function~4.
		\subsection{RTBRB-Integrity}
		Assume that $p_j$ is a correct process that delivers a message $m$ relative to a correct process $p_i$. By lines 25-27 of Algorithm~2, this is possible only if $p_j$ received on some \texttt{Echo($(pi,m)...$)} message signatures relative to at least $2f+1$ processes.

		This means at least $f+1$ correct processes are transmitting some \texttt{Echo($(pi,m)...$)} messages
		In our algorithm, a process $p_k$ transmits an echo message with sender $p_i$ and message $m$ in two cases: (a)~if $p_k$ receives a \texttt{RTBRB-broadcast($(pi,m),...$)} message from $p_i$ itself (lines 8-11 of algorithm~2), or (b)~if $p_k$ receives some \texttt{Echo($(pi,m)...$)} message signed by~$p_i$ (lines 13-24 of Algorithm~2).
		
		In both cases (a) and (b), $p_k$ is sure that $p_i$ did indeed broadcast $m$.
		\subsection{RTBRB-Agreement}
		\textcolor{red}{\begin{lemma}[No two correct processes deliver different messages]\label{lemma:No two correct processes deliver different messages }
				If a correct process $p$ delivers message $m$ and a correct process $q$ delivers message $m'$, then $m=m'$.
			\end{lemma}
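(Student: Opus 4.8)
The plan is to prove this lemma by a Byzantine quorum-intersection argument resting on two structural facts about \textit{RT-ByzCast}: (i) a correct process delivers a value $v$ only after witnessing at least $2f+1$ valid signatures on the corresponding payload, and (ii) a correct process lends its own signature to at most one value per sender. Throughout, I would take $m$ and $m'$ to be two values attributed to the same broadcasting process $p_i$, as is implicit in the single-instance formulation of RTBRB used in this paper.

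First I would establish fact (i) by walking through every delivery path of Algorithm~\ref{main alg}. A correct process delivers $v$ relative to $p_i$ in exactly one of two ways: directly, when its aggregated variable $sigs$ exceeds $2f$ (lines 25--27 and the analogous branches), or upon accepting a \texttt{Deliver()} message. In the first case it has, by construction, verified at least $2f+1$ signatures on $(p_i,v)$; in the second case, by Definition~\ref{def: invalid} a \texttt{Deliver()} message is simply ignored unless its $sigs$ field already carries at least $2f+1$ correct signatures on $(p_i,v)$. Hence in either path the delivering process holds a set $S_v$ of signers of $(p_i,v)$ with $|S_v|\geq 2f+1$, of which at least $f+1$ are correct because at most $f$ processes are Byzantine.

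Next I would establish fact (ii): a correct process signs $(p_i,v)$ only while echoing $v$, and the specification dictates that when several conflicting values are heard for the same sender only the first one is echoed. The conflicting-value branch (Case~C) must be checked carefully here, since it merely triggers a delivery from \emph{already-collected} signatures and never appends the process's own signature to a second value. I expect this to be the main obstacle: one must confirm, across the intricate case analysis (A, B, C) and the \texttt{Deliver()} handling, that no correct process is ever induced to sign two distinct echo payloads $(p_i,m)$ and $(p_i,m')$ — in particular that the signatures accumulated during the deliver phase are signatures on the \texttt{Deliver()} payload and do not retroactively count as fresh echo-signatures on a different value. Once verified, each correct signer can appear in $S_v$ for at most one $v$.

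Finally I would combine the two facts. Assuming for contradiction that the correct process $p$ delivers $m$ and the correct process $q$ delivers $m'$ with $m\neq m'$, fact (i) supplies signer sets $S_m$ and $S_{m'}$, each of size at least $2f+1$. Since $n=3f+1$, counting gives $|S_m\cap S_{m'}|\geq 2(2f+1)-(3f+1)=f+1$, so the intersection contains at least one \emph{correct} process $p_x$. By fact (ii), $p_x$ signed $(p_i,v)$ for only one value $v$, contradicting its membership in both $S_m$ and $S_{m'}$ with $m\neq m'$. Therefore $m=m'$. This lemma then plugs directly into the RTBRB-Agreement proof, where it is combined with Lemma~\ref{lemma: crashitself} and the Case~2/Case~3 self-crash conditions of Definition~\ref{self-crash} to show that every non-crashed correct process delivers the same value.
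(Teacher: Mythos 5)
Your proposal is correct and follows essentially the same route as the paper's proof: delivery requires $2f+1$ signatures, a correct process never signs two different values for the same sender, and quorum intersection over $n=3f+1$ processes yields a common correct signer, giving the contradiction. Your version is simply more explicit about the two delivery paths (direct aggregation versus \texttt{Deliver()} messages validated per Definition~\ref{def: invalid}) and about checking the conflicting-value branch, which the paper's terser argument leaves implicit.
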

			\begin{proof}
				By lines (32-34 and 45-48 of Algorithm~2), any process that executes the  \textit{deliver-message()} function to deliver a message $m$ of $p_i$ should have $m$ signed by at least $2f+1$ processes. 
				In our algorithm a correct process appends its signature to $m$ (signs $m$) if it is not echoing any message other than $m$. In other words, a correct process never appends its signature to two messages $m$ and $m'$ such that $m\neq m'$. 
				Any set of $2f+1$ processes has at least one correct process in common. This means that no two correct processes deliver different messages $m$ and $m'$ (where $m\neq m'$) of $p_i$. 
			\end{proof}
			\begin{lemma}\label{lemma: all correct deliver}
				If some correct process $p_j$ delivers message $m$ then all correct processes eventually deliver some message $m'$.
			\end{lemma}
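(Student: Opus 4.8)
The plan is to show that a single correct delivery of $m$ forces a correct quorum to relay \texttt{Deliver} messages for $m$ long enough that every still-correct process is guaranteed to receive one. First I would unpack what ``$p_j$ delivers $m$'' entails. By the \texttt{Echo} handler of Algorithm~\ref{main alg} (cases A--C), $p_j$ held at least $2f+1$ signatures on $m$ when it invoked \textit{deliver-message} (Function~\ref{deliver functions}); from that point it stops echoing and transmits \texttt{Deliver$_{p_j}$} messages carrying those signatures at the start of every round for $2\mathbb{R}$ rounds. The key leverage is Case~2 of Definition~\ref{self-crash}: since $p_j$ is \emph{correct} and therefore does not self-crash, within $\mathbb{R}$ rounds of delivering it must have achieved $|\mathcal{R}_{deliver}(p_i,r')|>2f$, i.e.\ at least $2f+1$ processes signed its \texttt{Deliver} message, of which at least $f+1$ are correct.

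Next I would argue that these $f+1$ correct signers have themselves delivered $m$. A correct process signs a \texttt{Deliver} only after processing a \emph{valid} one, and by Definition~\ref{def: invalid} a valid \texttt{Deliver} carries at least $2f+1$ correct signatures on $m$; hence by lines 49--58 of Algorithm~\ref{main alg} each such process delivers $m$ and in turn relays \texttt{Deliver} messages for $2\mathbb{R}$ rounds. Because every deliverer retransmits for $2\mathbb{R}$ rounds while the rule that $2f+1$ signatures be gathered in the \emph{first} $\mathbb{R}$ rounds is enforced, there is a full window of $\mathbb{R}$ consecutive rounds during which at least $f+1$ correct processes send \emph{only} \texttt{Deliver} messages for $m$ (the invariant noted in the footnote to the \textit{deliver-message} requirement).

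Finally I would dispatch an arbitrary correct process $p_k$ that has not yet delivered. By Case~3 of Definition~\ref{self-crash}, in every window of $\mathbb{R}$ rounds $p_k$ either hears from at least $2f+1$ distinct processes or self-crashes; in the latter case $p_k$ is no longer correct and there is nothing to prove. In the former case I apply a quorum-intersection argument in the $n=3f+1$ system: the set of $\geq 2f+1$ processes $p_k$ hears from during the distinguished $\mathbb{R}$-window and the set of $\geq f+1$ correct \texttt{Deliver}-relayers intersect in at least $(2f+1)+(f+1)-(3f+1)=1$ process, which is necessarily a correct process sending only \texttt{Deliver} messages for $m$. Hence $p_k$ receives a valid \texttt{Deliver}$(p_i,m,\cdot)$ and delivers $m$ (lines 49--58). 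Combined with the previous lemma that no two correct processes deliver different messages, the delivered value $m'$ must equal $m$, which establishes \textit{RTBRB-Agreement}.

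I expect the main obstacle to be the \textbf{timing alignment}: making rigorous that the $\mathbb{R}$-window in which $p_k$ is guaranteed (by Case~3) to hear from $2f+1$ processes actually overlaps the $\mathbb{R}$-window in which the $f+1$ correct deliverers transmit exclusively \texttt{Deliver} messages for $m$. This is precisely what the $2\mathbb{R}$-round retransmission discipline together with the ``collect $2f+1$ in the first $\mathbb{R}$ rounds'' rule is engineered to guarantee, and the careful accounting of these two nested windows --- rather than the quorum intersection itself --- is where the argument must be most delicate.
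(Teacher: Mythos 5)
Your proposal is correct, and its skeleton matches the paper's proof: both arguments use Case~2 of Definition~\ref{self-crash} to conclude that, since the delivering process $p_j$ does not self-crash, at least $2f+1$ processes (hence at least $f+1$ correct ones) sign its \texttt{Deliver} message within $\mathbb{R}$ rounds of delivery; both observe that each correct signer has itself delivered and thereafter retransmits only \texttt{Deliver} messages, yielding a full $\mathbb{R}$-window in which at least $f+1$ correct processes send nothing but \texttt{Deliver}$(m)$; and both then dispatch an arbitrary not-yet-delivered correct process $p_k$ by ``either it self-crashes or it receives a \texttt{Deliver} and delivers.'' The one genuine divergence is in that final dispatch: the paper invokes Case~4 of Definition~\ref{self-crash} with a complement-counting argument --- since the $f+1$ correct deliverers no longer echo anyone else's messages, $p_k$ can collect at most $n-(f+1)=2f$ signatures on its own messages during the window, so it self-crashes unless it received a \texttt{Deliver} --- whereas you invoke Case~3 and intersect the set of at least $2f+1$ processes $p_k$ must hear from with the set of $f+1$ correct deliverers, using $(2f+1)+(f+1)>n=3f+1$ to extract a correct deliverer among the heard-from processes. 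The two closings are symmetric counting arguments of equal strength, both resting on exactly the timing alignment you flag (the $2\mathbb{R}$ retransmission discipline plus the ``collect $2f+1$ within the first $\mathbb{R}$ rounds'' rule guaranteeing the common window); your Case-3 route is arguably slightly more direct, since it forces $p_k$ to actually receive a \texttt{Deliver}$(m)$ rather than arguing that $p_k$ cannot sustain the quorum of echoes needed to stay alive, but it implicitly requires that ``hearing from'' a deliverer during the window conveys one of its current (\texttt{Deliver}) messages rather than a replayed stale one --- a point the round-stamping of messages settles, and which the paper's Case-4 variant sidesteps.
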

			\begin{proof}
				Let $p_j$ be the only process that delivers message $m$, say at round $r$. From Function 4, $p_j$ starts to send only \texttt{Deliver$_{p_j}$($(m)$)} messages from round $r+1$ till round $r+1+2\mathbb{R}$. At the end of round $r+1+\mathbb{R}$ process $p_j$ crashes itself if it does not receive \texttt{Deliver$_{...}$($(m)$)} signed by at least $2f+1$ processes (Case 2 of Definition~7).  Since $p_j$ is considered to be correct (hence does not crash itself), then this means that at least $2f+1$ processes have received \texttt{Deliver$_{...}$($(m)$)} by round $r+1+\mathbb{R}$. By lines 51-60 of Algorithm~2 any correct process that receives \texttt{Deliver$_{...}$($(m)$)} delivers (has delivered) some message. this means that from round $r+1+\mathbb{R}$ till round $r+1+2\mathbb{R}$ (included) at least $f+1$ correct processes send only \texttt{Deliver$_{p_j}$($(...)$)} messages. Thus, all other correct processes can gather at most $2f$ signatures on any messages sent during $[r+1+\mathbb{R},r+1+2\mathbb{R}]$ (in which case they crash themselves by Case 4 of Definition~7), unless they receive some \texttt{Deliver$_{p_j}$($(m)$)}. If they receive \texttt{Deliver$_{p_j}$($(m)$)} they deliver message $m$. Hence all correct processes (non-Byzantine processes that do not crash themselves) deliver some message.
			\end{proof}}
			
			The proof follows then from Lemma~\ref{lemma:No two correct processes deliver different messages } and Lemma~\ref{lemma: all correct deliver}.
			\subsection{RTBRB-Timeliness}
			
			\textcolor{red}{Consider that a non-Byzantine process $p_i$ broadcasts message $m$ at round $r$. and Assume that $p_i$ is correct, that is $p_i$ does not crash itself in $[r,r+\Delta]$.
				\begin{lemma}\label{lemma:f+1 send echo}
					If a correct process $p_i$ broadcasts a message $m$ at round $r$, then at least $f+1$ correct processes receive $m$ by round $r+\mathbb{R}$.
				\end{lemma}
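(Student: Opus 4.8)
The plan is to leverage Lemma~\ref{lemma: crashitself}, which already establishes the key dichotomy at round $r+\mathbb{R}$: either $p_i$ can verify that at least $2f+1$ processes received its broadcast, or $p_i$ crashes itself. Since this lemma (like timeliness and validity) concerns a \emph{correct} sender, $p_i$ by hypothesis does not crash itself in $[r,r+\Delta]$; hence by Lemma~\ref{lemma: crashitself} we are necessarily in the first case, and by round $r+\mathbb{R}$ process $p_i$ holds a set of at least $2f+1$ distinct signatures on echoes of $m$ (the set accumulated through \texttt{aggregate-sig} into $\mathcal{R}_{echo}(p_i,\cdot,m)$). So the whole argument reduces to turning ``$2f+1$ signers'' into ``$f+1$ correct receivers''.

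First I would argue that each such signature certifies genuine receipt. Because links are authenticated and signatures are unforgeable (Section~\ref{Sysmodel}), a signature $\Phi_{p_x}$ on $m$ cannot be produced by any process other than $p_x$, and in \textit{RT-ByzCast} a non-Byzantine process appends its own signature to $m$ only \emph{after} having received and verified $m$. Thus the $2f+1$ signatures originate from $2f+1$ distinct processes, each of which actually received $m$. Since at most $f$ processes are Byzantine, at least $(2f+1)-f=f+1$ of these signers are non-Byzantine, yielding $f+1$ non-Byzantine processes that received $m$ by round $r+\mathbb{R}$.

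The main obstacle I anticipate is reconciling ``non-Byzantine that received $m$'' with ``correct'' as used in the statement, since a correct process is one that additionally never self-crashes during the execution interval. Here I would note that $p_i$ itself is correct and trivially has $m$ (it is the sender), contributing one process; and that for the remaining signers the lemma only asserts \emph{reception by round $r+\mathbb{R}$}, a fact settled at that round and unaffected by any later transition to the \textit{Dead-State}. A Byzantine signer cannot dilute the count, because unforgeability rules out forged honest-process signatures, and the bound $n=3f+1$ guarantees the $f+1$ surplus of honest signers. The only subtlety requiring care is robustness to a malicious relay $p_k$ that replays or re-aggregates signatures; this is handled because \texttt{aggregate-sig} verifies each individual signature before inclusion, so duplicates and forgeries are discarded and the $2f+1$ count remains a count of distinct, genuine signers. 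Closing this gap between the quorum size and the honest-receiver count is the crux of the proof.
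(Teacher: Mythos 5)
Your proposal is correct and follows essentially the same route as the paper's own proof: invoke Lemma~\ref{lemma: crashitself}, use correctness of $p_i$ to rule out the self-crash branch, and then count that at most $f$ of the $2f+1$ signers can be Byzantine, leaving $f+1$ correct receivers. The paper states this in two sentences without elaborating on signature unforgeability or the ``non-Byzantine versus correct'' distinction, so your additional care on those points only strengthens the same argument.
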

				\begin{proof}
					Following from Lemma~\ref{lemma: crashitself}, at round $r+\mathbb{R}$, $p_i$ can verify that at least $2f+1$ processes have received its broadcast, otherwise $p_i$ crashes itself. In other words, if $p_i$ is still alive at round $r+\mathbb{R}$, then this means that at least $2f+1$ processes received $p_i$'s broadcast message out of which at least $f+1$ are correct. 
				\end{proof}
				Since process $p_i$ at round $r+\mathbb{R}$ collected $2f+1$ signatures, it delivers $m$ as indicated by Function~4 (lines 30-35 of Algorithm~2).
				\begin{lemma}\label{lemma: f+1 deliver}
					If a correct process $p_i$ broadcasts a message $m$ at round $r$, then at least $f+1$ correct processes deliver $m$ by round $r+2\mathbb{R}$.
				\end{lemma}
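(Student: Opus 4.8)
The plan is to build directly on Lemma~\ref{lemma:f+1 send echo}, which already hands us a set $\mathcal{S}$ of at least $f+1$ correct processes that have received $m$ by round $r+\mathbb{R}$, and to show that every member of $\mathcal{S}$ is compelled to deliver $m$ within $\mathbb{R}$ additional rounds. First I would fix an arbitrary correct $q\in\mathcal{S}$ and let $r_q\leq r+\mathbb{R}$ denote the round at which $q$ first receives $m$ (directly from $p_i$ or via an \texttt{Echo()}). On first receipt, $q$ leaves the bare proof-of-life, begins echoing $(p_i,\cdot,m)$, and starts accumulating distinct signatures on $m$ in its $\mathcal{R}_{echo}$ structure.

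The heart of the argument is a contrapositive use of Case~1 of Definition~\ref{self-crash}. Because $p_i$ is correct it disseminates a single value $m$, so no correct process ever discovers a lie attributable to $p_i$; the lie clause of Case~1 is therefore never satisfied for $q$, and the only thing that can keep $q$ alive is gathering more than $2f$ signatures on $m$ in time. If at round $r_q+\mathbb{R}$ the process $q$ still had $|\mathcal{R}_{echo}(p_i,\cdot,m)|\leq 2f$, Case~1 would fire and $q$ would transition to the \textit{Dead-State}, contradicting its correctness. Hence $q$ must reach at least $2f+1$ signatures on $m$ at some round no later than $r_q+\mathbb{R}$, which by the branch of Algorithm~\ref{main alg} taken once more than $2f$ signatures are aggregated invokes \textit{deliver-message}($p_i,m,\cdot$) and causes $q$ to deliver $m$. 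Since $r_q\leq r+\mathbb{R}$, this delivery occurs by round $r+2\mathbb{R}$. Letting $q$ range over all of $\mathcal{S}$ yields at least $f+1$ correct processes delivering by round $r+2\mathbb{R}$, and combining with integrity and Lemma~\ref{lemma:No two correct processes deliver different messages } ensures the value each of them delivers is exactly $m$.

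The step I expect to be the main obstacle is pinning down the $\mathbb{R}$-round deadline of Case~1 precisely: I must confirm that the witnessing round index $r'$ appearing in $\mathcal{R}_{echo}(p_i,r',m)$ for $q$ is bounded by its first-receipt round $r_q$, so that the trigger $r_{cur}\geq r'+\mathbb{R}$ expires no later than $r_q+\mathbb{R}\leq r+2\mathbb{R}$ and does not drift later through the per-round re-initialization of the echo set performed in the \texttt{Echo()} handler. I would close this by observing that the cross-round aggregation in Function~\ref{aggregate functions} retains previously observed signatures in $\texttt{Msg}[p_i][\cdot]$, so re-initializing $\mathcal{R}_{echo}$ never discards accumulated evidence and the earliest witnessing $r'$ is at most $r_q$; this prevents the crash deadline from being postponed and closes the timing argument.
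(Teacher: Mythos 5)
Your proof is correct (at the paper's own level of rigor), but it follows a genuinely different route from the paper's. The paper never actually uses Lemma~\ref{lemma:f+1 send echo} in this proof: instead it works through the sender's \emph{Deliver} phase. Since $p_i$ collects $2f+1$ echo signatures by round $r+\mathbb{R}$ (Lemma~\ref{lemma: crashitself}), it executes \textit{deliver-message} and sends only \texttt{Deliver(m)} messages in $[r+1+\mathbb{R}, r+1+3\mathbb{R}]$; Case~2 of Definition~\ref{self-crash} then forces $2f+1$ countersignatures on these \texttt{Deliver} messages by round $r+1+2\mathbb{R}$ (else $p_i$ crashes, contradicting its correctness), and since any correct process that receives a \texttt{Deliver} message delivers immediately (lines 49--58 of Algorithm~\ref{main alg}), at least $f+1$ correct signers have delivered $m$ by round $r+2\mathbb{R}$. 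You instead apply Case~1 of Definition~\ref{self-crash} to each of the $f+1$ correct \emph{receivers} given by Lemma~\ref{lemma:f+1 send echo}, using the fact that a correct sender makes the lie clause vacuous, so survival forces each receiver to reach $2f+1$ echo signatures---and hence to deliver---within $\mathbb{R}$ rounds of first receipt. Both arguments are sound and both supply what Lemma~6 needs downstream ($f+1$ correct processes sending only \texttt{Deliver} messages during $[r+1+2\mathbb{R}, r+1+3\mathbb{R}]$). What each buys: the paper's Case~2 pattern is the same one used for Agreement and so works even when the broadcaster is Byzantine, whereas your Case~1/no-lie argument is intrinsically tied to a correct sender (sufficient here); conversely, your route is more self-contained, actually exploits Lemma~\ref{lemma:f+1 send echo}, and avoids relying on Function~\ref{deliver functions}'s dissemination details. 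One timing caution: per lines 8--11 of Algorithm~\ref{main alg}, a receiver's echo set is indexed by its receipt round \emph{plus one}, so the Case~1 deadline for $q$ is $r_q+1+\mathbb{R}$, not $r_q+\mathbb{R}$ as you assume; your closing remark about aggregation retaining signatures addresses re-initialization but not this shift. To land exactly at $r+2\mathbb{R}$ you should instead observe that the members of $\mathcal{S}$ in fact received $m$ by round $r+\mathbb{R}-1$, because their signed echoes needed one round of transit to be counted by $p_i$ at round $r+\mathbb{R}$---the same one-round-transit observation the paper's own proof uses when it converts ``collected by $r+1+2\mathbb{R}$'' into ``signed by $r+2\mathbb{R}$.''
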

				\begin{proof}
					By lines (2-4) of Function~4 $p_i$ sends only \texttt{Deliver(m)} messages in all rounds $\in [r+1+\mathbb{R}, r+1+3\mathbb{R}]$. For $p_i$ not to crash itself (Case 2 of Definition~7) $p_i$ needs to collect $2f+1$ signatures on its \texttt{Deliver(m)} message by round $r+1+2\mathbb{R}$. This means if $p_i$ is does not crash itself by end of round $r+1+2\mathbb{R}$, then at least $f+1$ correct processes have received and signed $p_i's$ \texttt{Deliver(m)} message by round $r+2\mathbb{R}$. By lines (51-59) of Algorithm~2 all such correct processes deliver $m$ at lastest by round $r+2\mathbb{R}$.
				\end{proof}}
				
				\textcolor{red}{Following from Lemma~\ref{lemma: f+1 deliver}, there exists at most $f$ other correct processes at round $r+2\mathbb{R}$ that may have not delivered $m$.}
				
				\textcolor{red}{\begin{lemma}\label{lemma: all deliver}
						If a correct process $p_i$ broadcasts a message $m$ at round $r$, then all correct processes that have not delivered $m$ by round $r+2\mathbb{R}$ deliver $m$ by round $r+3\mathbb{R}$.
					\end{lemma}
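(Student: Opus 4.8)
The plan is to leverage Lemma~\ref{lemma: f+1 deliver} together with Case~3 of Definition~\ref{self-crash} and a quorum-intersection argument, so that each of the $\le f$ stragglers is forced either to crash itself or to receive a \texttt{Deliver} message carrying $m$. By Lemma~\ref{lemma: f+1 deliver}, at least $f+1$ correct processes have delivered $m$ by round $r+2\mathbb{R}$; hence at most $f$ correct processes, a set I call $U$, have not delivered $m$ by then. It suffices to show that every $p_k\in U$ that stays correct through round $r+3\mathbb{R}$ (i.e.\ does not crash itself) delivers $m$ by round $r+3\mathbb{R}$.

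The key structural fact I would establish is that the interval $[r+2\mathbb{R}+1,\,r+3\mathbb{R}]$, a window of exactly $\mathbb{R}$ rounds, is one during which at least $f+1$ correct processes transmit \emph{only} valid \texttt{Deliver}$(m)$ messages; this is precisely what the $2\mathbb{R}$-long deliver phase of Function~\ref{deliver functions} is engineered to provide (cf.\ its footnote). Concretely, a correct process delivering at round $t_j$ stops echoing and emits \texttt{Deliver}$(m)$ throughout $[t_j+1,\,t_j+1+2\mathbb{R}]$. Using that the broadcaster $p_i$ itself delivers at round $r+\mathbb{R}$ (Lemma~\ref{lemma:f+1 send echo}) and, as argued below, that the relevant $f+1$ relayers all deliver within $[r+\mathbb{R},\,r+2\mathbb{R}]$, the spread $\max_j t_j-\min_j t_j$ is at most $\mathbb{R}$, so the common send interval has length at least $2\mathbb{R}-\mathbb{R}=\mathbb{R}$ and contains $[r+2\mathbb{R}+1,\,r+3\mathbb{R}]$. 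Throughout this window those $f+1$ correct processes relay only \texttt{Deliver}$(m)$, and each such message is valid since it carries at least $2f+1$ signatures (Definition~\ref{def: invalid}).

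Next I would apply Case~3 of Definition~\ref{self-crash}: since $p_k\in U$ does not crash itself, it must hear from at least $2f+1$ distinct processes during the $\mathbb{R}$-window $[r+2\mathbb{R}+1,\,r+3\mathbb{R}]$. As $(2f+1)+(f+1)=3f+2>n=3f+1$, this set of $\ge 2f+1$ processes necessarily intersects the $\ge f+1$ correct relayers, so $p_k$ receives a valid \texttt{Deliver}$(m)$ within the window; because $p_i$ is correct this value is $m$ (and no conflicting value can accompany $\ge 2f+1$ signatures, by Lemma~\ref{lemma:No two correct processes deliver different messages }). By the \texttt{Deliver} handler of Algorithm~\ref{main alg} (lines~49--58), $p_k$ then delivers $m$, and does so no later than round $r+3\mathbb{R}$. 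Combined with Lemma~\ref{lemma: f+1 deliver}, every correct process delivers $m$ within $[r,\,r+3\mathbb{R}]$, which yields \textit{RTBRB-Timeliness} with $\Delta=3\mathbb{R}$ rounds.

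I expect the main obstacle to be making the structural fact of the second paragraph airtight, in particular the delicate case of a correct process that delivers $m$ unusually early (before round $r+\mathbb{R}-1$) and would therefore stop relaying \texttt{Deliver}$(m)$ before $r+3\mathbb{R}$, shrinking the count below $f+1$. I would close this gap by noting that the \texttt{Deliver} handler of Algorithm~\ref{main alg} re-arms the $2\mathbb{R}$-round send phase on \emph{every} fresh \texttt{Deliver}$(m)$ receipt, and that $p_i$ keeps transmitting \texttt{Deliver}$(m)$ until round $r+3\mathbb{R}+1$; together with Case~2 of Definition~\ref{self-crash}, which forces at least $f+1$ correct processes to sign $p_i$'s \texttt{Deliver}$(m)$ by round $r+2\mathbb{R}$, this keeps a full complement of $f+1$ correct relayers active across the entire target window, restoring the count used in the intersection argument.
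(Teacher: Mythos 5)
Your proof is correct and follows essentially the same route as the paper's: at least $f+1$ correct processes have delivered $m$ by round $r+2\mathbb{R}$ and transmit only \texttt{Deliver}$(m)$ messages throughout the window $[r+2\mathbb{R}+1,\,r+3\mathbb{R}]$, so Case~3 of Definition~\ref{self-crash} together with a counting argument forces every surviving straggler to receive a valid \texttt{Deliver}$(m)$ and deliver it by round $r+3\mathbb{R}$. You are in fact somewhat more careful than the paper on two points it passes over quickly: the explicit $(2f+1)+(f+1)>n$ intersection count, and the early-deliverer edge case, which you correctly close via the re-arming of the $2\mathbb{R}$-round send phase in the \texttt{Deliver} handler of Algorithm~\ref{main alg}.
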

					\begin{proof}
						Any process that delivers $m$ at some round $r'$, sends only \texttt{Deliver(m)} messages in all rounds $\in [r',r'+2\mathbb{R}]$ (lines 51-59 of Algorithm~2 and lines 2-4 of Function~4). Hence following from Lemma~\ref{lemma: crashitself} and Lemma~\ref{lemma: f+1 deliver}, in all rounds $[r+1+2\mathbb{R},r+1+3\mathbb{R}]$, at least $f+1$ correct processes only send \texttt{Deliver(m)} messages. Hence the rest of the correct processes (at most $f$) that have not delivered $m$ yet would (i) crash themselves at $r+1+3\mathbb{R}$ by Case 3 of Definition~7, or (ii) deliver $m$ since they heard a message from some of the $f+1$ correct processes that already delivered $m$.
					\end{proof}}

					
					As such, correct processes (that do not crash themselves) deliver message $m$ from $p_i$ by round $r+3\mathbb{R}$. Hence, \textit{RTBRB-Timeliness} is satisfied with~$\Delta=~3\mathbb{R}$.

					\section{Correlated Link Losses}\label{loss correlation}
					We now study the effect of correlated losses/omissions. To this end, we assume that links now may exhibit time-correlated losses/omissions, which might result in bursts.

					We specifically simulate links following the Gilbert-Elliot (GE) model~\cite{GE,Elliot}. The GE model, consisting of two states (see Fig.~\ref{Figure: model}), is a simple non-trivial finite state Markov chain (FSMC)~\cite{FSM}, established to capture well message loss behavior~\cite{fadingchannels,fsmwireless1,fsmwireless2}. In fact, the GE model has been empirically verified, by a large body of work~\cite{fadingchannels,transmissionschemes,BettingonGEchannels,Elliot,GE1,GE2}, as a good approximation of message losses in real-life communication scenarios. The GE model, for instance, has been used to model losses in wireless media IEEE 802.11~\cite{GE2}, wired power line networks~\cite{ABB} and other hybrid networks~\cite{packetloss,packetloss1}. The two states of the GE model (Fig.~\ref{Figure: model}), noted by \textit{good} and \textit{bad}, can for example abstract the following: the communication link between a pair of processes occupies the bad state when the packet success-rate drops below a certain ``unacceptable'' threshold or when communication delays become slower than expected (e.g., $>d$); the link occupies the good state otherwise. At any round, the link can be in exactly one of the two states: the good state or the bad state. A link in the good state delivers messages (if any is sent) in a reliably and timely fashion; however, if the link is in the bad state then messages are either lost or delayed ($>d$). In other words messages sent on a link, which is in the bad state, are omitted. Every link \textbf{\textit{transitions}} across rounds, i.e., at the beginning of every round a link moves to its new state, which can be the same state it existed in or the other state.
					
					\begin{figure}[H]
						\centering
						\includegraphics[scale=0.31]{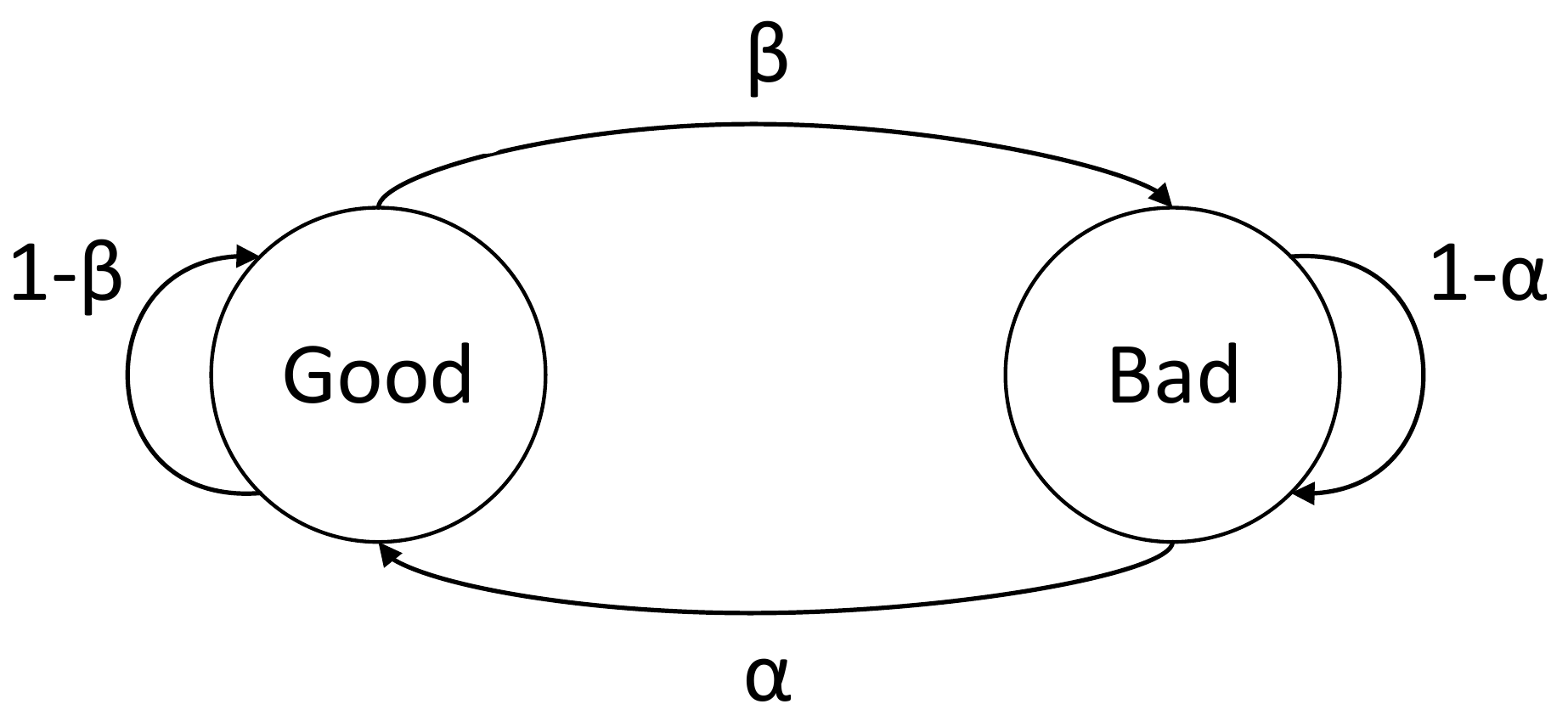}
						\vspace{-.4cm}
						\caption{A time-varying communication link under the 2-state GE model}\label{Figure: model}
						\vspace{-.2cm}
					\end{figure}
					
					For example, given the link is in the good state at some round, it will remain in the same state at the next round with probability $1-\beta$ and will move to the bad state with probability $\beta$. Similarly if the link state is bad at some round, it will remain bad at the next round with probability ($1-\alpha$) and will shift to good with probability~$\alpha$. Since links transition according to the transition probabilities (i.e., $\alpha$ and $\beta$), different links can exist in different states in a given round, even if all links start from the same state and are governed by the same $\alpha$ and $\beta$.
					
					We run our simulations again given the same $|\mathcal{C}|~\in~\{5,10,20,30,40,50,100,200\}$ processes but now considering the various transition probabilities $(\alpha,\beta) \in\{(0.8,10^{-6});(0.7,10^{-5});(0.6,10^{-3});(0.5,10^{-2});(0.4,0.1);\\(0.3,0.4);(0.2,0.6);(0.1,0.7)\}$. Namely, we assume that all links have the same transition probabilities and start from the good state. Every link then transitions to a new state at the beginning of every round. We select values of $\alpha$ and $\beta$ satisfying the positive correlation condition (a.k.a. bursty condition)~\cite{Guha}, which requires that $(1-\beta)>\alpha$. In fact, we show in Figure~\ref{busrt} the probability of having bursts of various lengths under our selected values of transition probabilities.
					
					\begin{figure}[H] 
						\begin{center}
							\includegraphics[scale=0.21]{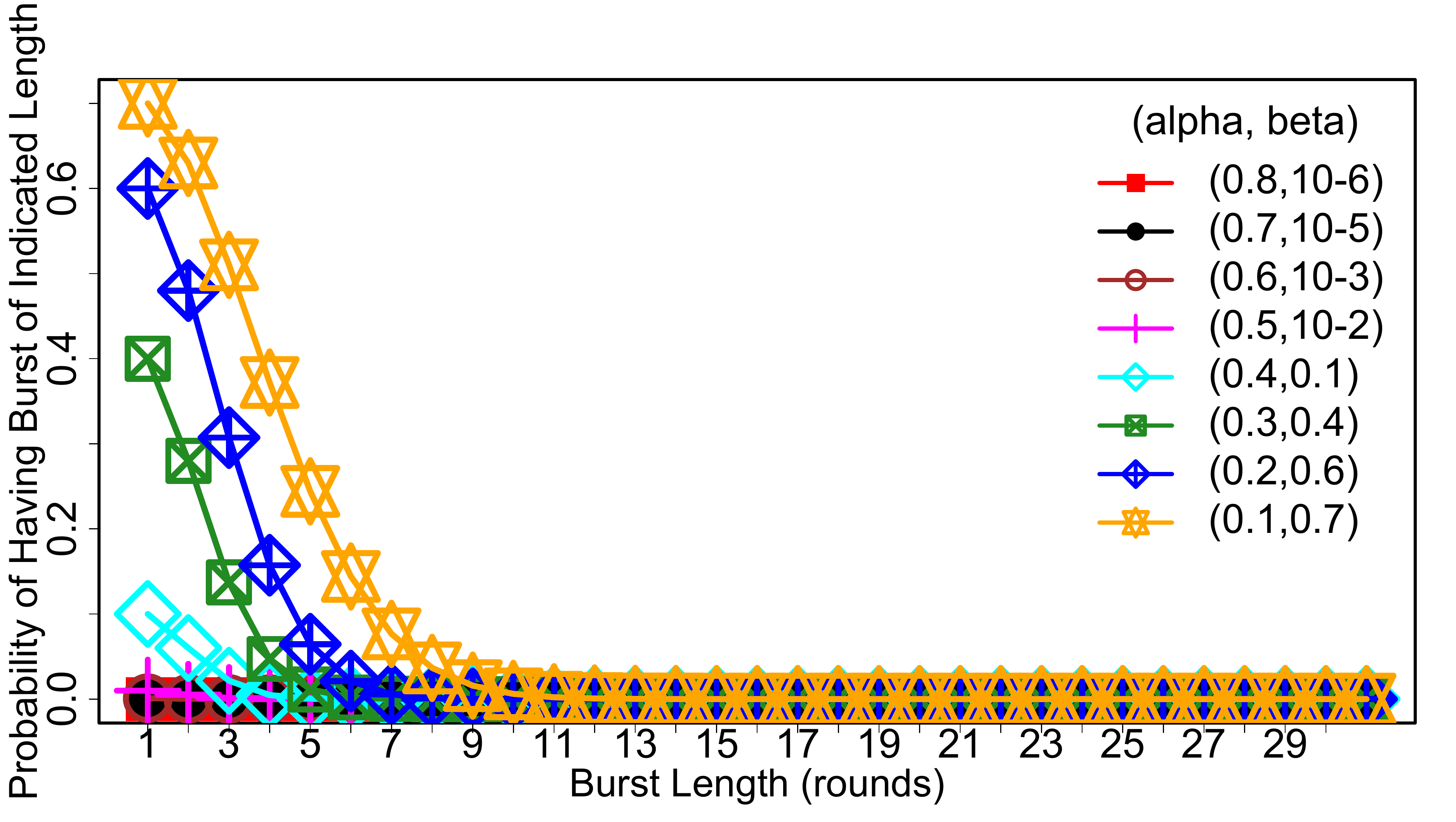}
							\begin{center}
								\vspace*{-10pt}
								\caption{The probability of bursts of variable lengths under different transition probabilities ($\alpha$ and $\beta$).}\label{busrt}
								\vspace*{-15pt}
							\end{center} 
						\end{center}
					\end{figure}
					For a given value of $|\mathcal{C}|$ and $(\alpha,\beta)$, we invoke a broadcast at one of the processes and record, after $\mathbb{R}$ rounds of communication, if any process does not receive $|\mathcal{C}|$ signatures on the value being broadcast. We repeat such an instance $10^6$~times.
					
					\begin{figure}[htbp] 
						\begin{center}
							\includegraphics[scale=0.2]{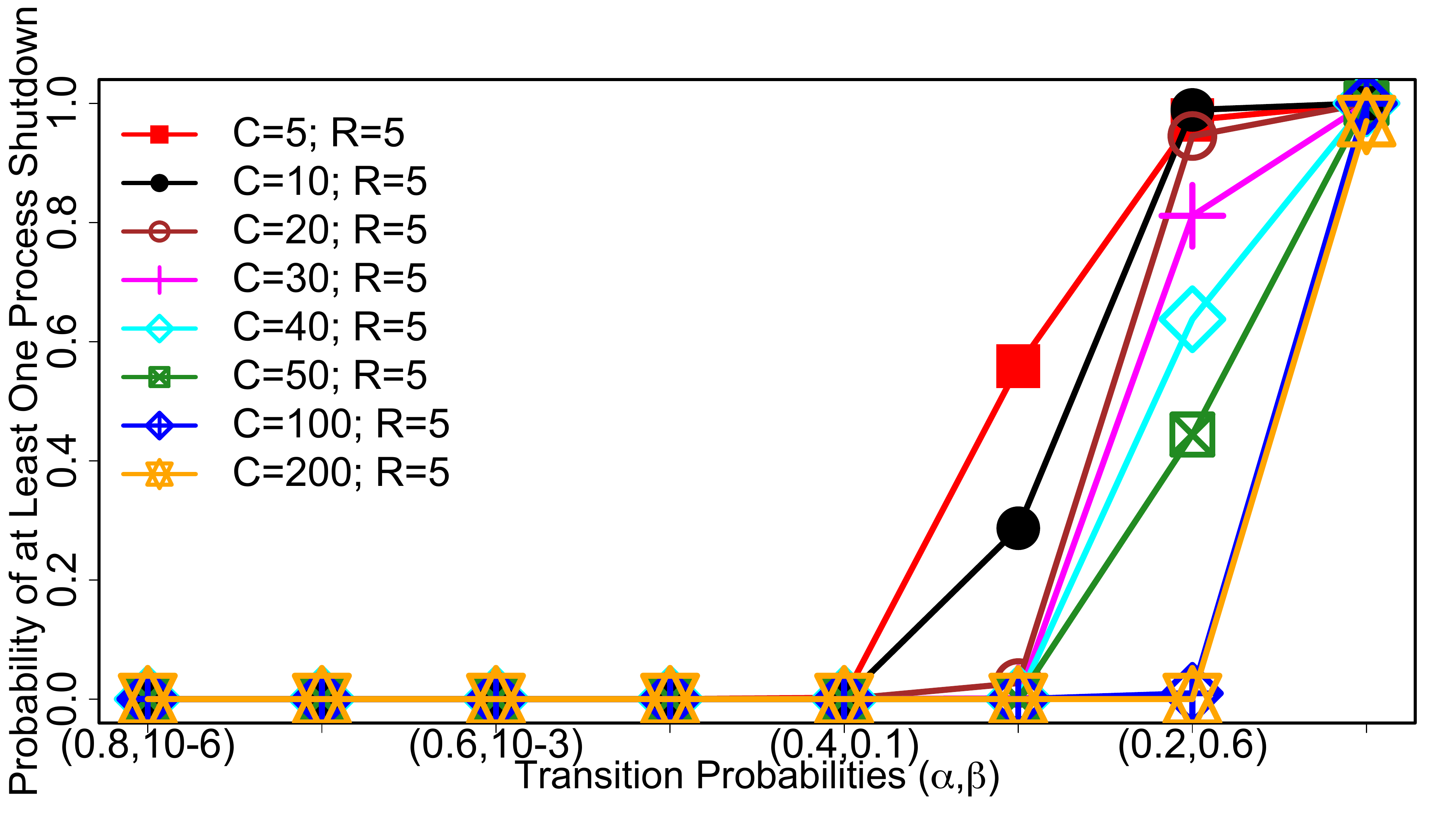}
							\begin{center}
								\vspace*{-10pt}
								\caption{The probability of a correct process crashing itself in a system after $10$ communication rounds ($\mathbb{R}=5$) under correlated link losses.}\label{pshut5GE}
								\vspace*{-15pt}
							\end{center} 
						\end{center}
					\end{figure}
					\begin{figure}[htbp] 
						\begin{center}
							\includegraphics[scale=0.2]{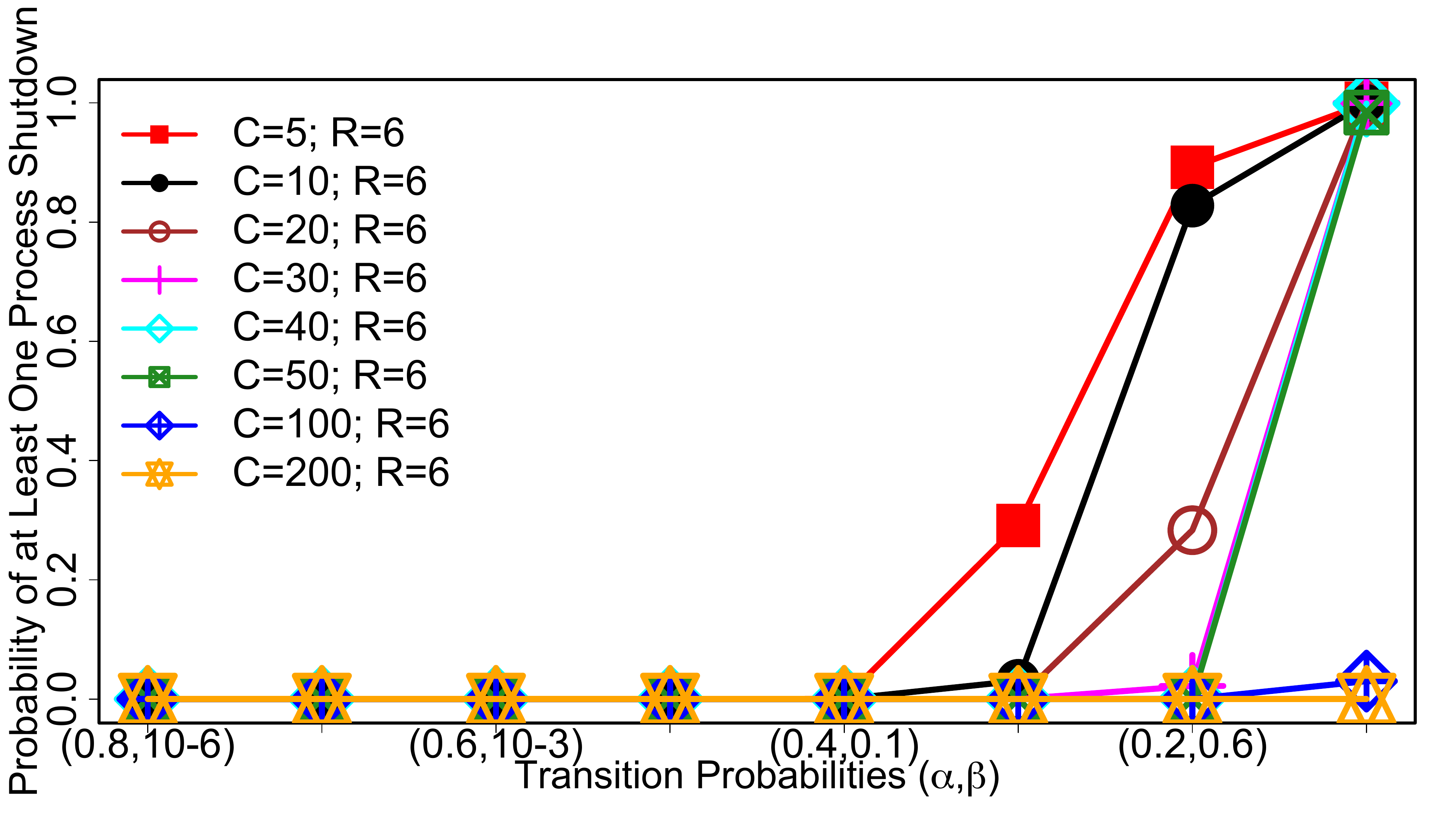}
							\begin{center}
								\vspace*{-10pt}
								\caption{The probability of a correct process crashing itself in a system after $10$ communication rounds ($\mathbb{R}=6$) under correlated link losses.}\label{pshut6GE}
								\vspace*{-15pt}
							\end{center} 
						\end{center}
					\end{figure}
					\begin{figure}[htbp] 
						\begin{center}
							\includegraphics[scale=0.2]{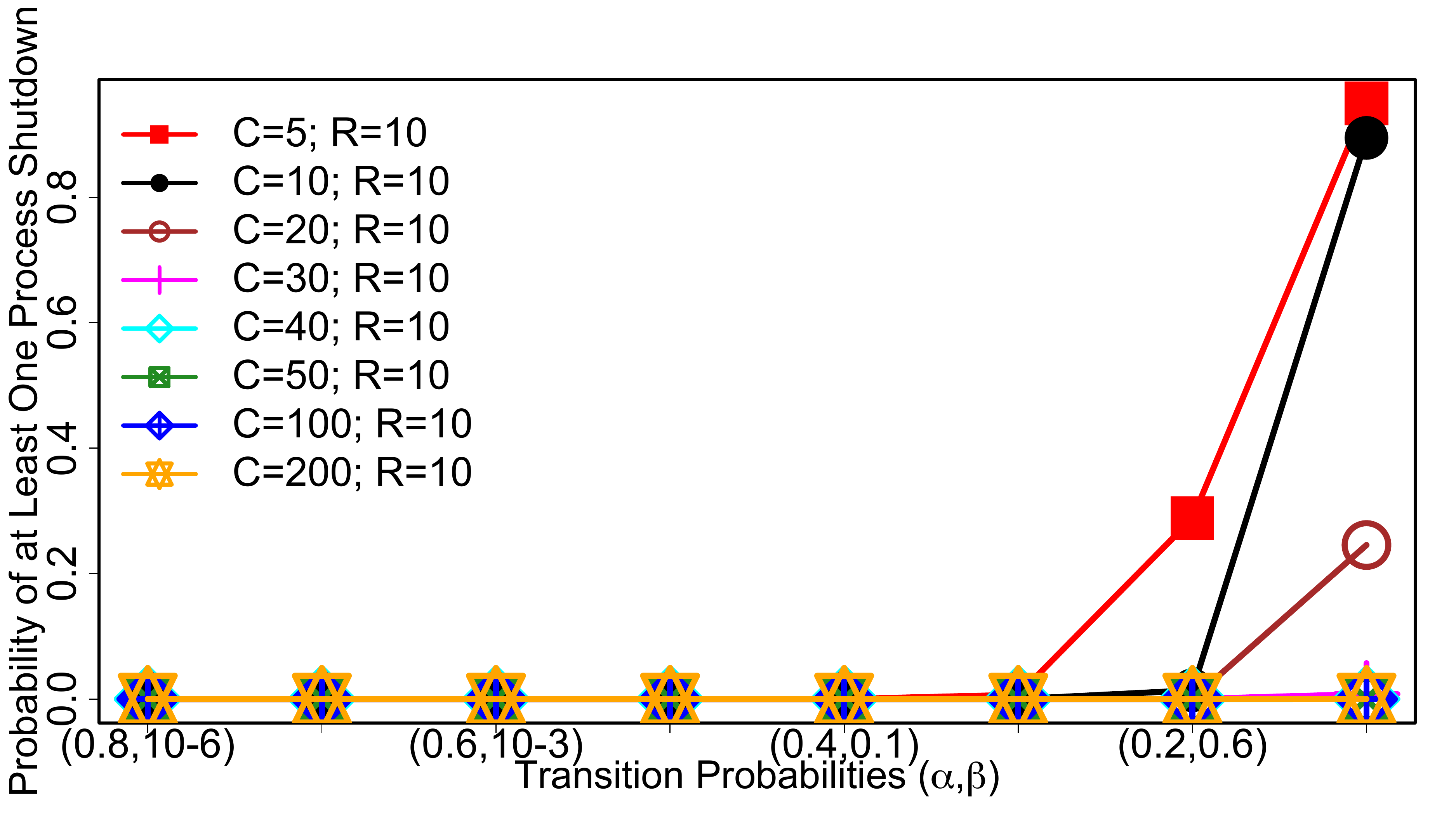}
							\begin{center}
								\vspace*{-10pt}
								\caption{The probability of a correct process crashing itself in a system after $10$ communication rounds ($\mathbb{R}=10$) under correlated link losses.}\label{pshut10GE}
								\vspace*{-15pt}
							\end{center} 
						\end{center}
					\end{figure}
					\begin{figure}[htbp] 
						\begin{center}
							\includegraphics[scale=0.2]{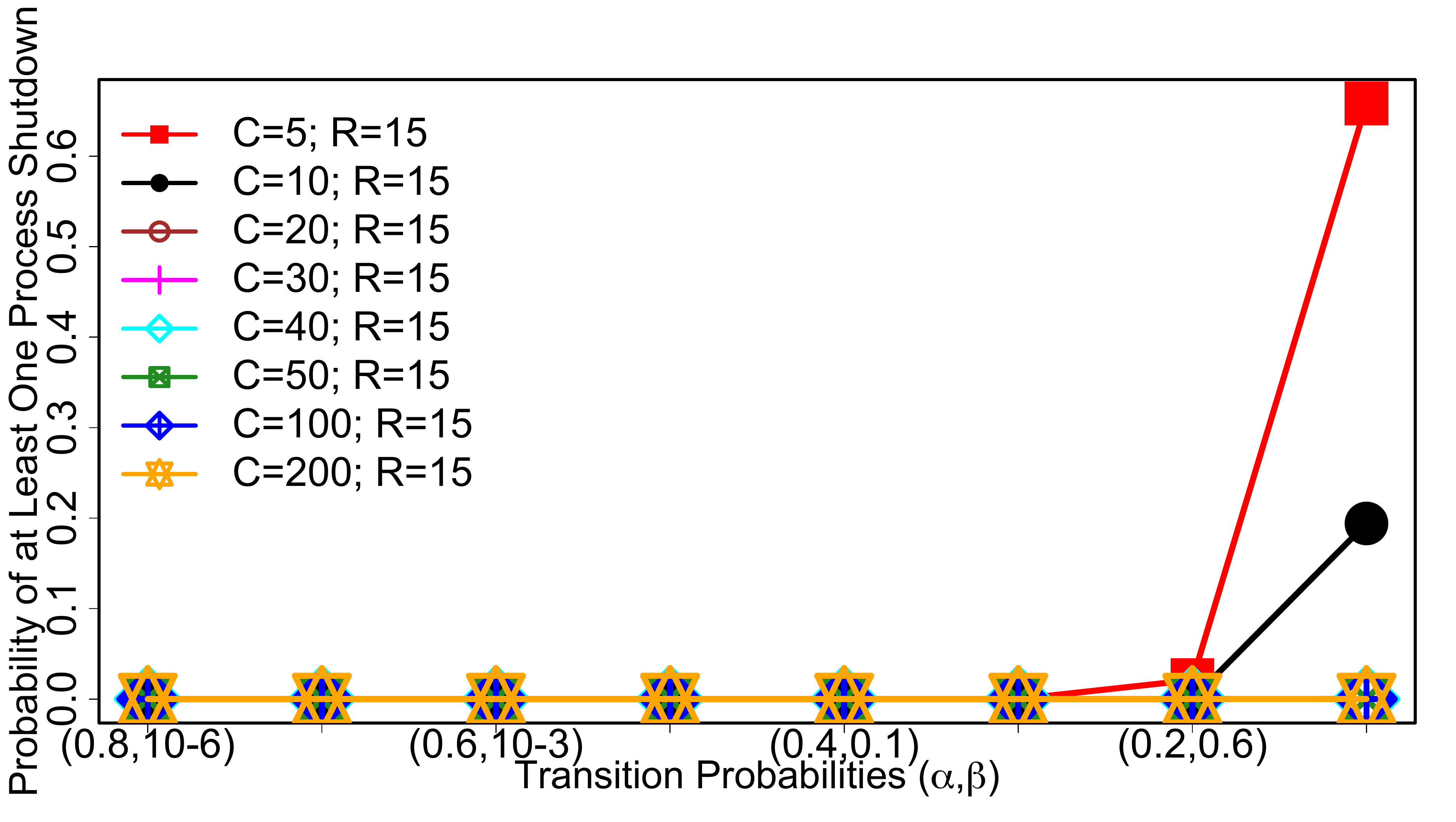}
							\begin{center}
								\vspace*{-10pt}
								\caption{The probability of a correct process crashing itself in a system after $15$ communication rounds ($\mathbb{R}=15$) under correlated link losses.}\label{pshut15GE}
								\vspace*{-15pt}
							\end{center} 
						\end{center}
					\end{figure}
					We~report our results showing: $$\resizebox{0.95\hsize}{!}{$\frac{\text{num. of instances in which some correct process crashes itself}}{10^6}$},$$ for $\mathbb{R} \in~\{5,6,10,15\}$ rounds respectively. We select these values of $\mathbb{R}$ to show that, despite loss correlation, a well chosen value of the slack window still allows our \textit{RT-ByzCast} algorithm to provide the intended reliability.
					
					Our results in Figure~\ref{pshut5GE}, Figure~\ref{pshut6GE}, Figure~\ref{pshut10GE}, and Figure~\ref{pshut15GE} show similar conclusions as the case of independent message loss. Namely, the probability of a correct process crashing itself decreases (1) as the number of non-Byzantine processes in the system increases, and (2) as the number of communication rounds (i.e., the value of $\mathbb{R}$) increases. 
					
					\section{Reliability Evaluation}~\label{sec: Reliability Eval.}
					In this section we illustrate the probability of a correct process shutting down. Unlike the graphs in Section~VI.A, we fix $|C|$ and vary the window size ($\mathbb{R}$). Our results in Figure~\ref{pshutc5}, Figure~\ref{pshutc20}, Figure~\ref{pshutc50}, and Figure~\ref{pshutc200} show that increasing the number of communication rounds (i.e., the value of $\mathbb{R}$) increases the reliability of our \textit{RT-ByzCast} algorithm for any number of correct processes. In fact, with $\mathbb{R}=10$ the probability of a process crashing itself becomes negligible even with up to $60\%$ losses/omissions rate.
					\begin{figure}[htbp] 
						\begin{center}
							\includegraphics[scale=0.2]{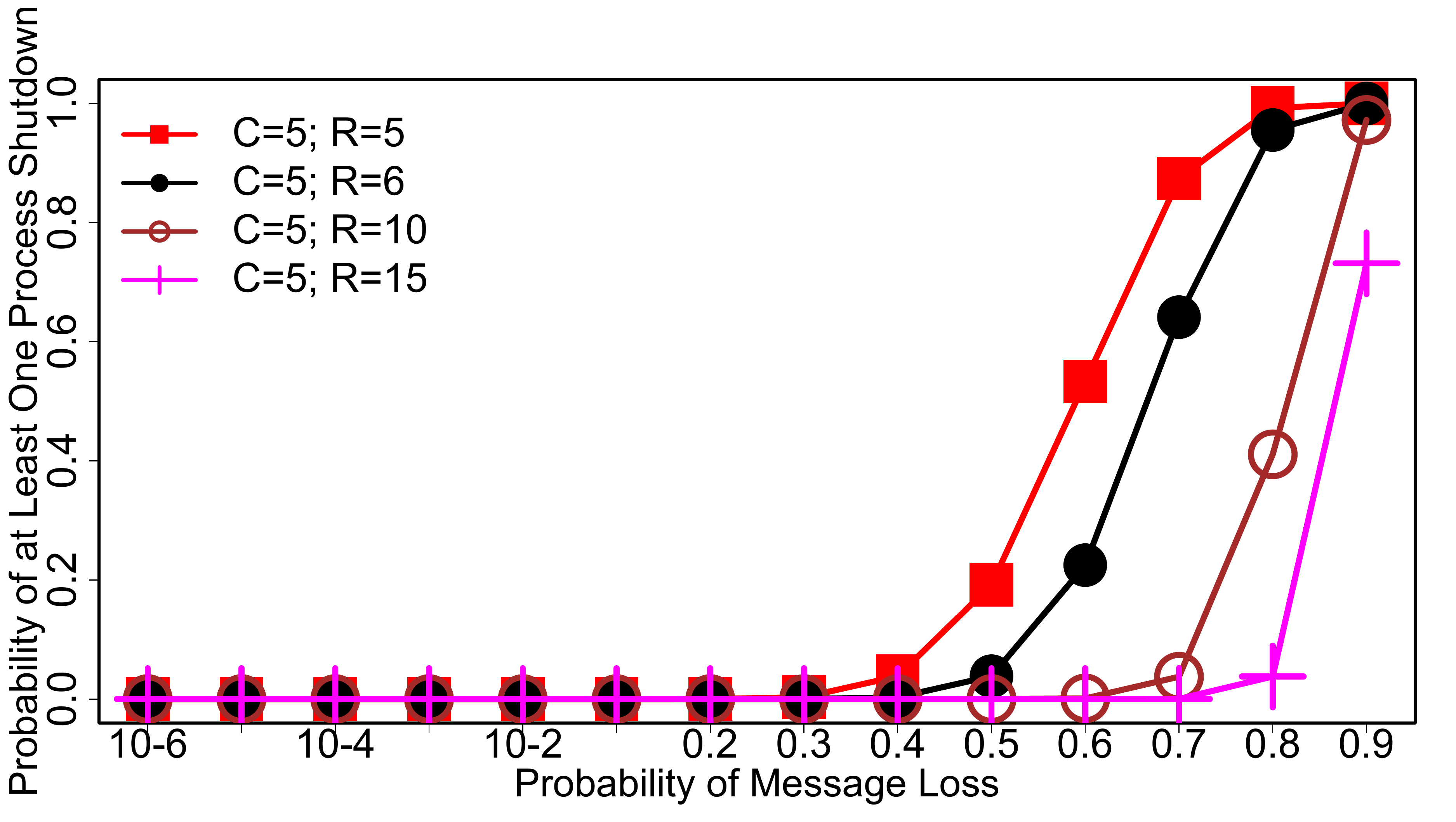}
							\begin{center}
								\vspace*{-10pt}
								\caption{The probability of a correct process crashing itself in a system of $|C|=5$ w.r.t. a varying number communication rounds.}\label{pshutc5}
								\vspace*{-15pt}
							\end{center} 
						\end{center}
					\end{figure}
					\begin{figure}[htbp] 
						\begin{center}
							\includegraphics[scale=0.2]{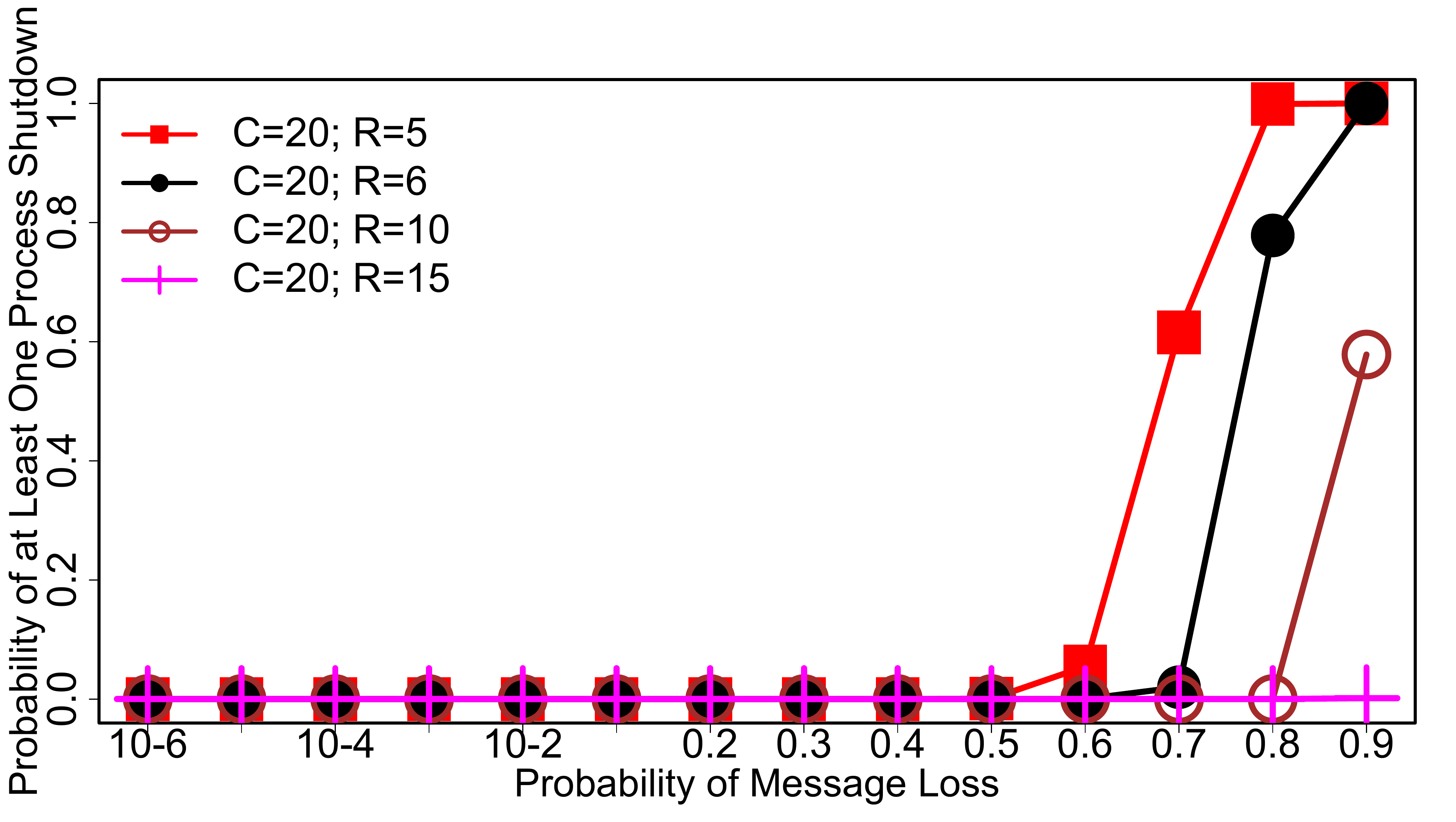}
							\begin{center}
								\vspace*{-10pt}
								\caption{The probability of a correct process crashing itself in a system of $|C|=20$ w.r.t. a varying number communication rounds.}\label{pshutc20}
								\vspace*{-15pt}
							\end{center} 
						\end{center}
					\end{figure}
					\begin{figure}[htbp] 
						\begin{center}
							\includegraphics[scale=0.2]{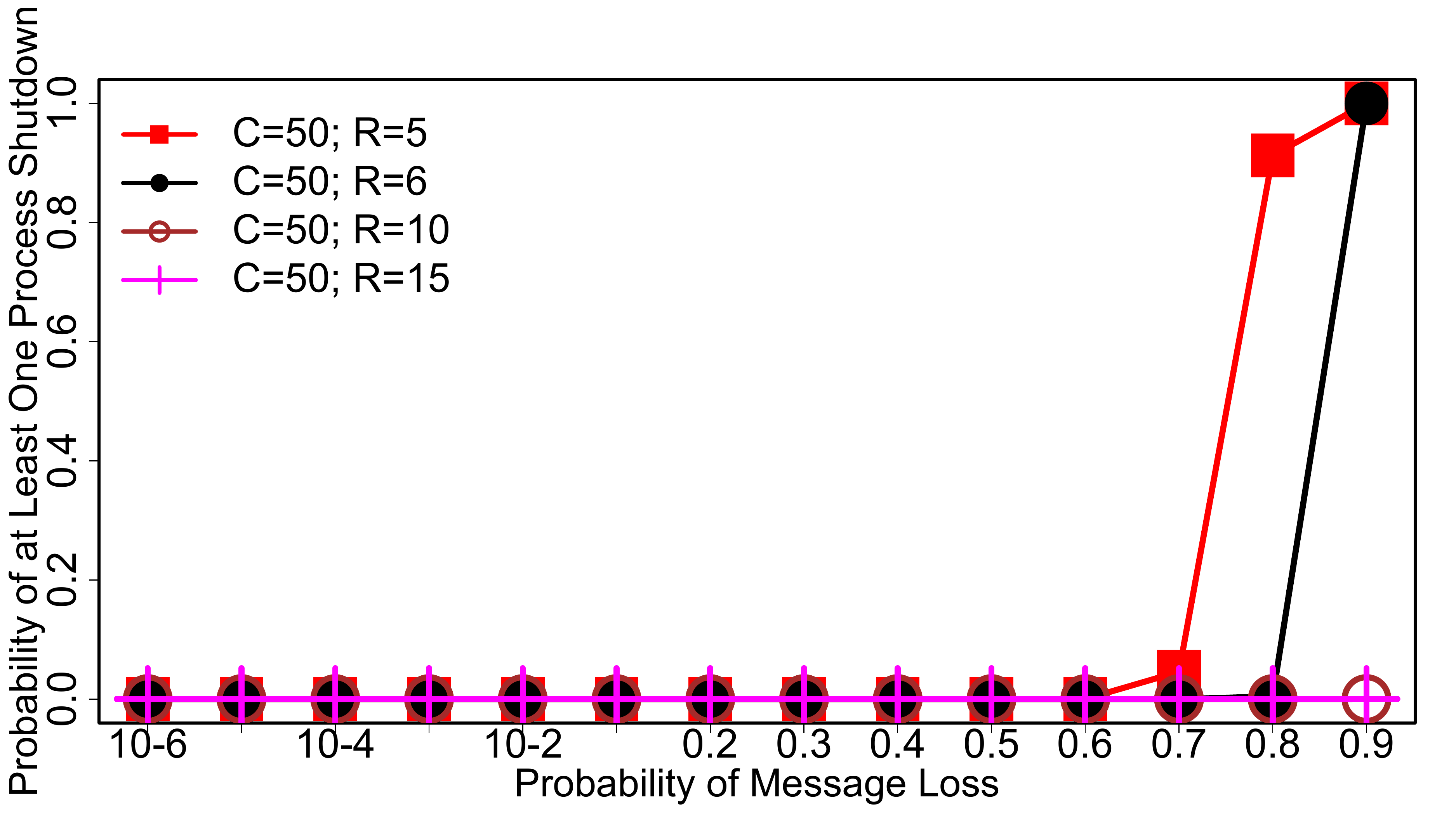}
							\begin{center}
								\vspace*{-10pt}
								\caption{The probability of a correct process crashing itself in a system of $|C|=50$ w.r.t. a varying number communication rounds..}\label{pshutc50}
								\vspace*{-15pt}
							\end{center} 
						\end{center}
					\end{figure}
					\begin{figure}[htbp] 
						\begin{center}
							\includegraphics[scale=0.2]{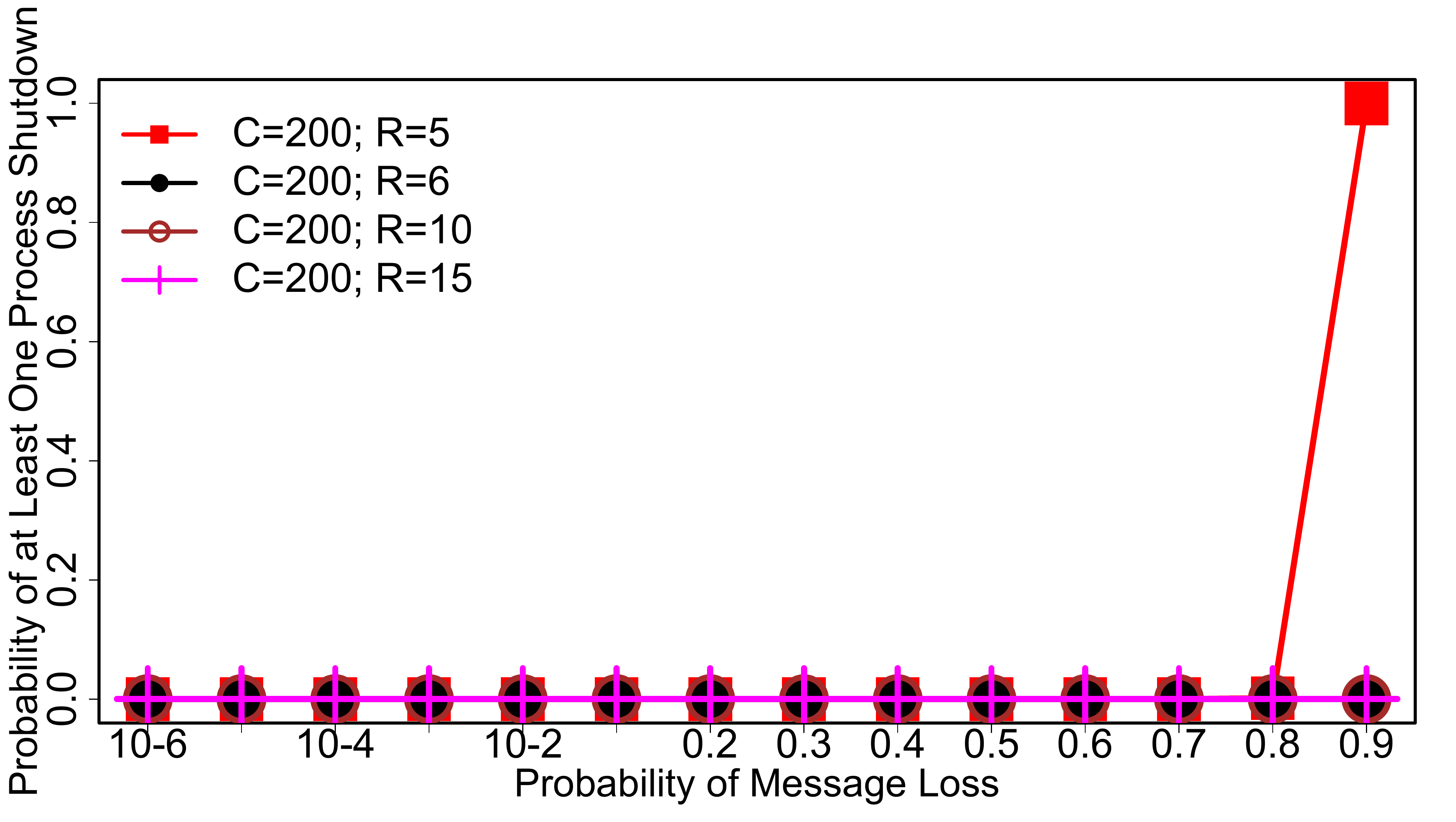}
							\begin{center}
								\vspace*{-10pt}
								\caption{The probability of a correct process crashing itself in a system of $|C|=200$ w.r.t. a varying number communication rounds..}\label{pshutc200}
								\vspace*{-15pt}
							\end{center} 
						\end{center}
					\end{figure}

					\section{\textcolor{red}{\textit{RT-ByzCast} Bandwidth Consumption}}\label{sec: RT-ByzCast Bandwidth Consumption}
					
					\textcolor{red}{Finally, we evaluated the protocol's peak bandwidth consumption, both in terms of reception and in emission. Figure~\ref{fig:bdw128} and Figure~\ref{fig:bdw1M} respectively present the peak bandwidth consumptions during the broadcast of 128 bits and 1 Mbits messages, depending on the message loss probability. We observed that correct nodes are using more bandwidth when there are Byzantine nodes in the system. We therefore present the bandwidth consumptions when the system contains the maximum number of Byzantine nodes tolerated. We use plain lines to represent the nodes' bandwidth usage in reception, and dashed lines for their bandwidth usage in emission. For a given system size, the difference between those two lines is due to message losses, and is accentuated by the presence of Byzantine nodes.}
					
					\textcolor{red}{From these figures, we could make two observations. First, as expected, the nodes' bandwidth consumption increases with the system's size and with the message's size. Second, the bandwidth consumptions of nodes tend to decrease when the message loss probabilities increase. 
						For example, Figure~\ref{fig:bdw128} shows that broadcasting a 128 bits message in a system of 100 nodes consume around 1.9 Mbits in emission, and less than 1.1 Mbits in reception. Comparatively, broadcasting a larger 1 Mbits message in a system of 100 nodes consume less than 100 Mbits in emission, and less than 75 Mbits in reception. Those numbers were obtained without any optimizations, and we leave for future works the improvement of \textit{RTByzCast}'s bandwidth consumption. Possible ideas include reducing the nodes' fanout, decreasing the bandwidth's footprint of digital signatures, and the avoidance of redundant messages.}
					
					\begin{figure}[htbp] 
						\begin{center}
							\includegraphics[width=0.9\columnwidth]{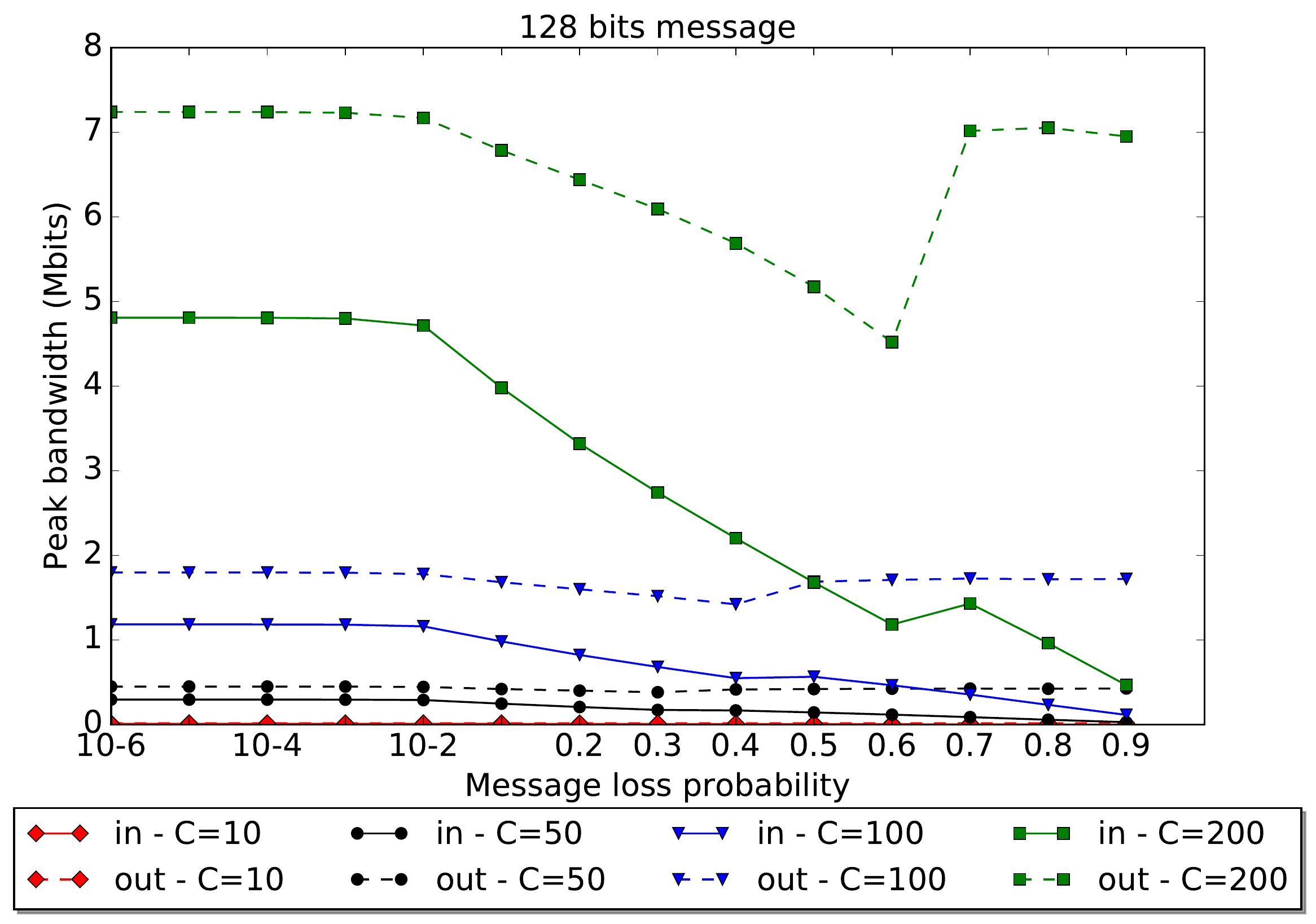}
							\begin{center}
								\vspace*{-10pt}
								\caption{\textcolor{red}{Total broadcast delay for a 128 bits-message depending on the number of nodes and the probability of message losses.}}\label{fig:bdw128}
								\vspace*{-15pt}
							\end{center} 
						\end{center}
					\end{figure}
					
					\begin{figure}[htbp] 
						\begin{center}
							\includegraphics[width=0.9\columnwidth]{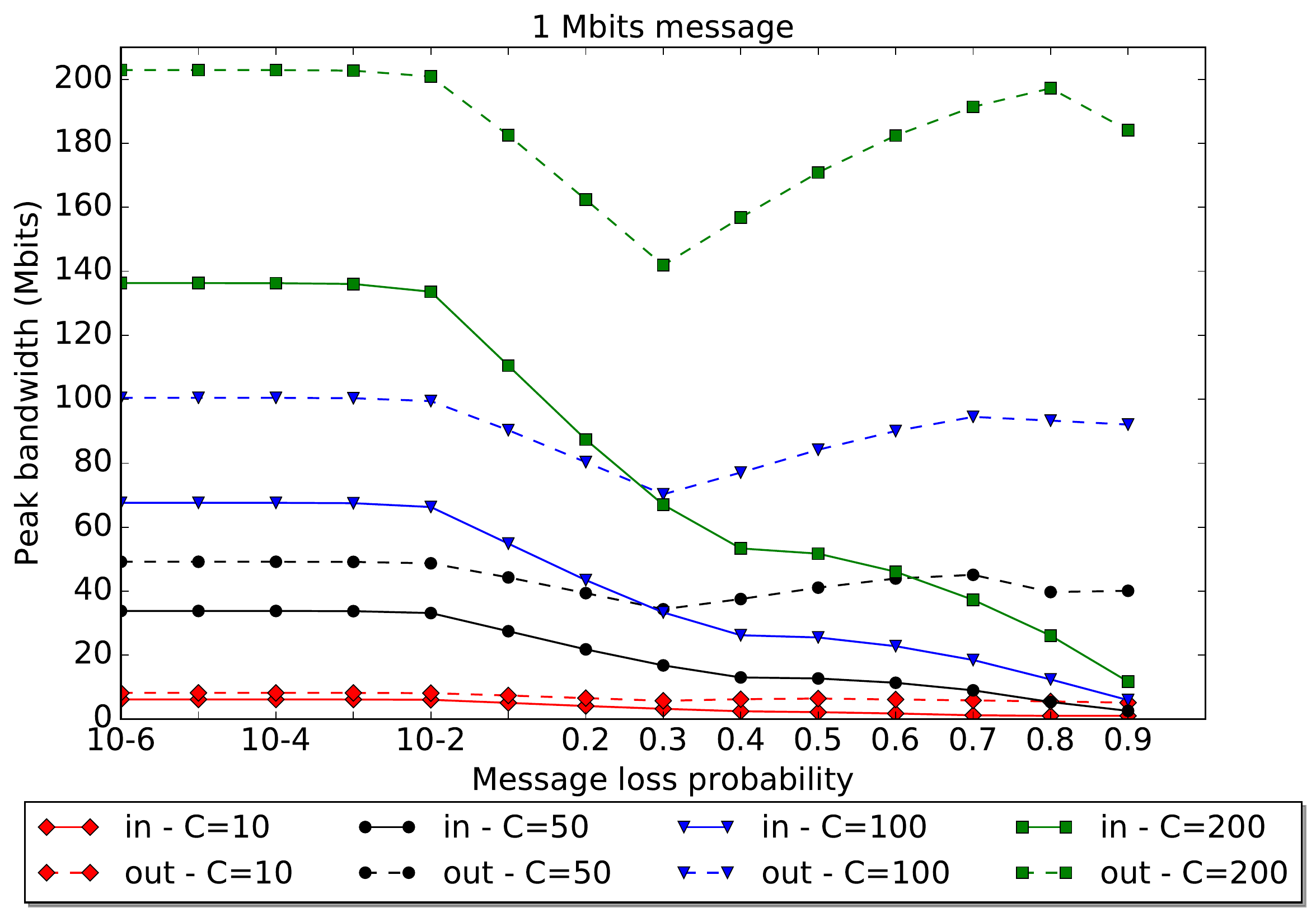}
							\begin{center}
								\vspace*{-10pt}
								\caption{\textcolor{red}{Total broadcast delay for a 1 Mbits-message depending on the number of nodes and the probability of message losses.}}\label{fig:bdw1M}
								\vspace*{-15pt}
							\end{center} 
						\end{center}
					\end{figure}

\end{document}